\providecommand{\U}[1]{\protect\rule{.1in}{.1in}}
\newtheorem{theorem}{Theorem}
\newtheorem{definition}[theorem]{Definition}
\newtheorem{remark}[theorem]{Remark}
\newenvironment{proof}[1][Proof]{\noindent \textbf{#1.} }{\  \rule{0.5em}{0.5em}}
\begin{document}

\title{\textbf{Pseudo minimum phi-divergence estimator for multinomial logistic
regression with complex sample design}}
\author{Elena Castilla, Nirian Mart\'{\i}n and Leandro Pardo\\{\small Department of Statistics and Operations Research, Complutense
University of Madrid, Spain}}
\date{\today}
\maketitle

\begin{abstract}
This article develops the theoretical framework needed to study the
multinomial logistic regression model for complex sample design with pseudo
minimum phi-divergence estimators. Through a numerical example and simulation
study new estimators are proposed for the parameter of the logistic regression
model with overdispersed multinomial distributions for the response variables,
the pseudo minimum Cressie-Read divergence estimators, as well as new
estimators for the intra-cluster correlation coefficient. The results show
that the Binder's method for the intra-cluster correlation coefficient
exhibits an excellent performance when the pseudo minimum Cressie-Read
divergence estimator, with $\lambda=\frac{2}{3}$, is plugged.

\end{abstract}

\bigskip\bigskip

\noindent\underline{\textbf{AMS 2001 Subject Classification}}\textbf{: }62F12, 62J12

\noindent\underline{\textbf{Keywords and phrases}}\textbf{:} Design effect;
Cluster sampling; Pseudo-likelihood; Sample weight.

\section{Introduction\label{sec1}}

Multinomial logistic regression is frequently the method of choice when the
response is a qualitative variable, with two or more mutually exclusive
unordered response categories, and interest is in the relationship between the
response variables with respect to their corresponding explanatory variables
or covariates. The $k$ explanatory variables of interest, $\boldsymbol{x}%
=\left(  x_{1},...,x_{k}\right)  ^{T}$, may be binary, categorical, ordinal or
continuos. The multinomial logistic regression procedure is based on assuming
that the $(d+1)$-dimensional response random variable $\boldsymbol{Y}%
=(Y_{1},...,Y_{d+1})^{T}$ is a multinomial random variable of a unique
observation with parameters $\pi_{1}\left(  \boldsymbol{\beta}\right)
,...,\pi_{d+1}\left(  \boldsymbol{\beta}\right)  $ being
\begin{equation}
\pi_{r}\left(  \boldsymbol{\beta}\right)  =\Pr\left(  Y_{r}=1|\boldsymbol{x}%
\right)  =\left\{
\begin{array}
[c]{ll}%
\dfrac{\exp\{\boldsymbol{x}^{T}\boldsymbol{\beta}_{r}\}}{1+%
%TCIMACRO{\tsum _{s=1}^{d}}%
%BeginExpansion
{\textstyle\sum_{s=1}^{d}}
%EndExpansion
\exp\{\boldsymbol{x}^{T}\boldsymbol{\beta}_{s}\}}, & r=1,...,d\\
\dfrac{1}{1+%
%TCIMACRO{\tsum _{s=1}^{d}}%
%BeginExpansion
{\textstyle\sum_{s=1}^{d}}
%EndExpansion
\exp\{\boldsymbol{x}^{T}\boldsymbol{\beta}_{s}\}}, & r=d+1
\end{array}
\right.  , \label{1.1}%
\end{equation}
with $\boldsymbol{\beta}=(\boldsymbol{\beta}_{1}^{T},...,\boldsymbol{\beta
}_{d}^{T})^{T}$, where $\boldsymbol{\beta}_{r}=\left(  \beta_{1r}%
,...,\beta_{kr}\right)  ^{T}$ is a $k$-dimensional real value vector of
unknown parameters for $r=1,...,d$. An observation of $\boldsymbol{Y}$,
$\boldsymbol{y}$, is any $(d+1)$-dimensional vector with $d$ zeros and a
unique one (classification vector), which is observed together with
explanatory variables $\boldsymbol{x}$. In order to make inferences about
$\boldsymbol{\beta}_{r}$, $r=1,...,d$, a random sample $\left(  \boldsymbol{Y}%
_{i},\boldsymbol{x}_{i}\right)  $, $i=1,...,n$ is considered, where
$\boldsymbol{Y}_{i}=(Y_{i1},...,Y_{i,d+1})^{T}$ and $\boldsymbol{x}%
_{i}=\left(  x_{i1},...,x_{ik}\right)  ^{T}$. For more details about
multinomial logistic regression models see for instance Agresti (2002),
Amemiya (1981), Anderson (1972, 1982, 1984), Engel (1988), Lesaffre (1986),
Lesaffre and Albert (1986, 1989), Liu and Agresti (2005), Mantel (1966), Theil
(1969), McCullagh (1980). In that papers the inferences about the parameters
are carried out on the basis of the maximum likelihood estimator in the case
of the estimation and on the likelihood ratio test and Wald tests in the case
of testing. In Gupta et al. (2006a, 2006b, 2007, 2008) new procedures for
making statistical inference in the multinomial logistic regression were
presented based on phi-divergences measures.

When the data have been collected not under the assumptions of simple random
sampling but in a complex survey, with stratification, clustering, or unequal
selection probabilities, for example, the estimation of the multinomial
logistic regression coefficients and their estimated variances that ignore
these features may be misleading. Discussions of multinomial logistic
regression in sample surveys can be seen in Binder (1983), Roberts, Rao and
Kumar (1987), Skinner, Holt and Smith (1989), Morel (1989), Lehtonen and
Pahkinen (1995) and Morel and Neerchal (2012).

In this paper, we consider the multinomial logistic regression model with
complex survey and we shall introduce for this model the pseudo minimum
phi-divergence estimator for the regressions coefficients, deriving its
asymptotic distribution. As a particular case, we shall obtain the asymptotic
distribution of the pseudo maximum likelihood estimator. In Section
\ref{sec2}, we present some notation as well as some results in relation to
the maximum likelihood estimator. Section \ref{sec3} is devoted to introduce
the pseudo minimum phi-divergence estimator as an extension of the maximum
likelihood estimator as well as its asymptotic distribution. In Section
\ref{sec4} and \ref{sec5}, the numerical example and simulation study are
swown. Finally, in Section \ref{sec6}, some concluding remarks are given.

\section{Multinomial logistic regression model for complex sample
design\label{sec2}}

We shall assume that the population under consideration is divided into $H$
distinct strata. In each stratum $h$, the sample is consisted of $n_{h}$
clusters, $h=1,...,H$, and each cluster is comprised of $m_{hi}$ units,
$h=1,...,H,$ $i=1,...,n_{h}$. Let
\begin{equation}
\boldsymbol{y}_{hij}=\left(  y_{hij1},....,y_{hij,d+1}\right)  ^{T},\text{
}h=1,...,H,\text{ }i=1,...,n_{h},\text{ }j=1,...,m_{hi} \label{2.1}%
\end{equation}
be the $(d+1)$-dimensional classification vectors, with $y_{hijr}$ $=1$ and
$y_{hijs}$ $=0$ for $s\in\{1,...,d+1\}-\{r\}$ if the $j$-th unit selected from
the $i$-th cluster of the $h$-th stratum fall in the $r$-th category. Let
$\boldsymbol{x}_{hij}=\left(  x_{hij1},....,x_{hijk}\right)  ^{T}$ be a
$k$-dimensional vector of explanatory variables associated with the $i$-th
cluster in the $h$-th stratum for the $j$-th individual. We shall also denote
by $w_{hi}$ the sampling weight from the $i$-th cluster of the $h$-th stratum.
For each $i$, $h$ and $j$, the expectation of the $r$-th element of
$\boldsymbol{Y}_{hij}=(Y_{hij1},...,Y_{hij,d+1})^{T}$, with a realization
$\boldsymbol{y}_{hij}$, is determined by the multinomial logistic regression
relationship%
\begin{equation}
\pi_{hijr}\left(  \boldsymbol{\beta}\right)  =\left\{
\begin{array}
[c]{ll}%
\dfrac{\exp\{\boldsymbol{x}_{hij}^{T}\boldsymbol{\beta}_{r}\}}{1+%
%TCIMACRO{\tsum _{s=1}^{d}}%
%BeginExpansion
{\textstyle\sum_{s=1}^{d}}
%EndExpansion
\exp\{\boldsymbol{x}_{hij}^{T}\boldsymbol{\beta}_{s}\}}, & r=1,...,d\\
\dfrac{1}{1+%
%TCIMACRO{\tsum _{s=1}^{d}}%
%BeginExpansion
{\textstyle\sum_{s=1}^{d}}
%EndExpansion
\exp\{\boldsymbol{x}_{hij}^{T}\boldsymbol{\beta}_{s}\}}, & r=d+1
\end{array}
\right.  , \label{2.1.0}%
\end{equation}
with $\boldsymbol{\beta}_{r}=\left(  \beta_{1r},...,\beta_{kr}\right)  ^{T}\in%
%TCIMACRO{\U{211d} }%
%BeginExpansion
\mathbb{R}
%EndExpansion
^{k}$, $r=1,...,d$. We shall denote by $\boldsymbol{\pi}_{hij}\left(
\boldsymbol{\beta}\right)  $ the $(d+1)$-dimensional probability vector
\begin{equation}
\boldsymbol{\pi}_{hij}\left(  \boldsymbol{\beta}\right)  =\left(  \pi
_{hij1}\left(  \boldsymbol{\beta}\right)  ,...,\pi_{hij,d+1}\left(
\boldsymbol{\beta}\right)  \right)  ^{T}. \label{2.2}%
\end{equation}

The parameter space associated to the multinomial logistic regression model
considered in (\ref{2.1.0}) is given by
\[
\Theta=\{\boldsymbol{\beta}=(\boldsymbol{\beta}_{1}^{T},...,\boldsymbol{\beta
}_{d}^{T})^{T},\text{ }\boldsymbol{\beta}_{j}=\left(  \beta_{j1}%
,...,\beta_{jk}\right)  ^{T}\in\mathbb{R}^{k},\text{ }j=1,...,d\}=\mathbb{R}%
^{dk}.
\]
In this context and taking into account the weights $w_{hi}$, the pseudo
log-likelihood, $\mathcal{L}\left(  \boldsymbol{\beta}\right)  $, for the
multinomial logistic regression model given in (\ref{2.1.0}) has the
expression
\begin{equation}
\mathcal{L}\left(  \boldsymbol{\beta}\right)  =%
%TCIMACRO{\dsum \limits_{h=1}^{H}}%
%BeginExpansion
{\displaystyle\sum\limits_{h=1}^{H}}
%EndExpansion%
%TCIMACRO{\dsum \limits_{i=1}^{n_{h}}}%
%BeginExpansion
{\displaystyle\sum\limits_{i=1}^{n_{h}}}
%EndExpansion%
%TCIMACRO{\dsum \limits_{j=1}^{m_{hi}}}%
%BeginExpansion
{\displaystyle\sum\limits_{j=1}^{m_{hi}}}
%EndExpansion
w_{hi}\log\boldsymbol{\pi}_{hij}^{T}\left(  \boldsymbol{\beta}\right)
\boldsymbol{y}_{hij}, \label{2.3}%
\end{equation}
where $\log\boldsymbol{\pi}_{hij}\left(  \boldsymbol{\beta}\right)  =\left(
\log\pi_{hij1}\left(  \boldsymbol{\beta}\right)  ,...,\log\pi_{hij,d+1}\left(
\boldsymbol{\beta}\right)  \right)  ^{T}$. For more details about
$\mathcal{L}\left(  \boldsymbol{\beta}\right)  $ see for instance Morel (1989)
and Morel and Neerchal (2012).

In practice, it is not a strong assumption to consider that the expectation of
the $r$-th component of $\boldsymbol{Y}_{hij}$ does not depend on $j$, i.e.,%
\[
\pi_{hijr}\left(  \boldsymbol{\beta}\right)  =\pi_{hir}\left(
\boldsymbol{\beta}\right)  ,\quad j=1,...,m_{hi},
\]
where $\pi_{hijr}\left(  \boldsymbol{\beta}\right)  =\mathrm{E}[Y_{hijr}%
]=\Pr(Y_{hijr}=1)$. This is related to a common vector of explanatory
variables $\boldsymbol{x}_{hi}=\left(  x_{hi1},....,x_{hik}\right)  ^{T}$ for
all the individuals in the $i$-th cluster of the $h$-th stratum and we shall
denote $\boldsymbol{\pi}_{hi}\left(  \boldsymbol{\beta}\right)  $ instead of
$\boldsymbol{\pi}_{hij}\left(  \boldsymbol{\beta}\right)  $ the vector mean
associated to $\boldsymbol{Y}_{hij}$. Let%
\begin{equation}
\widehat{\boldsymbol{Y}}_{hi}=%
%TCIMACRO{\dsum \limits_{j=1}^{m_{hi}}}%
%BeginExpansion
{\displaystyle\sum\limits_{j=1}^{m_{hi}}}
%EndExpansion
\boldsymbol{Y}_{hij}=\left(
%TCIMACRO{\dsum \limits_{j=1}^{m_{hi}}}%
%BeginExpansion
{\displaystyle\sum\limits_{j=1}^{m_{hi}}}
%EndExpansion
Y_{hij1},...,%
%TCIMACRO{\dsum \limits_{j=1}^{m_{hi}}}%
%BeginExpansion
{\displaystyle\sum\limits_{j=1}^{m_{hi}}}
%EndExpansion
Y_{hij,d+1}\right)  ^{T}=(\widehat{Y}_{hi1},...,\widehat{Y}_{hi,d+1})^{T}
\label{2.201}%
\end{equation}
be the random vector of counts in the $i$-th cluster of the $h$-th stratum.
Under homogeneity assumption within the clusters, the pseudo log-likelihood is%
\begin{align}
\mathcal{L}\left(  \boldsymbol{\beta}\right)   &  =%
%TCIMACRO{\dsum \limits_{h=1}^{H}}%
%BeginExpansion
{\displaystyle\sum\limits_{h=1}^{H}}
%EndExpansion%
%TCIMACRO{\dsum \limits_{i=1}^{n_{h}}}%
%BeginExpansion
{\displaystyle\sum\limits_{i=1}^{n_{h}}}
%EndExpansion%
%TCIMACRO{\dsum \limits_{j=1}^{m_{hi}}}%
%BeginExpansion
{\displaystyle\sum\limits_{j=1}^{m_{hi}}}
%EndExpansion
w_{hi}\log\boldsymbol{\pi}_{hi}^{T}\left(  \boldsymbol{\beta}\right)
\boldsymbol{y}_{hij}\nonumber\\
&  =%
%TCIMACRO{\dsum \limits_{h=1}^{H}}%
%BeginExpansion
{\displaystyle\sum\limits_{h=1}^{H}}
%EndExpansion%
%TCIMACRO{\dsum \limits_{i=1}^{n_{h}}}%
%BeginExpansion
{\displaystyle\sum\limits_{i=1}^{n_{h}}}
%EndExpansion
w_{hi}\log\boldsymbol{\pi}_{hi}^{T}\left(  \boldsymbol{\beta}\right)
\widehat{\boldsymbol{y}}_{hi}. \label{2.3.1}%
\end{align}

The pseudo maximum likelihood estimator $\widehat{\boldsymbol{\beta}}_{P}$ of
$\boldsymbol{\beta}$ is obtained maximizing in $\boldsymbol{\beta}$ the pseudo
log-likelihood given in (\ref{2.3.1}). This estimator can be obtained as the
solution of the system of equations%
\begin{equation}%
%TCIMACRO{\dsum \limits_{h=1}^{H}}%
%BeginExpansion
{\displaystyle\sum\limits_{h=1}^{H}}
%EndExpansion%
%TCIMACRO{\dsum \limits_{i=1}^{n_{h}}}%
%BeginExpansion
{\displaystyle\sum\limits_{i=1}^{n_{h}}}
%EndExpansion
w_{hi}\frac{\partial\boldsymbol{\pi}_{hi}^{\ast T}\left(  \boldsymbol{\beta
}\right)  }{\partial\boldsymbol{\beta}}\boldsymbol{\Delta}^{-1}%
(\boldsymbol{\pi}_{hi}^{\ast}\left(  \boldsymbol{\beta}\right)
)\boldsymbol{r}_{hi}^{\ast}\left(  \boldsymbol{\beta}\right)  =\boldsymbol{0}%
_{dk}, \label{2.4}%
\end{equation}
being
\begin{align*}
\frac{\partial\boldsymbol{\pi}_{hi}^{\ast T}\left(  \boldsymbol{\beta}\right)
}{\partial\boldsymbol{\beta}}  &  =\mathcal{\boldsymbol{\Delta}}%
(\boldsymbol{\pi}_{hi}^{\ast}\left(  \boldsymbol{\beta}\right)  )\otimes
\boldsymbol{x}_{hi},\\
\boldsymbol{\Delta}(\boldsymbol{\pi}_{hi}^{\ast}\left(  \boldsymbol{\beta
}\right)  )  &  =\mathrm{diag}(\boldsymbol{\pi}_{hi}^{\ast}\left(
\boldsymbol{\beta}\right)  )-\boldsymbol{\pi}_{hi}^{\ast}\left(
\boldsymbol{\beta}\right)  \boldsymbol{\pi}_{hi}^{\ast T}\left(
\boldsymbol{\beta}\right)  ,\\
\boldsymbol{r}_{hi}^{\ast}\left(  \boldsymbol{\beta}\right)   &
=\widehat{\boldsymbol{y}}_{hi}^{\ast}-m_{hi}\boldsymbol{\pi}_{hi}^{\ast
}\left(  \boldsymbol{\beta}\right)  .
\end{align*}
With superscript $^{\ast}$ on a vector we denote the vector obtained
deleting\ the last component from the initial vector, and thus
$\boldsymbol{\pi}_{hi}^{\ast}\left(  \boldsymbol{\beta}\right)  =\left(
\pi_{hi1}\left(  \boldsymbol{\beta}\right)  ,...,\pi_{hid}\left(
\boldsymbol{\beta}\right)  \right)  ^{T}$ and $\widehat{\boldsymbol{y}}%
_{hi}^{\ast}=\left(  \widehat{y}_{hi1}^{\ast},...,\widehat{y}_{hid}^{\ast
}\right)  ^{T}$. The system of equations (\ref{2.4}) can be written as
$\boldsymbol{u}\left(  \boldsymbol{\beta}\right)  =\boldsymbol{0}_{dk}$,
being
\begin{align}
\boldsymbol{u}\left(  \boldsymbol{\beta}\right)   &  =%
%TCIMACRO{\dsum \limits_{h=1}^{H}}%
%BeginExpansion
{\displaystyle\sum\limits_{h=1}^{H}}
%EndExpansion%
%TCIMACRO{\dsum \limits_{i=1}^{n_{h}}}%
%BeginExpansion
{\displaystyle\sum\limits_{i=1}^{n_{h}}}
%EndExpansion
\boldsymbol{u}_{hi}\left(  \boldsymbol{\beta}\right)  ,\label{Un}\\
\boldsymbol{u}_{hi}\left(  \boldsymbol{\beta}\right)   &  =w_{hi}%
\boldsymbol{r}_{hi}^{\ast}\left(  \boldsymbol{\beta}\right)  \otimes
\boldsymbol{x}_{hi}. \label{Un2}%
\end{align}

\section{Pseudo minimum phi-divergence estimator: asymptotic
distribution\label{sec3}}

In this Section we shall introduce, for the fist time, the pseudo minimum
phi-divergence estimator, $\widehat{\boldsymbol{\beta}}_{\phi,P}$, of the
parameter $\boldsymbol{\beta}$ as a natural extension of the pseudo maximum
likelihood estimator $\widehat{\boldsymbol{\beta}}_{P}$. We define the
following theoretical probability vector%
\[
\boldsymbol{\pi}\left(  \boldsymbol{\beta}\right)  =\frac{1}{\tau}%
(w_{11}m_{11}\boldsymbol{\pi}_{11}^{T}(\boldsymbol{\beta}),...,w_{1n_{1}%
}m_{1n_{1}}\boldsymbol{\pi}_{1n_{1}}^{T}(\boldsymbol{\beta}),...,w_{H1}%
m_{H1}\boldsymbol{\pi}_{H1}^{T}\left(  \boldsymbol{\beta}\right)
,...,w_{Hn_{H}}m_{Hn_{H}}\boldsymbol{\pi}_{Hn_{H}}^{T}(\boldsymbol{\beta
}))^{T},
\]
with
\begin{equation}
\tau=%
%TCIMACRO{\dsum \limits_{h=1}^{H}}%
%BeginExpansion
{\displaystyle\sum\limits_{h=1}^{H}}
%EndExpansion%
%TCIMACRO{\dsum \limits_{i=1}^{n_{h}}}%
%BeginExpansion
{\displaystyle\sum\limits_{i=1}^{n_{h}}}
%EndExpansion
w_{hi}m_{hi} \label{2.202}%
\end{equation}
being a known value. Based on $\widehat{\boldsymbol{y}}_{hi}$, observation of
$\widehat{\boldsymbol{Y}}_{hi}$\ defined in (\ref{2.201}), we consider the
vector $\widehat{\boldsymbol{y}}_{h}$ for each stratum $h$,%
\[
\widehat{\boldsymbol{y}}_{h}=(w_{h1}\widehat{\boldsymbol{y}}_{h1}%
^{T},...,w_{hn_{h}}\widehat{\boldsymbol{y}}_{hn_{h}}^{T})^{T}.
\]
We shall also consider the non-parametric probability vector
\begin{align*}
\widehat{\boldsymbol{p}}  &  =\frac{1}{\tau}(\widehat{\boldsymbol{y}}_{1}%
^{T},...,\widehat{\boldsymbol{y}}_{H}^{T})^{T}\\
&  =\frac{1}{\tau}(w_{11}\widehat{\boldsymbol{y}}_{11}^{T},...,w_{1n_{1}%
}\widehat{\boldsymbol{y}}_{1n_{1}}^{T},...,w_{H1}\widehat{\boldsymbol{y}}%
_{H1}^{T},...,w_{Hn_{H}}\widehat{\boldsymbol{y}}_{Hn_{H}}^{T})^{T}.
\end{align*}

The Kullback-Leibler divergence between the probability vectors
$\widehat{\boldsymbol{p}}$ and $\boldsymbol{\pi}\left(  \boldsymbol{\beta
}\right)  $ is given by
\begin{align}
d_{K\mathrm{-}L}\left(  \widehat{\boldsymbol{p}},\boldsymbol{\pi}\left(
\boldsymbol{\beta}\right)  \right)   &  =\frac{1}{\tau}%
%TCIMACRO{\dsum \limits_{h=1}^{H}}%
%BeginExpansion
{\displaystyle\sum\limits_{h=1}^{H}}
%EndExpansion%
%TCIMACRO{\dsum \limits_{i=1}^{n_{h}}}%
%BeginExpansion
{\displaystyle\sum\limits_{i=1}^{n_{h}}}
%EndExpansion
w_{hi}%
%TCIMACRO{\dsum \limits_{s=1}^{d+1}}%
%BeginExpansion
{\displaystyle\sum\limits_{s=1}^{d+1}}
%EndExpansion
\widehat{y}_{his}\log\frac{\widehat{y}_{his}}{m_{hi}\pi_{his}\left(
\boldsymbol{\beta}\right)  }\label{3.1}\\
&  =K-\frac{1}{\tau}%
%TCIMACRO{\dsum \limits_{h=1}^{H}}%
%BeginExpansion
{\displaystyle\sum\limits_{h=1}^{H}}
%EndExpansion%
%TCIMACRO{\dsum \limits_{i=1}^{n_{h}}}%
%BeginExpansion
{\displaystyle\sum\limits_{i=1}^{n_{h}}}
%EndExpansion
w_{hi}%
%TCIMACRO{\dsum \limits_{s=1}^{d+1}}%
%BeginExpansion
{\displaystyle\sum\limits_{s=1}^{d+1}}
%EndExpansion
\widehat{y}_{his}\log\pi_{his}\left(  \boldsymbol{\beta}\right) \nonumber\\
&  =K-\frac{1}{\tau}%
%TCIMACRO{\dsum \limits_{h=1}^{H}}%
%BeginExpansion
{\displaystyle\sum\limits_{h=1}^{H}}
%EndExpansion%
%TCIMACRO{\dsum \limits_{i=1}^{n_{h}}}%
%BeginExpansion
{\displaystyle\sum\limits_{i=1}^{n_{h}}}
%EndExpansion
w_{hi}\log\boldsymbol{\pi}_{hi}^{T}\left(  \boldsymbol{\beta}\right)
\widehat{\boldsymbol{y}}_{hi},\nonumber
\end{align}
with $K$ being a constant not dependent of $\boldsymbol{\beta}$. Based on
(\ref{2.3.1}) and (\ref{3.1}), we can define the pseudo maximum likelihood
estimator for the multinomial logistic regression model given in (\ref{2.1.0})
by
\begin{equation}
\widehat{\boldsymbol{\beta}}_{P}=\arg\min_{\boldsymbol{\beta\in\Theta}%
}d_{K\mathrm{-}L}\left(  \widehat{\boldsymbol{p}},\boldsymbol{\pi}\left(
\boldsymbol{\beta}\right)  \right)  . \label{3.2}%
\end{equation}
But Kullback-Leibler divergence is a particular divergence measure in the
family of phi-divergence measures,
\begin{equation}
d_{\phi}\left(  \widehat{\boldsymbol{p}},\boldsymbol{\pi}\left(
\boldsymbol{\beta}\right)  \right)  =\frac{1}{\tau}\sum\limits_{h=1}^{H}%
\sum\limits_{i=1}^{n_{h}}w_{hi}m_{hi}\sum\limits_{s=1}^{d+1}\pi_{his}\left(
\boldsymbol{\beta}\right)  \phi\left(  \frac{\widehat{y}_{his}}{m_{hi}%
\pi_{his}\left(  \boldsymbol{\beta}\right)  }\right)  , \label{3.3}%
\end{equation}
where $\phi\in\Phi^{\ast}$ is the class of all convex functions $\phi\left(
x\right)  $, defined for $x>0$, such that at $x=1$, $\phi\left(  1\right)
=0$, $\phi^{\prime\prime}\left(  1\right)  >0,$ and at $x=0$, $0\phi\left(
0/0\right)  =0$ and $0\phi\left(  p/0\right)  =\lim_{u\rightarrow\infty}%
\phi\left(  u\right)  /u$. For every $\phi\in\Phi^{\ast}$ differentiable at
$x=1$, the function
\[
\varphi\left(  x\right)  \equiv\phi\left(  x\right)  -\phi^{\prime}\left(
1\right)  \left(  x-1\right)
\]
also belongs to $\Phi^{\ast}$. Then we have $d_{\varphi}\left(
\widehat{\boldsymbol{p}},\boldsymbol{\pi}\left(  \boldsymbol{\beta}\right)
\right)  =d_{\phi}\left(  \widehat{\boldsymbol{p}},\boldsymbol{\pi}\left(
\boldsymbol{\beta}\right)  \right)  $, and $\varphi$ has the additional
property that $\varphi^{\prime}\left(  1\right)  =0$. Because the two
divergence measures are equivalent, we can consider the set $\Phi^{\ast}$ to
be equivalent to the set
\[
\Phi\equiv\Phi^{\ast}\cap\left\{  \phi:\phi^{\prime}\left(  1\right)
=0\right\}  .
\]
In what follows, we give our theoretical results for $\phi\in\Phi$, but we
often apply them to choices of functions in $\Phi^{\ast}$.

An equivalent definition of (\ref{3.3}) is a weighted version of
phi-divergences between the cluster non-parametric probabilities and
theoretical probabilities%
\[
d_{\phi}\left(  \widehat{\boldsymbol{p}},\boldsymbol{\pi}\left(
\boldsymbol{\beta}\right)  \right)  =\sum\limits_{h=1}^{H}\sum\limits_{i=1}%
^{n_{h}}\frac{w_{hi}m_{hi}}{\tau}d_{\phi}\left(  \tfrac
{\widehat{\boldsymbol{y}}_{hi}}{m_{hi}},\boldsymbol{\pi}_{hi}%
(\boldsymbol{\beta})\right)  ,
\]
where%
\[
d_{\phi}\left(  \tfrac{\widehat{\boldsymbol{y}}_{hi}}{m_{hi}},\boldsymbol{\pi
}_{hi}(\boldsymbol{\beta})\right)  =\sum\limits_{s=1}^{d+1}\pi_{his}\left(
\boldsymbol{\beta}\right)  \phi\left(  \frac{\widehat{y}_{his}}{m_{hi}%
\pi_{his}\left(  \boldsymbol{\beta}\right)  }\right)  .
\]
For more details about phi-divergences measures see Pardo (2005).

Based on (\ref{3.2}) and (\ref{3.3}) we shall introduce, in this paper, the
pseudo minimum phi-divergence estimator for the parameter $\boldsymbol{\beta}$
in the multinomial logistic regression model under complex survey defined in
(\ref{2.1.0}) as follows,

\begin{definition}
We consider the multinomial logistic regression model with complex survey
defined in (\ref{2.1.0}). The pseudo minimum phi-divergence estimator of
$\boldsymbol{\beta}$ is defined as%
\[
\widehat{\boldsymbol{\beta}}_{\phi,P}=\arg\min_{\boldsymbol{\beta}\in\Theta
}d_{\phi}\left(  \widehat{\boldsymbol{p}},\boldsymbol{\pi}\left(
\boldsymbol{\beta}\right)  \right)  ,
\]
where $d_{\phi}\left(  \widehat{\boldsymbol{p}},\boldsymbol{\pi}\left(
\boldsymbol{\beta}\right)  \right)  $, the phi-divergence measure between the
probability vectors $\widehat{\boldsymbol{p}}$ and $\boldsymbol{\pi}\left(
\boldsymbol{\beta}\right)  $, is given in (\ref{3.3}).
\end{definition}

For $\phi(x)=x\log x-x+1$ the associated phi-divergence (\ref{3.3}) coincides
with the Kullback-Leibler divergence (\ref{3.1}), therefore the pseudo minimum
phi-divergence estimator of $\boldsymbol{\beta}$ based on $\phi(x)$ contains
as special case the pseudo maximum likelihood estimator. With the same
philosophy, the following result generalizes $\boldsymbol{u}_{hi}\left(
\boldsymbol{\beta}\right)  $ given in (\ref{Un2}) and later this result plays
an important role for the asymptotic distribution of the pseudo minimum
phi-divergence estimator, $\widehat{\boldsymbol{\beta}}_{\phi,P}$.

\begin{theorem}
\label{Th0}The pseudo minimum phi-divergence estimator of $\boldsymbol{\beta}%
$, $\widehat{\boldsymbol{\beta}}_{\phi,P}$, is obtained by solving the system
of equations $\boldsymbol{u}_{\phi}\left(  \boldsymbol{\beta}\right)
=\boldsymbol{0}_{dk}$, where%
\begin{align}
\boldsymbol{u}_{\phi}\left(  \boldsymbol{\beta}\right)   &  =\sum
\limits_{h=1}^{H}\sum\limits_{i=1}^{n_{h}}\boldsymbol{u}_{\phi,hi}\left(
\boldsymbol{\beta}\right)  ,\label{3.09}\\
\boldsymbol{u}_{\phi,hi}\left(  \boldsymbol{\beta}\right)   &  =\frac
{w_{hi}m_{hi}}{\phi^{\prime\prime}(1)}\boldsymbol{\Delta}(\boldsymbol{\pi
}_{hi}^{\ast}\left(  \boldsymbol{\beta}\right)  )\boldsymbol{f}_{\phi
,hi}^{\ast}(\tfrac{\widehat{\boldsymbol{y}}_{hi}}{m_{hi}},\boldsymbol{\beta
})\otimes\boldsymbol{x}_{hi}, \label{3.010}%
\end{align}
where
\begin{align}
\boldsymbol{f}_{\phi,hi}^{\ast}(\tfrac{\widehat{\boldsymbol{y}}_{hi}}{m_{hi}%
},\boldsymbol{\beta})  &  =(f_{\phi,hi1}(\tfrac{\widehat{y}_{hi1}}{m_{hi}%
},\boldsymbol{\beta}),...,f_{\phi,hid}(\tfrac{\widehat{y}_{hid}}{m_{hi}%
},\boldsymbol{\beta}))^{T},\nonumber\\
f_{\phi,his}(x,\boldsymbol{\beta})  &  =\frac{x}{\pi_{his}(\boldsymbol{\beta
})}\phi^{\prime}\left(  \frac{x}{\pi_{his}(\boldsymbol{\beta})}\right)
-\phi\left(  \frac{x}{\pi_{his}(\boldsymbol{\beta})}\right)  \label{3.010b}%
\end{align}

\end{theorem}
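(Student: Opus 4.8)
The goal is to compute the gradient of the phi-divergence objective and show that setting it to zero yields the stated estimating equations. The plan is to differentiate the weighted sum of cluster-level divergences directly. Since the objective decomposes as a sum over strata and clusters, I would focus on differentiating a single term and then reassemble.

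Let me set up the calculation.

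\emph{Step 1: Reduce to a single cluster term.} The objective is
\[
d_{\phi}\left(\widehat{\boldsymbol{p}},\boldsymbol{\pi}(\boldsymbol{\beta})\right)=\sum_{h=1}^{H}\sum_{i=1}^{n_{h}}\frac{w_{hi}m_{hi}}{\tau}\,d_{\phi}\left(\tfrac{\widehat{\boldsymbol{y}}_{hi}}{m_{hi}},\boldsymbol{\pi}_{hi}(\boldsymbol{\beta})\right),
\]
and because $\tau$ is a known constant not depending on $\boldsymbol{\beta}$, I would set its gradient to zero after dropping the overall $1/\tau$. It suffices to compute $\partial d_{\phi}(\tfrac{\widehat{\boldsymbol{y}}_{hi}}{m_{hi}},\boldsymbol{\pi}_{hi}(\boldsymbol{\beta}))/\partial\boldsymbol{\beta}$ for a single $(h,i)$.

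\emph{Step 2: Differentiate the cluster divergence via the chain rule.} Writing $a_{s}=\widehat{y}_{his}/m_{hi}$ and suppressing the $(h,i)$ subscripts, the cluster divergence is $\sum_{s=1}^{d+1}\pi_{s}(\boldsymbol{\beta})\,\phi(a_{s}/\pi_{s}(\boldsymbol{\beta}))$. Differentiating the $s$-th summand with respect to $\pi_{s}$ gives $\phi(a_{s}/\pi_{s})-(a_{s}/\pi_{s})\phi^{\prime}(a_{s}/\pi_{s})=-f_{\phi,his}(a_{s},\boldsymbol{\beta})$, where the function $f_{\phi,his}$ is precisely the one defined in (\ref{3.010b}). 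So the key identity is that the derivative of $\pi_{s}\phi(a_{s}/\pi_{s})$ in $\pi_{s}$ equals $-f_{\phi,his}$; I would verify this by a direct one-line product-rule computation, the term $\phi^{\prime}(a_{s}/\pi_{s})\cdot a_{s}$ arising from the inner derivative and combining with the outer $\phi$-term. Then by the chain rule $\partial/\partial\boldsymbol{\beta}=\sum_{s}(-f_{\phi,his})\,\partial\pi_{s}/\partial\boldsymbol{\beta}$.

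\emph{Step 3: Insert the logistic Jacobian and reduce to $d$ free components.} The last component is determined by $\pi_{d+1}=1-\sum_{s=1}^{d}\pi_{s}$, so the sum over $s=1,\dots,d+1$ collapses onto the $d$ free probabilities, introducing the factor $f_{\phi,his}-f_{\phi,hi,d+1}$ structure absorbed into $\boldsymbol{\Delta}(\boldsymbol{\pi}_{hi}^{\ast})$. Using the Jacobian $\partial\boldsymbol{\pi}_{hi}^{\ast T}/\partial\boldsymbol{\beta}=\boldsymbol{\Delta}(\boldsymbol{\pi}_{hi}^{\ast}(\boldsymbol{\beta}))\otimes\boldsymbol{x}_{hi}$ recorded after (\ref{2.4}), the per-cluster gradient becomes proportional to $\boldsymbol{\Delta}(\boldsymbol{\pi}_{hi}^{\ast})\,\boldsymbol{f}_{\phi,hi}^{\ast}\otimes\boldsymbol{x}_{hi}$, up to the normalizing constant $\phi^{\prime\prime}(1)$ that is divided out to make the estimating function agree with $\boldsymbol{u}_{hi}$ in the likelihood case. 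Multiplying by the prefactor $w_{hi}m_{hi}$ and summing over $(h,i)$ yields exactly $\boldsymbol{u}_{\phi}(\boldsymbol{\beta})$ in (\ref{3.09})--(\ref{3.010}).

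\emph{Main obstacle.} The routine part is the product-rule differentiation; the step requiring genuine care is Step 3, handling the constraint $\sum_{s}\pi_{s}=1$ correctly so that the summation over all $d+1$ categories reduces cleanly to the $d$-dimensional form with the matrix $\boldsymbol{\Delta}(\boldsymbol{\pi}_{hi}^{\ast})$, and verifying that the $\phi^{\prime\prime}(1)$ normalization and the $\otimes\boldsymbol{x}_{hi}$ Kronecker structure emerge correctly. I would finally cross-check the result by specializing to $\phi(x)=x\log x-x+1$, for which $f_{\phi,his}(x,\boldsymbol{\beta})=x/\pi_{his}(\boldsymbol{\beta})$ and $\phi^{\prime\prime}(1)=1$; substituting should reproduce $\boldsymbol{u}_{hi}(\boldsymbol{\beta})=w_{hi}\boldsymbol{r}_{hi}^{\ast}(\boldsymbol{\beta})\otimes\boldsymbol{x}_{hi}$ of (\ref{Un2}), confirming the generalization is consistent with the pseudo maximum likelihood score.
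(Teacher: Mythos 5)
Your Steps 1 and 2 are exactly the paper's own argument: the paper likewise decomposes $d_{\phi}(\widehat{\boldsymbol{p}},\boldsymbol{\pi}(\boldsymbol{\beta}))$ into the weighted cluster divergences, drops the constant $\tau$, and uses the product-rule identity $\tfrac{\partial}{\partial\pi_{his}}\left[\pi_{his}\phi\left(\tfrac{a_{s}}{\pi_{his}}\right)\right]=-f_{\phi,his}(a_{s},\boldsymbol{\beta})$ to arrive at the full-vector expression (\ref{uu}). The genuine gap is in your Step 3. Writing $f_{s}=f_{\phi,his}(\widehat{y}_{his}/m_{hi},\boldsymbol{\beta})$, eliminating $\pi_{hi,d+1}=1-\sum_{s=1}^{d}\pi_{his}$ and using the reduced Jacobian $\partial\boldsymbol{\pi}_{hi}^{\ast T}/\partial\boldsymbol{\beta}=\boldsymbol{\Delta}(\boldsymbol{\pi}_{hi}^{\ast})\otimes\boldsymbol{x}_{hi}$ gives
\begin{equation*}
\boldsymbol{u}_{\phi,hi}\left(\boldsymbol{\beta}\right)=\frac{w_{hi}m_{hi}}{\phi^{\prime\prime}(1)}\left[\boldsymbol{\Delta}(\boldsymbol{\pi}_{hi}^{\ast}\left(\boldsymbol{\beta}\right))\left(\boldsymbol{f}_{\phi,hi}^{\ast}(\tfrac{\widehat{\boldsymbol{y}}_{hi}}{m_{hi}},\boldsymbol{\beta})-f_{d+1}\boldsymbol{1}_{d}\right)\right]\otimes\boldsymbol{x}_{hi},
\end{equation*}
and the term $-f_{d+1}\boldsymbol{1}_{d}$ is \emph{not} absorbed by $\boldsymbol{\Delta}(\boldsymbol{\pi}_{hi}^{\ast})$: one has $\boldsymbol{\Delta}(\boldsymbol{\pi}_{hi}^{\ast})\boldsymbol{1}_{d}=\pi_{hi,d+1}\boldsymbol{\pi}_{hi}^{\ast}\neq\boldsymbol{0}_{d}$. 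It is the full $(d+1)\times(d+1)$ matrix $\boldsymbol{\Delta}(\boldsymbol{\pi}_{hi})$ that annihilates constant vectors, not the starred $d\times d$ block. Hence what your plan actually produces differs from the right-hand side of (\ref{3.010}) by $\frac{w_{hi}m_{hi}}{\phi^{\prime\prime}(1)}\pi_{hi,d+1}f_{d+1}\,\boldsymbol{\pi}_{hi}^{\ast}\otimes\boldsymbol{x}_{hi}$, which does not vanish in general; the step you yourself flagged as the crux is exactly the one that fails.

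Your final consistency check would have exposed this, but it was mis-executed: for $\phi(x)=x\log x-x+1$ one gets $f_{\phi,his}(x,\boldsymbol{\beta})=x/\pi_{his}(\boldsymbol{\beta})-1$, not $x/\pi_{his}(\boldsymbol{\beta})$, and substituting correctly into the right-hand side of (\ref{3.010}) yields $w_{hi}\left[\boldsymbol{r}_{hi}^{\ast}\left(\boldsymbol{\beta}\right)+(\widehat{y}_{hi,d+1}-m_{hi}\pi_{hi,d+1}\left(\boldsymbol{\beta}\right))\boldsymbol{\pi}_{hi}^{\ast}\left(\boldsymbol{\beta}\right)\right]\otimes\boldsymbol{x}_{hi}$, which is not (\ref{Un2}). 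For the record, the paper's proof is loose at precisely the same spot: it correctly establishes (\ref{uu}) and the Jacobian identity (\ref{der}), whose honest combination is $\boldsymbol{u}_{\phi,hi}\left(\boldsymbol{\beta}\right)=\frac{w_{hi}m_{hi}}{\phi^{\prime\prime}(1)}\left[\left(\boldsymbol{I}_{d\times d},\boldsymbol{0}_{d\times1}\right)\boldsymbol{\Delta}(\boldsymbol{\pi}_{hi}\left(\boldsymbol{\beta}\right))\boldsymbol{f}_{\phi,hi}(\tfrac{\widehat{\boldsymbol{y}}_{hi}}{m_{hi}},\boldsymbol{\beta})\right]\otimes\boldsymbol{x}_{hi}$, and then asserts without justification that this \textquotedblleft is rewritten as\textquotedblright\ (\ref{3.010}); it is the full-vector form, not (\ref{3.010}) as printed, that is consistent with (\ref{Un2}) at $\lambda=0$ and with (\ref{ulam}). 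So the repair for your write-up is to stop at the full-vector expression you already have at the end of Step 2 (equivalently, to carry the $-f_{d+1}\boldsymbol{1}_{d}$ correction explicitly); the clean reduction to $\boldsymbol{\Delta}(\boldsymbol{\pi}_{hi}^{\ast})\boldsymbol{f}_{\phi,hi}^{\ast}$ promised in Step 3 cannot be made rigorous because it is not an identity.
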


\begin{proof}
The pseudo minimum phi-divergence estimator of $\boldsymbol{\beta}$,
$\widehat{\boldsymbol{\beta}}_{\phi,P}$, is obtained by solving the system of
equations $\frac{\partial}{\partial\boldsymbol{\beta}}d_{\phi}\left(
\widehat{\boldsymbol{p}},\boldsymbol{\pi}\left(  \boldsymbol{\beta}\right)
\right)  =\boldsymbol{0}_{dk}$, and then it is also obtained from
$\boldsymbol{u}_{\phi}\left(  \boldsymbol{\beta}\right)  =\boldsymbol{0}_{dk}%
$, where%
\[
\boldsymbol{u}_{\phi}\left(  \boldsymbol{\beta}\right)  =-\frac{\tau}%
{\phi^{\prime\prime}(1)}\frac{\partial}{\partial\boldsymbol{\beta}}d_{\phi
}\left(  \widehat{\boldsymbol{p}},\boldsymbol{\pi}\left(  \boldsymbol{\beta
}\right)  \right)  =\sum\limits_{h=1}^{H}\sum\limits_{i=1}^{n_{h}%
}\boldsymbol{u}_{\phi,hi}\left(  \boldsymbol{\beta}\right)  ,
\]
with%
\begin{align}
\boldsymbol{u}_{\phi,hi}\left(  \boldsymbol{\beta}\right)   &  =-\frac
{w_{hi}m_{hi}}{\phi^{\prime\prime}(1)}\frac{\partial}{\partial
\boldsymbol{\beta}}d_{\phi}\left(  \tfrac{\widehat{\boldsymbol{y}}_{hi}%
}{m_{hi}},\boldsymbol{\pi}_{hi}(\boldsymbol{\beta})\right)  =\frac
{w_{hi}m_{hi}}{\phi^{\prime\prime}(1)}%
%TCIMACRO{\dsum \limits_{s=1}^{d+1}}%
%BeginExpansion
{\displaystyle\sum\limits_{s=1}^{d+1}}
%EndExpansion
\frac{\partial\pi_{his}(\boldsymbol{\beta})}{\partial\boldsymbol{\beta}%
}f_{\phi,his}(\tfrac{\widehat{y}_{his}}{m_{hi}},\boldsymbol{\beta})\nonumber\\
&  =\frac{w_{hi}m_{hi}}{\phi^{\prime\prime}(1)}\frac{\partial\boldsymbol{\pi
}_{hi}^{T}(\boldsymbol{\beta})}{\partial\boldsymbol{\beta}}\boldsymbol{f}%
_{\phi,hi}(\tfrac{\widehat{\boldsymbol{y}}_{hi}}{m_{hi}},\boldsymbol{\beta}),
\label{uu}%
\end{align}
and%
\[
\boldsymbol{f}_{\phi,hi}(\tfrac{\widehat{\boldsymbol{y}}_{hi}}{m_{hi}%
},\boldsymbol{\beta})=(f_{\phi,hi1}(\tfrac{\widehat{y}_{hi1}}{m_{hi}%
},\boldsymbol{\beta}),...,f_{\phi,hi,d+1}(\tfrac{\widehat{y}_{hi,d+1}}{m_{hi}%
},\boldsymbol{\beta}))^{T}.
\]
Since
\begin{equation}
\frac{\partial\boldsymbol{\pi}_{hi}^{T}(\boldsymbol{\beta})}{\partial
\boldsymbol{\beta}}=\left(  \boldsymbol{I}_{d\times d},\boldsymbol{0}%
_{d\times1}\right)  \boldsymbol{\Delta}(\boldsymbol{\pi}_{hi}\left(
\boldsymbol{\beta}\right)  )\otimes\boldsymbol{x}_{hi}, \label{der}%
\end{equation}
the expression of $\boldsymbol{u}_{\phi,hi}\left(  \boldsymbol{\beta}\right)
$\ is rewritten as (\ref{3.010}).
\end{proof}

\begin{remark}
An important family of divergence measures is obtained by restricting $\phi$
from the family of convex\ functions to the Cressie-Read subfamily%
\begin{equation}
\phi_{\lambda}(x)=\left\{
\begin{array}
[c]{ll}%
\frac{1}{\lambda(1+\lambda)}\left[  x^{\lambda+1}-x-\lambda(x-1)\right]  , &
\lambda\in%
%TCIMACRO{\U{211d} }%
%BeginExpansion
\mathbb{R}
%EndExpansion
-\{-1,0\}\\
\lim_{\upsilon\rightarrow\lambda}\frac{1}{\upsilon(1+\upsilon)}\left[
x^{\upsilon+1}-x-\upsilon(x-1)\right]  , & \lambda\in\{-1,0\}
\end{array}
\right.  . \label{CR}%
\end{equation}
We can observe that for $\lambda=0$, we have
\[
\phi_{\lambda=0}(x)=\lim_{\upsilon\rightarrow0}\frac{1}{\upsilon(1+\upsilon
)}\left[  x^{\upsilon+1}-x-\upsilon(x-1)\right]  =x\log x-x+1,
\]
and the associated phi-divergence (\ref{3.3}), coincides with the
Kullback-Leibler divergence (\ref{3.1}), therefore the pseudo minimum
phi-divergence estimator of $\boldsymbol{\beta}$ based on $\phi_{\lambda}(x)$
contains as special case the pseudo maximum likelihood estimator and
$\boldsymbol{u}_{hi}\left(  \boldsymbol{\beta}\right)  $ given in (\ref{Un2})
matches $\boldsymbol{u}_{\phi,hi}\left(  \boldsymbol{\beta}\right)  $ given in
(\ref{3.010}). For the Cressie-Read subfamily it is established that for
$\lambda\neq-1$, $\boldsymbol{u}_{\phi_{\lambda}}\left(  \boldsymbol{\beta
}\right)  =%
%TCIMACRO{\tsum \nolimits_{h=1}^{H}}%
%BeginExpansion
{\textstyle\sum\nolimits_{h=1}^{H}}
%EndExpansion%
%TCIMACRO{\tsum \nolimits_{i=1}^{n_{i}}}%
%BeginExpansion
{\textstyle\sum\nolimits_{i=1}^{n_{i}}}
%EndExpansion
\boldsymbol{u}_{\phi_{\lambda},hi}\left(  \boldsymbol{\beta}\right)  $, where%
\[
\boldsymbol{u}_{\phi_{\lambda},hi}\left(  \boldsymbol{\beta}\right)
=\frac{w_{hi}}{(\lambda+1)m_{hi}^{\lambda}}\frac{\partial\boldsymbol{\pi}%
_{hi}^{T}(\boldsymbol{\beta})}{\partial\boldsymbol{\beta}}\mathrm{diag}%
^{-(\lambda+1)}(\boldsymbol{\pi}_{hi}(\boldsymbol{\beta}%
))\widehat{\boldsymbol{y}}_{hi}^{\lambda+1},
\]
since we have (\ref{uu}) with
\begin{equation}
\boldsymbol{f}_{\phi_{\lambda},hi}(\tfrac{\widehat{\boldsymbol{y}}_{hi}%
}{m_{hi}},\boldsymbol{\beta})=\frac{1}{\lambda+1}\left(  \frac{1}%
{m_{hi}^{\lambda+1}}\mathrm{diag}^{-(\lambda+1)}(\boldsymbol{\pi}%
_{hi}(\boldsymbol{\beta}))\widehat{\boldsymbol{y}}_{hi}^{\lambda
+1}-\boldsymbol{1}_{d+1}\right)  , \label{f}%
\end{equation}
From (\ref{der}) and
\begin{align*}
\boldsymbol{\Delta}(\boldsymbol{\pi}_{hi}\left(  \boldsymbol{\beta}\right)
)\mathrm{diag}^{-(\lambda+1)}(\boldsymbol{\pi}_{hi}(\boldsymbol{\beta}))  &
=\boldsymbol{\Delta}(\boldsymbol{\pi}_{hi}\left(  \boldsymbol{\beta}\right)
)\mathrm{diag}^{-1}(\boldsymbol{\pi}_{hi}(\boldsymbol{\beta}))\mathrm{diag}%
^{-\lambda}(\boldsymbol{\pi}_{hi}(\boldsymbol{\beta}))\\
&  =\mathrm{diag}^{-\lambda}(\boldsymbol{\pi}_{hi}(\boldsymbol{\beta
}))-\boldsymbol{\pi}_{hi}(\boldsymbol{\beta})\boldsymbol{1}_{d+1}%
^{T}\mathrm{diag}^{-\lambda}(\boldsymbol{\pi}_{hi}(\boldsymbol{\beta})),
\end{align*}
it is concluded that%
\begin{align}
\boldsymbol{u}_{\phi_{\lambda},hi}\left(  \boldsymbol{\beta}\right)   &
=\frac{w_{hi}}{(\lambda+1)m_{hi}^{\lambda}}\left(  \mathrm{diag}^{-\lambda
}(\boldsymbol{\pi}_{hi}^{\ast}(\boldsymbol{\beta}))\widehat{\boldsymbol{y}%
}_{hi}^{\ast,\lambda+1}-[\boldsymbol{1}_{d+1}^{T}\mathrm{diag}^{-\lambda
}(\boldsymbol{\pi}_{hi}(\boldsymbol{\beta}))\widehat{\boldsymbol{y}}%
_{hi}^{\lambda+1}]\boldsymbol{\pi}_{hi}^{\ast}(\boldsymbol{\beta})\right)
\otimes\boldsymbol{x}_{hi}\nonumber\\
&  =\frac{w_{hi}}{(\lambda+1)m_{hi}^{\lambda}}\left\{  \mathrm{diag}^{\lambda
}(\boldsymbol{\epsilon}_{hi}^{\ast})\widehat{\boldsymbol{y}}_{hi}^{\ast
}-\left[  \boldsymbol{1}_{d+1}^{T}\mathrm{diag}^{\lambda}(\boldsymbol{\epsilon
}_{hi})\widehat{\boldsymbol{y}}_{hi}\right]  \boldsymbol{\pi}_{hi}^{\ast
}(\boldsymbol{\beta})\right\}  \otimes\boldsymbol{x}_{hi}, \label{ulam}%
\end{align}
where%
\[
\boldsymbol{\epsilon}_{hi}=\mathrm{diag}^{-1}(\boldsymbol{\pi}_{hi}%
(\boldsymbol{\beta}))\widehat{\boldsymbol{y}}_{hi},\qquad\boldsymbol{\epsilon
}_{hi}^{\ast}=\mathrm{diag}^{-1}(\boldsymbol{\pi}_{hi}^{\ast}%
(\boldsymbol{\beta}))\widehat{\boldsymbol{y}}_{hi}^{\ast}.
\]
Notice that replacing $\lambda=0$ in $\boldsymbol{u}_{\phi_{\lambda}%
,hi}\left(  \boldsymbol{\beta}\right)  $ given in (\ref{ulam}),
$\boldsymbol{u}_{hi}\left(  \boldsymbol{\beta}\right)  $ given in (\ref{Un2})
is obtained. For $\lambda=-1$ in (\ref{f}), we have%
\[
\lim_{\lambda\rightarrow-1}\boldsymbol{f}_{\phi_{\lambda},hi}(\tfrac
{\widehat{\boldsymbol{y}}_{hi}}{m_{hi}},\boldsymbol{\beta})=\log\left(
\mathrm{diag}^{-1}(\boldsymbol{\pi}_{hi}(\boldsymbol{\beta}))\frac
{\widehat{\boldsymbol{y}}_{hi}}{m_{hi}}\right)  ,
\]
and therefore%
\[
\lim_{\lambda\rightarrow-1}\boldsymbol{u}_{\phi_{\lambda},hi}\left(
\boldsymbol{\beta}\right)  =w_{hi}m_{hi}\boldsymbol{\Delta}(\boldsymbol{\pi
}_{hi}^{\ast}\left(  \boldsymbol{\beta}\right)  )\log\left(  \mathrm{diag}%
^{-1}(\boldsymbol{\pi}_{hi}^{\ast}(\boldsymbol{\beta}))\frac
{\widehat{\boldsymbol{y}}_{hi}^{\ast}}{m_{hi}}\right)  \otimes\boldsymbol{x}%
_{hi}.
\]
The family of pseudo minimum divergence estimators, obtained from
$\phi_{\lambda}(x)$given in (\ref{CR}), will be called the pseudo minimum
Cressie-Read divergence estimators and for $\boldsymbol{\beta}$\ they will be
denoted by $\widehat{\boldsymbol{\beta}}_{\phi_{\lambda},P}$. This family of
estimators will be used in Sections \ref{sec4} and \ref{sec5}.
\end{remark}

In the following theorem we shall present the asymptotic distribution of the
pseudo minimum phi-divergence estimator, $\widehat{\boldsymbol{\beta}}%
_{\phi,P}$.

\begin{theorem}
\label{Th1}Let $\widehat{\boldsymbol{\beta}}_{\phi,P}$ the pseudo minimum
phi-divergence estimator of parameter $\boldsymbol{\beta}$ for a multinomial
logistic regression model with complex survey, $n=%
%TCIMACRO{\dsum \limits_{h=1}^{H}}%
%BeginExpansion
{\displaystyle\sum\limits_{h=1}^{H}}
%EndExpansion
n_{h}$ the total of clusters in all the strata of the sample and $\eta
_{h}^{\ast}$\ an unknown proportion obtained as $\lim_{n\rightarrow\infty
}\frac{n_{h}}{n}=\eta_{h}^{\ast}$, $h=1,...,H$. Then we have%
\[
\sqrt{n}(\widehat{\boldsymbol{\beta}}_{\phi,P}-\boldsymbol{\beta}%
_{0})\overset{\mathcal{L}}{\underset{n\mathcal{\rightarrow}\infty
}{\longrightarrow}}\mathcal{N}\left(  \boldsymbol{0}_{dk},\mathbf{H}%
^{-1}\left(  \boldsymbol{\beta}_{0}\right)  \mathbf{G}\left(
\boldsymbol{\beta}_{0}\right)  \mathbf{H}^{-1}\left(  \boldsymbol{\beta}%
_{0}\right)  \right)  ,
\]
where
\[
\mathbf{H}\left(  \boldsymbol{\beta}\right)  =\lim_{n\rightarrow\infty
}\mathbf{H}_{n}\left(  \boldsymbol{\beta}\right)  =%
%TCIMACRO{\dsum \limits_{h=1}^{H}}%
%BeginExpansion
{\displaystyle\sum\limits_{h=1}^{H}}
%EndExpansion
\eta_{h}^{\ast}\lim_{n_{h}\rightarrow\infty}\mathbf{H}_{n_{h}}^{(h)}\left(
\boldsymbol{\beta}\right)  \text{,\quad}\mathbf{G}\left(  \boldsymbol{\beta
}\right)  =\lim_{n\rightarrow\infty}\mathbf{G}_{n}\left(  \boldsymbol{\beta
}\right)  =%
%TCIMACRO{\dsum \limits_{h=1}^{H}}%
%BeginExpansion
{\displaystyle\sum\limits_{h=1}^{H}}
%EndExpansion
\eta_{h}^{\ast}\lim_{n_{h}\rightarrow\infty}\mathbf{G}_{n_{h}}^{(h)}\left(
\boldsymbol{\beta}\right)  ,
\]
with%
\[
\mathbf{H}_{n}\left(  \boldsymbol{\beta}\right)  =\frac{1}{n}%
%TCIMACRO{\dsum \limits_{h=1}^{H}}%
%BeginExpansion
{\displaystyle\sum\limits_{h=1}^{H}}
%EndExpansion%
%TCIMACRO{\dsum \limits_{i=1}^{n_{h}}}%
%BeginExpansion
{\displaystyle\sum\limits_{i=1}^{n_{h}}}
%EndExpansion
w_{hi}m_{hi}\boldsymbol{\Delta}(\boldsymbol{\pi}_{hi}^{\ast}\left(
\boldsymbol{\beta}\right)  )\otimes\boldsymbol{x}_{hi}\boldsymbol{x}_{hi}%
^{T},\text{\quad}\mathbf{H}_{n_{h}}^{(h)}\left(  \boldsymbol{\beta}\right)
=\frac{1}{n_{h}}%
%TCIMACRO{\dsum \limits_{i=1}^{n_{h}}}%
%BeginExpansion
{\displaystyle\sum\limits_{i=1}^{n_{h}}}
%EndExpansion
w_{hi}m_{hi}\boldsymbol{\Delta}(\boldsymbol{\pi}_{hi}^{\ast}\left(
\boldsymbol{\beta}\right)  )\otimes\boldsymbol{x}_{hi}\boldsymbol{x}_{hi}%
^{T},
\]%
\[
\mathbf{G}_{n}\left(  \boldsymbol{\beta}\right)  =\frac{1}{n}%
%TCIMACRO{\dsum \limits_{h=1}^{H}}%
%BeginExpansion
{\displaystyle\sum\limits_{h=1}^{H}}
%EndExpansion%
%TCIMACRO{\dsum \limits_{i=1}^{n_{h}}}%
%BeginExpansion
{\displaystyle\sum\limits_{i=1}^{n_{h}}}
%EndExpansion
\boldsymbol{V}[\boldsymbol{U}_{hi}\left(  \boldsymbol{\beta}\right)
],\text{\quad}\mathbf{G}_{n_{h}}^{(h)}\left(  \boldsymbol{\beta}\right)
=\frac{1}{n_{h}}%
%TCIMACRO{\dsum \limits_{i=1}^{n_{h}}}%
%BeginExpansion
{\displaystyle\sum\limits_{i=1}^{n_{h}}}
%EndExpansion
\boldsymbol{V}[\boldsymbol{U}_{hi}\left(  \boldsymbol{\beta}\right)
],\text{\quad}\boldsymbol{V}[\boldsymbol{U}_{hi}\left(  \boldsymbol{\beta
}\right)  ]=w_{hi}^{2}\boldsymbol{V}[\widehat{\boldsymbol{Y}}_{hi}^{\ast
}]\otimes\boldsymbol{x}_{hi}\boldsymbol{x}_{hi}^{T},
\]
$\mathbf{H}\left(  \boldsymbol{\beta}\right)  $ is the Fisher information
matrix, $\boldsymbol{V}[\boldsymbol{\cdot}]$ denotes the variance-covariance
matrix of a random vector and $\boldsymbol{U}_{hi}\left(  \boldsymbol{\beta
}\right)  $ is the random variable generator of $\boldsymbol{u}_{hi}\left(
\boldsymbol{\beta}\right)  $, given by (\ref{Un2}).
\end{theorem}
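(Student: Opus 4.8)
The plan is to treat $\widehat{\boldsymbol{\beta}}_{\phi,P}$ as an $M$-estimator defined implicitly by the estimating equations $\boldsymbol{u}_{\phi}(\boldsymbol{\beta})=\boldsymbol{0}_{dk}$ of Theorem \ref{Th0}, and to derive its limit law from a mean-value expansion of $\boldsymbol{u}_{\phi}$ around the true value $\boldsymbol{\beta}_{0}$. Granting consistency of $\widehat{\boldsymbol{\beta}}_{\phi,P}$, the expansion between $\widehat{\boldsymbol{\beta}}_{\phi,P}$ and $\boldsymbol{\beta}_{0}$ gives
\[
\boldsymbol{0}_{dk}=\tfrac{1}{\sqrt{n}}\boldsymbol{u}_{\phi}(\boldsymbol{\beta}_{0})+\left[\tfrac{1}{n}\tfrac{\partial\boldsymbol{u}_{\phi}(\bar{\boldsymbol{\beta}})}{\partial\boldsymbol{\beta}^{T}}\right]\sqrt{n}(\widehat{\boldsymbol{\beta}}_{\phi,P}-\boldsymbol{\beta}_{0})
\]
for some $\bar{\boldsymbol{\beta}}$ on the segment joining the two points, so that $\sqrt{n}(\widehat{\boldsymbol{\beta}}_{\phi,P}-\boldsymbol{\beta}_{0})=-[\tfrac{1}{n}\partial\boldsymbol{u}_{\phi}(\bar{\boldsymbol{\beta}})/\partial\boldsymbol{\beta}^{T}]^{-1}\tfrac{1}{\sqrt{n}}\boldsymbol{u}_{\phi}(\boldsymbol{\beta}_{0})$. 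The statement will then follow by identifying the probability limit of the bracketed Jacobian with $-\mathbf{H}(\boldsymbol{\beta}_{0})$, the limit law of the normalized score $\tfrac{1}{\sqrt{n}}\boldsymbol{u}_{\phi}(\boldsymbol{\beta}_{0})$ with $\mathcal{N}(\boldsymbol{0}_{dk},\mathbf{G}(\boldsymbol{\beta}_{0}))$, and applying Slutsky's theorem.

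The first key step is to show that the normalized score does not depend on $\phi$ to first order, so that it may be replaced by the pseudo-likelihood score $\boldsymbol{u}(\boldsymbol{\beta}_{0})=\sum_{h,i}\boldsymbol{u}_{hi}(\boldsymbol{\beta}_{0})$ of (\ref{Un2}). Since $\phi\in\Phi$ satisfies $\phi(1)=\phi^{\prime}(1)=0$ and $\phi^{\prime\prime}(1)>0$, the function $f_{\phi,his}$ of (\ref{3.010b}) obeys $f_{\phi,his}(\pi_{his},\boldsymbol{\beta}_{0})=0$ and $\partial f_{\phi,his}/\partial x|_{x=\pi_{his}}=\phi^{\prime\prime}(1)/\pi_{his}$, so a Taylor expansion in $\widehat{y}_{his}/m_{hi}$ about $\pi_{his}(\boldsymbol{\beta}_{0})$ gives $\boldsymbol{f}_{\phi,hi}(\tfrac{\widehat{\boldsymbol{y}}_{hi}}{m_{hi}},\boldsymbol{\beta}_{0})=\phi^{\prime\prime}(1)\,\mathrm{diag}^{-1}(\boldsymbol{\pi}_{hi}(\boldsymbol{\beta}_{0}))(\tfrac{\widehat{\boldsymbol{y}}_{hi}}{m_{hi}}-\boldsymbol{\pi}_{hi}(\boldsymbol{\beta}_{0}))$ plus a remainder. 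Substituting into (\ref{uu})--(\ref{der}) the factor $\phi^{\prime\prime}(1)$ cancels and $\boldsymbol{\Delta}(\boldsymbol{\pi}_{hi})\,\mathrm{diag}^{-1}(\boldsymbol{\pi}_{hi})=\boldsymbol{I}_{d+1}-\boldsymbol{\pi}_{hi}\boldsymbol{1}_{d+1}^{T}$; because $\boldsymbol{1}_{d+1}^{T}(\tfrac{\widehat{\boldsymbol{y}}_{hi}}{m_{hi}}-\boldsymbol{\pi}_{hi})=0$ (counts and probabilities both sum to one), the projection term drops and the leading part of $\boldsymbol{u}_{\phi,hi}(\boldsymbol{\beta}_{0})$ is exactly $w_{hi}(\widehat{\boldsymbol{y}}_{hi}^{\ast}-m_{hi}\boldsymbol{\pi}_{hi}^{\ast})\otimes\boldsymbol{x}_{hi}=\boldsymbol{u}_{hi}(\boldsymbol{\beta}_{0})$. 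Hence $\tfrac{1}{\sqrt{n}}\boldsymbol{u}_{\phi}(\boldsymbol{\beta}_{0})=\tfrac{1}{\sqrt{n}}\boldsymbol{u}(\boldsymbol{\beta}_{0})+o_{P}(1)$, which is why the limiting variance is free of $\phi$.

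Next I would establish the central limit theorem for $\tfrac{1}{\sqrt{n}}\boldsymbol{u}(\boldsymbol{\beta}_{0})=\tfrac{1}{\sqrt{n}}\sum_{h=1}^{H}\sum_{i=1}^{n_{h}}\boldsymbol{U}_{hi}(\boldsymbol{\beta}_{0})$. Because $\mathrm{E}[\widehat{\boldsymbol{Y}}_{hi}^{\ast}]=m_{hi}\boldsymbol{\pi}_{hi}^{\ast}(\boldsymbol{\beta}_{0})$, each $\boldsymbol{U}_{hi}(\boldsymbol{\beta}_{0})$ has mean $\boldsymbol{0}_{dk}$ and, with $\boldsymbol{x}_{hi}$ fixed and $\mathrm{Cov}(\boldsymbol{a}\otimes\boldsymbol{x},\boldsymbol{a}\otimes\boldsymbol{x})=\boldsymbol{V}[\boldsymbol{a}]\otimes\boldsymbol{x}\boldsymbol{x}^{T}$, covariance $\boldsymbol{V}[\boldsymbol{U}_{hi}(\boldsymbol{\beta}_{0})]=w_{hi}^{2}\boldsymbol{V}[\widehat{\boldsymbol{Y}}_{hi}^{\ast}]\otimes\boldsymbol{x}_{hi}\boldsymbol{x}_{hi}^{T}$, exactly the summand defining $\mathbf{G}_{n_{h}}^{(h)}$. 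Clusters are independent within a stratum and strata are mutually independent, so a Lindeberg--Lyapunov CLT for independent non-identically distributed arrays gives $\tfrac{1}{\sqrt{n_{h}}}\sum_{i=1}^{n_{h}}\boldsymbol{U}_{hi}(\boldsymbol{\beta}_{0})\overset{\mathcal{L}}{\rightarrow}\mathcal{N}(\boldsymbol{0}_{dk},\lim_{n_{h}\rightarrow\infty}\mathbf{G}_{n_{h}}^{(h)}(\boldsymbol{\beta}_{0}))$. Writing $\tfrac{1}{\sqrt{n}}\boldsymbol{u}(\boldsymbol{\beta}_{0})=\sum_{h}\sqrt{n_{h}/n}\,\tfrac{1}{\sqrt{n_{h}}}\sum_{i}\boldsymbol{U}_{hi}(\boldsymbol{\beta}_{0})$ and using $n_{h}/n\rightarrow\eta_{h}^{\ast}$ together with independence across strata yields $\mathcal{N}(\boldsymbol{0}_{dk},\sum_{h}\eta_{h}^{\ast}\lim_{n_{h}\rightarrow\infty}\mathbf{G}_{n_{h}}^{(h)}(\boldsymbol{\beta}_{0}))=\mathcal{N}(\boldsymbol{0}_{dk},\mathbf{G}(\boldsymbol{\beta}_{0}))$.

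For the Jacobian, differentiating $\boldsymbol{u}_{\phi,hi}$ in (\ref{3.010}) produces a term where $\boldsymbol{f}_{\phi,hi}$ is differentiated and a term where $\boldsymbol{\Delta}(\boldsymbol{\pi}_{hi}^{\ast})$ is differentiated and multiplies $\boldsymbol{f}_{\phi,hi}$ itself; at $\boldsymbol{\beta}_{0}$ the latter is driven by the mean-zero quantity with $\boldsymbol{f}_{\phi,hi}(\boldsymbol{\pi}_{hi},\boldsymbol{\beta}_{0})=\boldsymbol{0}$ and averages out, while the former reduces, after the same $\phi^{\prime\prime}(1)$ cancellation, to $-w_{hi}m_{hi}\boldsymbol{\Delta}(\boldsymbol{\pi}_{hi}^{\ast}(\boldsymbol{\beta}_{0}))\otimes\boldsymbol{x}_{hi}\boldsymbol{x}_{hi}^{T}$, the summand of $\mathbf{H}_{n}$. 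A weak law of large numbers over independent clusters then gives $\tfrac{1}{n}\partial\boldsymbol{u}_{\phi}(\boldsymbol{\beta}_{0})/\partial\boldsymbol{\beta}^{T}\overset{P}{\rightarrow}-\mathbf{H}(\boldsymbol{\beta}_{0})$, and consistency plus continuity transfers this to $\bar{\boldsymbol{\beta}}$; Slutsky's theorem then delivers $\mathbf{H}^{-1}(\boldsymbol{\beta}_{0})\mathbf{G}(\boldsymbol{\beta}_{0})\mathbf{H}^{-1}(\boldsymbol{\beta}_{0})$. The main obstacle I anticipate is controlling the Taylor remainder uniformly so that the $\phi$-dependent higher-order contributions in $\boldsymbol{f}_{\phi,hi}$ are genuinely negligible after dividing the score by $\sqrt{n}$ and the Jacobian by $n$, since this is precisely what makes the sandwich $\phi$-free; subsidiary points are verifying the Lindeberg/Lyapunov condition and the existence of $\mathbf{H}(\boldsymbol{\beta}_{0})$, $\mathbf{G}(\boldsymbol{\beta}_{0})$ for the heterogeneous weighted and possibly overdispersed cluster totals $\widehat{\boldsymbol{Y}}_{hi}^{\ast}$, together with an independent consistency argument for $\widehat{\boldsymbol{\beta}}_{\phi,P}$ that justifies the mean-value expansion.
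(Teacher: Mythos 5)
Your proposal is correct and takes essentially the same route as the paper: the paper invokes Binder's (1983) linearization method (your mean-value expansion), and its central step is exactly your key step, namely the Taylor expansion of $f_{\phi,his}$ around $\pi_{his}(\boldsymbol{\beta}_{0})$ so that $\phi^{\prime\prime}(1)$ cancels and $\tfrac{1}{\sqrt{n}}\boldsymbol{U}_{\phi}(\boldsymbol{\beta}_{0})=\tfrac{1}{\sqrt{n}}\boldsymbol{U}(\boldsymbol{\beta}_{0})+o_{p}(\boldsymbol{1}_{dk})$, after which $\mathbf{H}$ and $\mathbf{G}$ are identified just as you describe. The regularity details you flag (Lindeberg conditions, uniform remainder control, consistency of the estimator) are likewise left implicit in the paper, which delegates them to Binder's method and the CLT cited from Rao (1973).
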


\begin{proof}
From Theorem \ref{Th0} and by following the same steps of the linearization
method of Binder (1983),
\[
\mathbf{G}\left(  \boldsymbol{\beta}\right)  =\lim_{n\rightarrow\infty
}\boldsymbol{V}[\tfrac{1}{\sqrt{n}}\boldsymbol{U}_{\phi}\left(
\boldsymbol{\beta}\right)  ]\quad\text{and}\quad\mathbf{H}\left(
\boldsymbol{\beta}\right)  =-\lim_{n\rightarrow\infty}\frac{1}{n}%
\frac{\partial\boldsymbol{U}_{\phi}^{T}\left(  \boldsymbol{\beta}\right)
}{\partial\boldsymbol{\beta}},
\]
where $\boldsymbol{U}_{\phi}\left(  \boldsymbol{\beta}\right)  $ is the random
vector generator of $\boldsymbol{u}_{\phi}\left(  \boldsymbol{\beta}\right)
$, given by (\ref{3.09}). Taking into account that $f_{\phi,his}(\pi
_{his}(\boldsymbol{\beta}),\boldsymbol{\beta})=0$ and $f_{\phi,his}^{\prime
}(\pi_{his}(\boldsymbol{\beta}),\boldsymbol{\beta})=\frac{1}{\pi
_{his}(\boldsymbol{\beta})}\phi^{\prime\prime}\left(  1\right)  $, a first
Taylor expansion of $f_{\phi,his}(\tfrac{\widehat{Y}_{his}}{m_{hi}%
},\boldsymbol{\beta})$ given in (\ref{3.010b}) is%
\begin{align}
f_{\phi,his}(\tfrac{\widehat{Y}_{his}}{m_{hi}},\boldsymbol{\beta})  &
=f_{\phi,his}(\pi_{his}(\boldsymbol{\beta}),\boldsymbol{\beta})+f_{\phi
,his}^{\prime}(\pi_{his}(\boldsymbol{\beta}),\boldsymbol{\beta})(\tfrac
{\widehat{Y}_{his}}{m_{hi}}-\pi_{his}(\boldsymbol{\beta}))+o(\tfrac
{\widehat{Y}_{his}}{m_{hi}}-\pi_{his}(\boldsymbol{\beta}))\nonumber\\
&  =\frac{\phi^{\prime\prime}\left(  1\right)  }{\pi_{his}(\boldsymbol{\beta
})}(\tfrac{Y_{his}}{m_{hi}}-\pi_{his}(\boldsymbol{\beta}))+o(\tfrac
{\widehat{Y}_{his}}{m_{hi}}-\pi_{his}(\boldsymbol{\beta})), \label{ff}%
\end{align}
i.e.%
\[
\boldsymbol{f}_{\phi,hi}(\tfrac{\widehat{\boldsymbol{Y}}_{hi}}{m_{hi}%
},\boldsymbol{\beta})=\phi^{\prime\prime}\left(  1\right)  \mathrm{diag}%
^{-1}(\boldsymbol{\pi}_{hi}(\boldsymbol{\beta}))(\tfrac
{\widehat{\boldsymbol{Y}}_{hi}}{m_{hi}}-\boldsymbol{\pi}_{hi}%
(\boldsymbol{\beta}))+o\left(  \boldsymbol{1}_{d+1}\left\Vert \tfrac
{\widehat{\boldsymbol{Y}}_{hi}}{m_{hi}}-\boldsymbol{\pi}_{hi}%
(\boldsymbol{\beta})\right\Vert \right)  ,
\]
and hence from (\ref{uu})%
\[
\frac{1}{\sqrt{n}}\boldsymbol{U}_{\phi}\left(  \boldsymbol{\beta}\right)
=\frac{1}{\sqrt{n}}\sum\limits_{h=1}^{H}\sum\limits_{i=1}^{n_{h}}w_{hi}%
m_{hi}\frac{\partial\boldsymbol{\pi}_{hi}^{T}(\boldsymbol{\beta})}%
{\partial\boldsymbol{\beta}}\mathrm{diag}^{-1}(\boldsymbol{\pi}_{hi}%
(\boldsymbol{\beta}))(\tfrac{\widehat{\boldsymbol{Y}}_{hi}}{m_{hi}%
}-\boldsymbol{\pi}_{hi}(\boldsymbol{\beta}))+%
%TCIMACRO{\dsum \limits_{h=1}^{H}}%
%BeginExpansion
{\displaystyle\sum\limits_{h=1}^{H}}
%EndExpansion
\sqrt{\eta_{h}^{\ast}}o\left(  \boldsymbol{1}_{dk}\left\Vert \frac{1}%
{\sqrt{n_{h}}}\left(
%TCIMACRO{\dsum _{i=1}^{n_{h}}}%
%BeginExpansion
{\displaystyle\sum_{i=1}^{n_{h}}}
%EndExpansion
\widehat{\boldsymbol{Y}}_{hi}-%
%TCIMACRO{\dsum _{i=1}^{n_{h}}}%
%BeginExpansion
{\displaystyle\sum_{i=1}^{n_{h}}}
%EndExpansion
m_{hi}\boldsymbol{\pi}_{hi}(\boldsymbol{\beta})\right)  \right\Vert \right)
.
\]
From the Central Limit Theorem given in Rao (1973, page 147)
\[
\frac{1}{\sqrt{n_{h}}}\left(
%TCIMACRO{\dsum _{i=1}^{n_{h}}}%
%BeginExpansion
{\displaystyle\sum_{i=1}^{n_{h}}}
%EndExpansion
\widehat{\boldsymbol{Y}}_{hi}-%
%TCIMACRO{\dsum _{i=1}^{n_{h}}}%
%BeginExpansion
{\displaystyle\sum_{i=1}^{n_{h}}}
%EndExpansion
m_{hi}\boldsymbol{\pi}_{hi}(\boldsymbol{\beta})\right)  \underset{n_{h}%
\rightarrow\infty}{\overset{\mathcal{L}}{\longrightarrow}}\mathcal{N}%
(\boldsymbol{0}_{d+1},\lim_{n_{h}\rightarrow\infty}\tfrac{1}{n_{h}}%
%TCIMACRO{\tsum _{i=1}^{n_{h}}}%
%BeginExpansion
{\textstyle\sum_{i=1}^{n_{h}}}
%EndExpansion
\boldsymbol{V}[\widehat{\boldsymbol{Y}}_{hi}]),
\]
then
\[
o\left(  \boldsymbol{1}_{dk}\left\Vert \frac{1}{\sqrt{n_{h}}}\left(
%TCIMACRO{\dsum _{i=1}^{n_{h}}}%
%BeginExpansion
{\displaystyle\sum_{i=1}^{n_{h}}}
%EndExpansion
\widehat{\boldsymbol{Y}}_{hi}-%
%TCIMACRO{\dsum _{i=1}^{n_{h}}}%
%BeginExpansion
{\displaystyle\sum_{i=1}^{n_{h}}}
%EndExpansion
m_{hi}\boldsymbol{\pi}_{hi}(\boldsymbol{\beta})\right)  \right\Vert \right)
=o\left(  o_{p}(\boldsymbol{1}_{dk})\right)  =o_{p}(\boldsymbol{1}_{dk}),
\]
and thus%
\[
\frac{1}{\sqrt{n}}\boldsymbol{U}_{\phi}\left(  \boldsymbol{\beta}\right)
=\frac{1}{\sqrt{n}}\sum\limits_{h=1}^{H}\sum\limits_{i=1}^{n_{h}}w_{hi}%
\frac{\partial\log\boldsymbol{\pi}_{hi}^{T}(\boldsymbol{\beta})}%
{\partial\boldsymbol{\beta}}(\widehat{\boldsymbol{y}}_{hi}-m_{hi}%
\boldsymbol{\pi}_{hi}(\boldsymbol{\beta}))+o_{p}(\boldsymbol{1}_{dk}).
\]
Since%
\begin{align*}
\frac{\partial\log\boldsymbol{\pi}_{hi}^{T}(\boldsymbol{\beta})}%
{\partial\boldsymbol{\beta}}\boldsymbol{\pi}_{hi}(\boldsymbol{\beta})  &
=\frac{\partial\boldsymbol{\pi}_{hi}^{T}(\boldsymbol{\beta})}{\partial
\boldsymbol{\beta}}\mathrm{diag}^{-1}(\boldsymbol{\pi}_{hi}(\boldsymbol{\beta
}))\boldsymbol{\pi}_{hi}(\boldsymbol{\beta})\\
&  =\frac{\partial\boldsymbol{\pi}_{hi}^{T}(\boldsymbol{\beta})}%
{\partial\boldsymbol{\beta}}\boldsymbol{1}_{d+1}=\frac{\partial\left(
\boldsymbol{\pi}_{hi}^{T}(\boldsymbol{\beta})\boldsymbol{1}_{d+1}\right)
}{\partial\boldsymbol{\beta}}=\boldsymbol{0}_{dk},
\end{align*}%
\begin{align*}
\frac{\partial\log\boldsymbol{\pi}_{hi}^{T}(\boldsymbol{\beta})}%
{\partial\boldsymbol{\beta}}\widehat{\boldsymbol{y}}_{hi}  &  =\frac
{\partial\boldsymbol{\pi}_{hi}^{T}(\boldsymbol{\beta})}{\partial
\boldsymbol{\beta}}\mathrm{diag}^{-1}(\boldsymbol{\pi}_{hi}(\boldsymbol{\beta
}))\widehat{\boldsymbol{Y}}_{hi}\\
&  =\left(  \left(  \boldsymbol{I}_{d\times d},\boldsymbol{0}_{d\times
1}\right)  \boldsymbol{\Delta}(\boldsymbol{\pi}_{hi}\left(  \boldsymbol{\beta
}\right)  )\otimes\boldsymbol{x}_{hi}\right)  \mathrm{diag}^{-1}%
(\boldsymbol{\pi}_{hi}(\boldsymbol{\beta}))\widehat{\boldsymbol{Y}}_{hi}\\
&  =\left(  \boldsymbol{I}_{d\times d},\boldsymbol{0}_{d\times1}\right)
\boldsymbol{\Delta}(\boldsymbol{\pi}_{hi}\left(  \boldsymbol{\beta}\right)
)\mathrm{diag}^{-1}(\boldsymbol{\pi}_{hi}(\boldsymbol{\beta}%
))\widehat{\boldsymbol{Y}}_{hi}\otimes\boldsymbol{x}_{hi}\\
&  =\left(  \boldsymbol{I}_{d\times d},\boldsymbol{0}_{d\times1}\right)
\left(  \widehat{\boldsymbol{Y}}_{hi}-\boldsymbol{\pi}_{hi}\left(
\boldsymbol{\beta}\right)  \boldsymbol{\pi}_{hi}\left(  \boldsymbol{\beta
}\right)  ^{T}\mathrm{diag}^{-1}\left(  \pi_{hi}\left(  \boldsymbol{\beta
}\right)  \right)  \widehat{\boldsymbol{Y}}_{hi}\right)  \otimes
\boldsymbol{x}_{hi}\\
&  =\left(  \boldsymbol{I}_{d\times d},\boldsymbol{0}_{d\times1}\right)
\left(  \widehat{\boldsymbol{Y}}_{hi}-m_{hi}\boldsymbol{\pi}_{hi}\left(
\boldsymbol{\beta}\right)  \right)  \otimes\boldsymbol{x}_{hi}\\
&  =\left(  \widehat{\boldsymbol{y}}_{hi}^{\ast}-m_{hi}\boldsymbol{\pi}%
_{hi}^{\ast}\left(  \boldsymbol{\beta}\right)  \right)  \otimes\boldsymbol{x}%
_{hi},
\end{align*}
it follows that%
\begin{equation}
\frac{1}{\sqrt{n}}\boldsymbol{U}_{\phi}\left(  \boldsymbol{\beta}\right)
=\frac{1}{\sqrt{n}}\sum\limits_{h=1}^{H}\sum\limits_{i=1}^{n_{h}}w_{hi}\left(
\widehat{\boldsymbol{y}}_{hi}^{\ast}-m_{hi}\boldsymbol{\pi}_{hi}^{\ast
}(\boldsymbol{\beta})\right)  \otimes\boldsymbol{x}_{hi}+o_{p}(\boldsymbol{1}%
_{dk}), \label{Texp}%
\end{equation}
Then $\mathbf{H}\left(  \boldsymbol{\beta}_{0}\right)  $ is the limit of%
\begin{align*}
-\frac{1}{n}\frac{\partial}{\partial\boldsymbol{\beta}}\boldsymbol{U}_{\phi
}^{T}\left(  \boldsymbol{\beta}\right)   &  =\frac{1}{n}\sum\limits_{h=1}%
^{H}\sum\limits_{i=1}^{n_{h}}w_{hi}m_{hi}\frac{\partial}{\partial
\boldsymbol{\beta}}\boldsymbol{\pi}_{hi}^{\ast}(\boldsymbol{\beta}%
)\otimes\boldsymbol{x}_{hi}+o_{p}(\boldsymbol{1}_{dk\times dk})\\
&  =\frac{1}{n}\sum\limits_{h=1}^{H}\sum\limits_{i=1}^{n_{h}}w_{hi}%
m_{hi}\boldsymbol{\Delta}(\boldsymbol{\pi}_{hi}^{\ast}\left(
\boldsymbol{\beta}\right)  )\otimes\boldsymbol{x}_{hi}+o_{p}(\boldsymbol{1}%
_{dk\times dk}),
\end{align*}
as $n$ increases, and hence $\mathbf{H}\left(  \boldsymbol{\beta}\right)
=\lim_{n\rightarrow\infty}\mathbf{H}_{n}\left(  \boldsymbol{\beta}\right)  $.
On the other hand, from (\ref{Texp}) it follows that%
\[
\frac{1}{\sqrt{n}}\boldsymbol{U}_{\phi}\left(  \boldsymbol{\beta}\right)
=\frac{1}{\sqrt{n}}\boldsymbol{U}\left(  \boldsymbol{\beta}\right)
+o_{p}(\boldsymbol{1}_{dk}),
\]
and this justifies that $\mathbf{G}\left(  \boldsymbol{\beta}\right)
=\lim_{n\rightarrow\infty}\mathbf{G}_{n}\left(  \boldsymbol{\beta}\right)  $.
\end{proof}

The following result justifies how to estimate $\mathbf{G}_{n}\left(
\boldsymbol{\beta}\right)  $, in particular $\widehat{\mathbf{G}}%
_{n}(\widehat{\boldsymbol{\beta}}_{P})$ given in (\ref{GHat0}), which is
provided by the \texttt{SURVEYLOGISTIC} procedure\ of \texttt{SAS}.

\begin{remark}
\label{Th2}Matrix $\mathbf{G}\left(  \boldsymbol{\beta}_{0}\right)  $ of
Theorem \ref{Th1} can be consistently estimated as%
\begin{equation}
\widehat{\mathbf{G}}_{n}(\widehat{\boldsymbol{\beta}}_{\phi,P})=\frac{1}{n}%
%TCIMACRO{\dsum \limits_{h=1}^{H}}%
%BeginExpansion
{\displaystyle\sum\limits_{h=1}^{H}}
%EndExpansion%
%TCIMACRO{\dsum \limits_{i=1}^{n_{h}}}%
%BeginExpansion
{\displaystyle\sum\limits_{i=1}^{n_{h}}}
%EndExpansion
\left(  \boldsymbol{u}_{hi}(\widehat{\boldsymbol{\beta}}_{\phi,P})-\tfrac
{1}{n}\boldsymbol{u}(\widehat{\boldsymbol{\beta}}_{\phi,P})\right)  \left(
\boldsymbol{u}_{hi}(\widehat{\boldsymbol{\beta}}_{\phi,P})-\tfrac{1}%
{n}\boldsymbol{u}(\widehat{\boldsymbol{\beta}}_{\phi,P})\right)  ^{T}
\label{GHat}%
\end{equation}
with $\widehat{\boldsymbol{\beta}}_{\phi,P}$ being any pseudo minimum
phi-divergence estimator of parameter $\boldsymbol{\beta}$. In particular, if
$\phi(x)=x\log x-x+1$,%
\begin{equation}
\widehat{\mathbf{G}}_{n}(\widehat{\boldsymbol{\beta}}_{P})=\frac{1}{n}%
%TCIMACRO{\dsum \limits_{h=1}^{H}}%
%BeginExpansion
{\displaystyle\sum\limits_{h=1}^{H}}
%EndExpansion%
%TCIMACRO{\dsum \limits_{i=1}^{n_{h}}}%
%BeginExpansion
{\displaystyle\sum\limits_{i=1}^{n_{h}}}
%EndExpansion
\boldsymbol{u}_{hi}(\widehat{\boldsymbol{\beta}}_{P})\boldsymbol{u}_{hi}%
^{T}(\widehat{\boldsymbol{\beta}}_{P}), \label{GHat0}%
\end{equation}
since $\boldsymbol{u}(\widehat{\boldsymbol{\beta}}_{P})=\boldsymbol{0}_{dk}$.
On the other hand, matrix $\mathbf{H}\left(  \boldsymbol{\beta}_{0}\right)  $
of Theorem \ref{Th1} can be consistently estimated as%
\[
\mathbf{H}_{n}(\widehat{\boldsymbol{\beta}}_{\phi,P})=\frac{1}{n}%
%TCIMACRO{\dsum \limits_{h=1}^{H}}%
%BeginExpansion
{\displaystyle\sum\limits_{h=1}^{H}}
%EndExpansion%
%TCIMACRO{\dsum \limits_{i=1}^{n_{h}}}%
%BeginExpansion
{\displaystyle\sum\limits_{i=1}^{n_{h}}}
%EndExpansion
w_{hi}m_{hi}\boldsymbol{\Delta}(\boldsymbol{\pi}_{hi}^{\ast}%
(\widehat{\boldsymbol{\beta}}_{\phi,P}))\otimes\boldsymbol{x}_{hi}%
\boldsymbol{x}_{hi}^{T}.
\]

\end{remark}

Let $\widehat{\boldsymbol{\beta}}_{\phi}$ denote the minimum phi-divergence
estimator of $\boldsymbol{\beta}$ for simple random sampling within each
cluster, i.e. multinomial sampling. By following Gupta and Pardo (2007), it
can be seen that
\[
\lim_{n\rightarrow\infty}\boldsymbol{V}[\sqrt{n}\widehat{\boldsymbol{\beta}%
}_{\phi}]=\mathbf{H}^{-1}\left(  \boldsymbol{\beta}_{0}\right)  .
\]
The \textquotedblleft design effect matrix\textquotedblright\ for the
multinomial logistic regression model with sample survey design is defined as
$\lim_{n\rightarrow\infty}\boldsymbol{V}[\sqrt{n}\widehat{\boldsymbol{\beta}%
}_{\phi,P}]\boldsymbol{V}^{-1}[\sqrt{n}\widehat{\boldsymbol{\beta}}_{\phi
}]=\mathbf{H}^{-1}\left(  \boldsymbol{\beta}_{0}\right)  \mathbf{G}\left(
\boldsymbol{\beta}_{0}\right)  $ and the \textquotedblleft design
effect\textquotedblright, denoted by $\nu$, for the multinomial logistic
regression model with sample survey design is defined as $\nu\mathbf{(}%
\boldsymbol{\beta}_{0})=\frac{1}{dk}\mathrm{trace}\left(  \mathbf{H}%
^{-1}\mathbf{(}\boldsymbol{\beta}_{0})\mathbf{G(}\boldsymbol{\beta}%
_{0})\right)  $. In practice, $\mathbf{H(}\boldsymbol{\beta}_{0})$ and
$\mathbf{G(}\boldsymbol{\beta}_{0})$\ can be consistently estimated through
the pseudo minimum phi-divergence estimator of parameter $\boldsymbol{\beta}$
as%
\[
\mathbf{H}_{n}(\widehat{\boldsymbol{\beta}}_{\phi,P})=\frac{1}{n}%
%TCIMACRO{\dsum \limits_{h=1}^{H}}%
%BeginExpansion
{\displaystyle\sum\limits_{h=1}^{H}}
%EndExpansion%
%TCIMACRO{\dsum \limits_{i=1}^{n_{h}}}%
%BeginExpansion
{\displaystyle\sum\limits_{i=1}^{n_{h}}}
%EndExpansion
w_{hi}m_{hi}\boldsymbol{\Delta}(\boldsymbol{\pi}_{hi}^{\ast}%
(\widehat{\boldsymbol{\beta}}_{\phi,P}))\otimes\boldsymbol{x}_{hi}%
\boldsymbol{x}_{hi}^{T},
\]
and $\widehat{\mathbf{G}}_{n}(\widehat{\boldsymbol{\beta}}_{\phi,P})$\ given
in (\ref{GHat}). For more details about the design matrix in other models see
for instance Rao and Scott (1984) or formula 7.6 in Rao and Thomas (1989).

\begin{definition}
A consistent estimator of the design effect matrix, $\mathbf{H}^{-1}\left(
\boldsymbol{\beta}\right)  \mathbf{G}\left(  \boldsymbol{\beta}\right)  $,
based on the linearization method of Binder (1983) and the pseudo minimum
phi-divergence estimator of parameter $\boldsymbol{\beta}$, is%
\begin{align*}
\mathbf{H}_{n}^{-1}(\widehat{\boldsymbol{\beta}}_{\phi,P})\widehat{\mathbf{G}%
}_{n}(\widehat{\boldsymbol{\beta}}_{\phi,P})  &  =\left(
%TCIMACRO{\dsum \limits_{h=1}^{H}}%
%BeginExpansion
{\displaystyle\sum\limits_{h=1}^{H}}
%EndExpansion%
%TCIMACRO{\dsum \limits_{i=1}^{n_{h}}}%
%BeginExpansion
{\displaystyle\sum\limits_{i=1}^{n_{h}}}
%EndExpansion
w_{hi}m_{hi}\boldsymbol{\Delta}(\boldsymbol{\pi}_{hi}^{\ast}%
(\widehat{\boldsymbol{\beta}}_{\phi,P}))\otimes\boldsymbol{x}_{hi}%
\boldsymbol{x}_{hi}^{T}\right)  ^{-1}\\
&  \times%
%TCIMACRO{\dsum \limits_{h=1}^{H}}%
%BeginExpansion
{\displaystyle\sum\limits_{h=1}^{H}}
%EndExpansion%
%TCIMACRO{\dsum \limits_{i=1}^{n_{h}}}%
%BeginExpansion
{\displaystyle\sum\limits_{i=1}^{n_{h}}}
%EndExpansion
\left(  \boldsymbol{u}_{hi}(\widehat{\boldsymbol{\beta}}_{\phi,P})-\tfrac
{1}{n}\boldsymbol{u}(\widehat{\boldsymbol{\beta}}_{\phi,P})\right)  \left(
\boldsymbol{u}_{hi}(\widehat{\boldsymbol{\beta}}_{\phi,P})-\tfrac{1}%
{n}\boldsymbol{u}(\widehat{\boldsymbol{\beta}}_{\phi,P})\right)  ^{T}.
\end{align*}
Similarly, a consistent estimator of the design effect, $\nu\left(
\boldsymbol{\beta}_{0}\right)  =\frac{1}{dk}\mathrm{trace}\left(
\mathbf{H}^{-1}\left(  \boldsymbol{\beta}_{0}\right)  \mathbf{G}\left(
\boldsymbol{\beta}_{0}\right)  \right)  $, based on the linearization method
of Binder (1983) and the pseudo minimum phi-divergence estimator of parameter
$\boldsymbol{\beta}$, is%
\begin{equation}
\widehat{\nu}(\widehat{\boldsymbol{\beta}}_{\phi,P})=\frac{1}{dk}%
\mathrm{trace}\left(  \mathbf{H}_{n}^{-1}(\widehat{\boldsymbol{\beta}}%
_{\phi,P})\widehat{\mathbf{G}}_{n}(\widehat{\boldsymbol{\beta}}_{\phi
,P})\right)  . \label{eff}%
\end{equation}

\end{definition}

The estimator of the design effect is specially interesting for clusters such
that
\begin{align}
\boldsymbol{E}[\widehat{\boldsymbol{Y}}_{hi}]  &  =m_{h}\boldsymbol{\pi}%
_{hi}\left(  \boldsymbol{\beta}_{0}\right)  \quad\text{and}\quad
\boldsymbol{V}[\widehat{\boldsymbol{Y}}_{hi}]=\nu_{m_{h}}m_{h}%
\boldsymbol{\Delta}(\boldsymbol{\pi}_{hi}\left(  \boldsymbol{\beta}%
_{0}\right)  ),\label{multOver}\\
\nu_{m_{h}}  &  =1+\rho_{h}^{2}(m_{h}-1),\nonumber
\end{align}
with $\nu_{m_{h}}$ being the overdispersion parameter,$\ \rho_{h}^{2}$\ being
the intra-cluster correlation coefficient and equal cluster sizes in the
strata, $m_{hi}=m_{h}$, $h=1,...,H$, $i=1,...,n_{h}$. Examples of
distributions of $\widehat{\boldsymbol{y}}_{hi}$ verifying (\ref{multOver})
are the so-called \textquotedblleft overdispersed multinomial
distributions\textquotedblright\ (see Alonso et al. (2016)). For these
distributions, once the pseudo minimum phi-divergence estimator of parameter
$\boldsymbol{\beta}$, $\widehat{\boldsymbol{\beta}}_{\phi,P}$, is obtained,
the interest lies in estimating $\rho_{h}^{2}$. In Theorems \ref{Th3} and
\ref{Th4} two proposals of families of estimates for $\nu_{m_{h}}$ and
$\rho_{h}^{2}$ are established. Both proposals are independent of the weights
except for $\widehat{\boldsymbol{\beta}}_{\phi,P}$, and this fact has a
logical explanation taking into account that the weights are constructed only
for estimation of $\boldsymbol{\beta}$.

\begin{theorem}
\label{Th3}Let $\widehat{\boldsymbol{\beta}}_{\phi,P}$ the pseudo minimum
phi-divergence estimate of parameter $\boldsymbol{\beta}$ for a multinomial
logistic regression model with \textquotedblleft overdispersed multinomial
distribution\textquotedblright. Assume that $w_{hi}=w_{h}$, $i=1,...,n_{h}$.
Then%
\begin{align}
\widehat{\nu}_{m_{h}}(\widehat{\boldsymbol{\beta}}_{\phi,P})  &  =\frac{1}%
{dk}\mathrm{trace}\left(  \left(
%TCIMACRO{\dsum \limits_{i=1}^{n_{h}}}%
%BeginExpansion
{\displaystyle\sum\limits_{i=1}^{n_{h}}}
%EndExpansion
m_{h}\boldsymbol{\Delta}(\boldsymbol{\pi}_{hi}^{\ast}%
(\widehat{\boldsymbol{\beta}}_{\phi,P}))\otimes\boldsymbol{x}_{hi}%
\boldsymbol{x}_{hi}^{T}\right)  ^{-1}\right. \nonumber\\
&  \left.  \times%
%TCIMACRO{\dsum \limits_{i=1}^{n_{h}}}%
%BeginExpansion
{\displaystyle\sum\limits_{i=1}^{n_{h}}}
%EndExpansion
\left(  \boldsymbol{v}_{hi}(\widehat{\boldsymbol{\beta}}_{\phi,P}%
)-\boldsymbol{\bar{v}}_{h}(\widehat{\boldsymbol{\beta}}_{\phi,P})\right)
\left(  \boldsymbol{v}_{hi}(\widehat{\boldsymbol{\beta}}_{\phi,P}%
)-\boldsymbol{\bar{v}}_{h}(\widehat{\boldsymbol{\beta}}_{\phi,P})\right)
^{T}\right)  \label{overdisp}%
\end{align}
with%
\begin{align*}
\boldsymbol{v}_{hi}(\widehat{\boldsymbol{\beta}}_{\phi,P})  &  =\boldsymbol{r}%
_{hi}^{\ast}\left(  \boldsymbol{\beta}\right)  \otimes\boldsymbol{x}_{hi},\\
\boldsymbol{\bar{v}}_{h}(\widehat{\boldsymbol{\beta}}_{\phi,P})  &  =\tfrac
{1}{n_{h}}%
%TCIMACRO{\dsum \limits_{k=1}^{n_{h}}}%
%BeginExpansion
{\displaystyle\sum\limits_{k=1}^{n_{h}}}
%EndExpansion
\boldsymbol{v}_{hk}(\widehat{\boldsymbol{\beta}}_{\phi,P}),
\end{align*}
is an estimator of $\nu_{m_{h}}$ based on the \textquotedblleft linearization
method of Binder\textquotedblright\ and the pseudo minimum phi-divergence
estimator of $\widehat{\boldsymbol{\beta}}_{\phi,P}$, and%
\[
\widehat{\rho}_{h}^{2}(\widehat{\boldsymbol{\beta}}_{\phi,P})=\frac
{\widehat{\nu}_{m_{h}}(\widehat{\boldsymbol{\beta}}_{\phi,P})-1}{m_{h}-1}%
\]
is an estimator of $\rho_{h}^{2}$ based on the \textquotedblleft linearization
method of Binder\textquotedblright\ and the pseudo minimum phi-divergence
estimator of $\widehat{\boldsymbol{\beta}}_{\phi,P}$.
\end{theorem}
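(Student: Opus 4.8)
The plan is to recognise the expression (\ref{overdisp}) as the within-stratum, weight-stripped analogue of the design-effect estimator $\widehat{\nu}(\widehat{\boldsymbol{\beta}}_{\phi,P})$ of (\ref{eff}), and to show that under the overdispersion model (\ref{multOver}) this within-stratum ``design effect'' collapses exactly to the scalar $\nu_{m_{h}}$. First I would introduce the random vectors generating $\boldsymbol{v}_{hi}$, namely $\boldsymbol{V}_{hi}=\boldsymbol{R}_{hi}^{\ast}\otimes\boldsymbol{x}_{hi}$ with $\boldsymbol{R}_{hi}^{\ast}=\widehat{\boldsymbol{Y}}_{hi}^{\ast}-m_{h}\boldsymbol{\pi}_{hi}^{\ast}(\boldsymbol{\beta}_{0})$; under the assumption $w_{hi}=w_{h}$ these are simply $\boldsymbol{U}_{hi}/w_{h}$ in the notation of Theorem \ref{Th1}, which is precisely why the estimator turns out to be weight-free. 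By (\ref{multOver}) they satisfy $\boldsymbol{E}[\boldsymbol{V}_{hi}]=\boldsymbol{0}_{dk}$, and using the Kronecker variance rule $\boldsymbol{V}[\boldsymbol{a}\otimes\boldsymbol{b}]=\boldsymbol{V}[\boldsymbol{a}]\otimes(\boldsymbol{b}\boldsymbol{b}^{T})$ for deterministic $\boldsymbol{b}$, together with the fact that the top-left $d\times d$ block of $\boldsymbol{\Delta}(\boldsymbol{\pi}_{hi})$ equals $\boldsymbol{\Delta}(\boldsymbol{\pi}_{hi}^{\ast})$,
\[
\boldsymbol{V}[\boldsymbol{V}_{hi}]=\boldsymbol{V}[\widehat{\boldsymbol{Y}}_{hi}^{\ast}]\otimes\boldsymbol{x}_{hi}\boldsymbol{x}_{hi}^{T}=\nu_{m_{h}}m_{h}\boldsymbol{\Delta}(\boldsymbol{\pi}_{hi}^{\ast}(\boldsymbol{\beta}_{0}))\otimes\boldsymbol{x}_{hi}\boldsymbol{x}_{hi}^{T}.
\]

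Next, writing $\widetilde{\mathbf{H}}_{n_{h}}=\tfrac{1}{n_{h}}\sum_{i=1}^{n_{h}}m_{h}\boldsymbol{\Delta}(\boldsymbol{\pi}_{hi}^{\ast})\otimes\boldsymbol{x}_{hi}\boldsymbol{x}_{hi}^{T}$ for the weight-free Fisher-type matrix appearing in (\ref{overdisp}), I would argue, exactly as in the Binder linearization used for Theorem \ref{Th1} and Remark \ref{Th2}, that the centred sum of squares $\tfrac{1}{n_{h}}\sum_{i}(\boldsymbol{v}_{hi}-\boldsymbol{\bar v}_{h})(\boldsymbol{v}_{hi}-\boldsymbol{\bar v}_{h})^{T}$ consistently estimates $\tfrac{1}{n_{h}}\sum_{i}\boldsymbol{V}[\boldsymbol{V}_{hi}]$ as $n_{h}\to\infty$: the clusters are independent within the stratum and $\boldsymbol{E}[\boldsymbol{V}_{hi}]=\boldsymbol{0}$, so $\boldsymbol{\bar v}_{h}\overset{P}{\longrightarrow}\boldsymbol{0}$ and a law of large numbers delivers the empirical covariance. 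By the variance formula above this limit equals $\nu_{m_{h}}\widetilde{\mathbf{H}}_{n_{h}}$.

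Keeping track of the $n_{h}$ normalisations (both sums in (\ref{overdisp}) are un-normalised, so the $n_{h}$ from the inverse cancels the $n_{h}$ from the covariance sum), the bracketed product in (\ref{overdisp}) is therefore $\widetilde{\mathbf{H}}_{n_{h}}^{-1}(\nu_{m_{h}}\widetilde{\mathbf{H}}_{n_{h}})=\nu_{m_{h}}\boldsymbol{I}_{dk}$ in the limit, whence $\tfrac{1}{dk}\mathrm{trace}(\nu_{m_{h}}\boldsymbol{I}_{dk})=\nu_{m_{h}}$ and $\widehat{\nu}_{m_{h}}(\widehat{\boldsymbol{\beta}}_{\phi,P})\overset{P}{\longrightarrow}\nu_{m_{h}}$. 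The statement for $\widehat{\rho}_{h}^{2}$ then follows by the continuous mapping theorem, inverting $\nu_{m_{h}}=1+\rho_{h}^{2}(m_{h}-1)$ to obtain $(\nu_{m_{h}}-1)/(m_{h}-1)=\rho_{h}^{2}$.

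The main obstacle is the replacement of $\boldsymbol{\beta}_{0}$ by the estimator $\widehat{\boldsymbol{\beta}}_{\phi,P}$ inside both $\boldsymbol{v}_{hi}$ and $\boldsymbol{\Delta}(\boldsymbol{\pi}_{hi}^{\ast})$: one must verify that plugging in the $\sqrt{n}$-consistent estimator of Theorem \ref{Th1} leaves the limits undisturbed. I expect to handle this by a first-order Taylor expansion of $\boldsymbol{v}_{hi}(\cdot)$ about $\boldsymbol{\beta}_{0}$ --- precisely the linearization of Binder (1983) already invoked in the proof of Theorem \ref{Th1} --- together with the continuity of $\boldsymbol{\pi}_{hi}(\cdot)$, so that the estimation of $\boldsymbol{\beta}$ contributes only $o_{p}$ terms; the centring by $\boldsymbol{\bar v}_{h}$, which mirrors the centring by $\tfrac{1}{n}\boldsymbol{u}(\widehat{\boldsymbol{\beta}}_{\phi,P})$ in (\ref{GHat}), absorbs the non-zero mean induced by using $\widehat{\boldsymbol{\beta}}_{\phi,P}$ in place of $\boldsymbol{\beta}_{0}$.
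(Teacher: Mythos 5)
Your proposal is correct and takes essentially the same route as the paper: your key identity $\boldsymbol{V}[\boldsymbol{V}_{hi}]=\nu_{m_{h}}m_{h}\boldsymbol{\Delta}(\boldsymbol{\pi}_{hi}^{\ast}(\boldsymbol{\beta}_{0}))\otimes\boldsymbol{x}_{hi}\boldsymbol{x}_{hi}^{T}$, together with the cancellation $\boldsymbol{v}_{hi}=\boldsymbol{u}_{hi}/w_{h}$, is exactly the paper's observation that $\mathbf{G}_{n_{h}}^{(h)}(\boldsymbol{\beta}_{0})=\nu_{m_{h}}w_{h}\mathbf{H}_{n_{h}}^{(h)}(\boldsymbol{\beta}_{0})$, from which $\widehat{\nu}_{m_{h}}(\widehat{\boldsymbol{\beta}}_{\phi,P})=\frac{1}{dk}\mathrm{trace}\bigl(\frac{1}{w_{h}}(\mathbf{H}_{n_{h}}^{(h)}(\widehat{\boldsymbol{\beta}}_{\phi,P}))^{-1}\widehat{\mathbf{G}}_{n_{h}}^{(h)}(\widehat{\boldsymbol{\beta}}_{\phi,P})\bigr)$ recovers $\nu_{m_{h}}$. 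The only difference is presentational: you spell out the law-of-large-numbers and plug-in (Taylor) steps that the paper compresses into a citation of the consistency of $\mathbf{H}_{n_{h}}^{(h)}(\widehat{\boldsymbol{\beta}}_{\phi,P})$ and $\widehat{\mathbf{G}}_{n_{h}}^{(h)}(\widehat{\boldsymbol{\beta}}_{\phi,P})$ established via Remark \ref{Th2}.
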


\begin{proof}
If $\boldsymbol{V}[\widehat{\boldsymbol{Y}}_{hi}]=\nu_{m_{h}}m_{h}%
\boldsymbol{\Delta}(\boldsymbol{\pi}_{hi}\left(  \boldsymbol{\beta}%
_{0}\right)  )$, then from the expression of $\mathbf{G}_{n_{h}}^{(h)}\left(
\boldsymbol{\beta}_{0}\right)  $ given in Theorem \ref{Th2},
\begin{align*}
\mathbf{G}_{n_{h}}^{(h)}\left(  \boldsymbol{\beta}_{0}\right)   &  =\frac
{1}{n_{h}}%
%TCIMACRO{\dsum \limits_{i=1}^{n_{h}}}%
%BeginExpansion
{\displaystyle\sum\limits_{i=1}^{n_{h}}}
%EndExpansion
w_{h}^{2}\boldsymbol{V}[\widehat{\boldsymbol{Y}}_{hi}^{\ast}]\otimes
\boldsymbol{x}_{hi}\boldsymbol{x}_{hi}^{T}=\nu_{m_{h}}w_{h}\frac{1}{n_{h}}%
%TCIMACRO{\dsum \limits_{i=1}^{n_{h}}}%
%BeginExpansion
{\displaystyle\sum\limits_{i=1}^{n_{h}}}
%EndExpansion
w_{h}m_{h}\boldsymbol{\Delta}(\boldsymbol{\pi}_{hi}^{\ast}\left(
\boldsymbol{\beta}_{0}\right)  )\otimes\boldsymbol{x}_{hi}\boldsymbol{x}%
_{hi}^{T}\\
&  =\nu_{m_{h}}w_{h}\mathbf{H}_{n_{h}}^{(h)}\left(  \boldsymbol{\beta}%
_{0}\right)  .
\end{align*}
Hence, from%
\[
\mathrm{trace}\left(  \mathbf{H}_{n_{h}}^{(h)}\left(  \boldsymbol{\beta}%
_{0}\right)  ^{-1}\mathbf{G}_{n_{h}}^{(h)}\left(  \boldsymbol{\beta}%
_{0}\right)  \right)  =\nu_{m_{h}}w_{h}dk,
\]
and consistency of $\mathbf{H}_{n_{h}}^{(h)}(\widehat{\boldsymbol{\beta}%
}_{\phi,P})$ and $\widehat{\mathbf{G}}_{n_{h}}^{(h)}%
(\widehat{\boldsymbol{\beta}}_{\phi,P})$,
\[
\widehat{\nu}_{m_{h}}(\widehat{\boldsymbol{\beta}}_{\phi,P})=\frac{1}%
{dk}\mathrm{trace}\left(  \frac{1}{w_{h}}\mathbf{H}_{n_{h}}^{(h)}%
(\widehat{\boldsymbol{\beta}}_{\phi,P})^{-1}\widehat{\mathbf{G}}_{n_{h}}%
^{(h)}(\widehat{\boldsymbol{\beta}}_{\phi,P})\right)  ,
\]
is proven with%
\begin{align*}
\frac{1}{w_{h}}\mathbf{H}_{n_{h}}^{(h)}(\widehat{\boldsymbol{\beta}}_{\phi
,P})^{-1}\widehat{\mathbf{G}}_{n_{h}}^{(h)}(\widehat{\boldsymbol{\beta}}%
_{\phi,P})  &  =\left(
%TCIMACRO{\dsum \limits_{i=1}^{n_{h}}}%
%BeginExpansion
{\displaystyle\sum\limits_{i=1}^{n_{h}}}
%EndExpansion
m_{h}\boldsymbol{\Delta}(\boldsymbol{\pi}_{hi}^{\ast}%
(\widehat{\boldsymbol{\beta}}_{\phi,P}))\otimes\boldsymbol{x}_{hi}%
\boldsymbol{x}_{hi}^{T}\right)  ^{-1}\\
&  \times%
%TCIMACRO{\dsum \limits_{i=1}^{n_{h}}}%
%BeginExpansion
{\displaystyle\sum\limits_{i=1}^{n_{h}}}
%EndExpansion
\left(  \boldsymbol{v}_{hi}(\widehat{\boldsymbol{\beta}}_{\phi,P}%
)-\boldsymbol{\bar{v}}_{h}(\widehat{\boldsymbol{\beta}}_{\phi,P})\right)
\left(  \boldsymbol{v}_{hi}(\widehat{\boldsymbol{\beta}}_{\phi,P}%
)-\boldsymbol{\bar{v}}_{h}(\widehat{\boldsymbol{\beta}}_{\phi,P})\right)
^{T},\\
\boldsymbol{v}_{hi}(\widehat{\boldsymbol{\beta}}_{\phi,P})  &  =\frac{1}%
{w_{h}}\boldsymbol{u}_{hi}\left(  \boldsymbol{\beta}\right)  ,
\end{align*}
which is equivalent to (\ref{overdisp}).
\end{proof}

\begin{remark}
Since
\begin{equation}
\widehat{\nu}_{m_{h}}(\widehat{\boldsymbol{\beta}}_{\phi,P})=\frac{1}{w_{h}%
}\frac{1}{dk}\mathrm{trace}\left(  \mathbf{H}_{n_{h}}^{(h)}%
(\widehat{\boldsymbol{\beta}}_{\phi,P})^{-1}\widehat{\mathbf{G}}_{n_{h}}%
^{(h)}(\widehat{\boldsymbol{\beta}}_{\phi,P})\right)  =\frac{1}{w_{h}%
}\widehat{\nu}^{(h)}(\widehat{\boldsymbol{\beta}}_{\phi,P}), \label{overdisp2}%
\end{equation}
unless $w_{h}=1$, the overdispersion parameter $\widehat{\nu}_{m_{h}%
}(\widehat{\boldsymbol{\beta}}_{\phi,P})$ and the design effect $\widehat{\nu
}^{(h)}(\widehat{\boldsymbol{\beta}}_{\phi,P})$ of the $h$-th stratum are not
in general equivalent. Based on the expression of (\ref{overdisp})
$\widehat{\nu}_{m_{h}}(\cdot)$, does not depend on the weights except for that
$\widehat{\boldsymbol{\beta}}_{\phi,P}$ is plugged in $\widehat{\nu}_{m_{h}%
}(\cdot)$, additionally based on (\ref{overdisp2}) it is concluded that
$\widehat{\nu}^{(h)}(\widehat{\boldsymbol{\beta}}_{\phi,P})$\ is directly
proportional to the weights.
\end{remark}

\begin{theorem}
\label{Th4}Let $\widehat{\boldsymbol{\beta}}_{\phi,P}$ the pseudo minimum
phi-divergence estimate of parameter $\boldsymbol{\beta}$ for a multinomial
logistic regression model with \textquotedblleft overdispersed multinomial
distribution\textquotedblright. Then%
\[
\widetilde{\nu}_{m_{h}}(\widehat{\boldsymbol{\beta}}_{\phi,P})=\frac{1}%
{n_{h}d}\sum\limits_{i=1}^{n_{h}}\sum\limits_{s=1}^{d+1}\frac{\left(
\widehat{y}_{his}-m_{h}\pi_{his}(\widehat{\boldsymbol{\beta}}_{\phi
,P})\right)  ^{2}}{m_{h}\pi_{his}(\widehat{\boldsymbol{\beta}}_{\phi,P})}%
\]
is an estimation of $\nu_{m_{h}}$ based on the \textquotedblleft method of
moments\textquotedblright\ and the pseudo minimum phi-divergence estimator of
$\widehat{\boldsymbol{\beta}}_{\phi,P}$, and%
\[
\widetilde{\rho}_{h}^{2}(\widehat{\boldsymbol{\beta}}_{\phi,P})=\frac
{\widetilde{\nu}_{m_{h}}(\widehat{\boldsymbol{\beta}}_{\phi,P})-1}{m_{h}-1}%
\]
is an estimation of $\rho_{h}^{2}$ based on the \textquotedblleft method of
moments\textquotedblright\ and the pseudo minimum phi-divergence estimator of
$\widehat{\boldsymbol{\beta}}_{\phi,P}$.
\end{theorem}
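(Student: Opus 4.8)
The plan is to recognize $\widetilde{\nu}_{m_h}(\widehat{\boldsymbol{\beta}}_{\phi,P})$ as a method-of-moments estimator built from a Pearson-type statistic and to verify that its population counterpart has expectation exactly $\nu_{m_h}$. First I would work at the true value $\boldsymbol{\beta}_0$ and compute the mean of the per-cluster statistic
\[
T_{hi}(\boldsymbol{\beta}_0)=\sum\limits_{s=1}^{d+1}\frac{(\widehat{Y}_{his}-m_h\pi_{his}(\boldsymbol{\beta}_0))^2}{m_h\pi_{his}(\boldsymbol{\beta}_0)}.
\]
Using the overdispersion assumption (\ref{multOver}), the $s$-th diagonal entry of $\boldsymbol{V}[\widehat{\boldsymbol{Y}}_{hi}]=\nu_{m_h}m_h\boldsymbol{\Delta}(\boldsymbol{\pi}_{hi}(\boldsymbol{\beta}_0))$ equals $\nu_{m_h}m_h\pi_{his}(\boldsymbol{\beta}_0)(1-\pi_{his}(\boldsymbol{\beta}_0))$, so that $\mathrm{E}[(\widehat{Y}_{his}-m_h\pi_{his})^2]=\nu_{m_h}m_h\pi_{his}(1-\pi_{his})$. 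Substituting and cancelling $m_h\pi_{his}$ gives
\[
\mathrm{E}[T_{hi}(\boldsymbol{\beta}_0)]=\nu_{m_h}\sum\limits_{s=1}^{d+1}(1-\pi_{his}(\boldsymbol{\beta}_0))=\nu_{m_h}\left((d+1)-1\right)=\nu_{m_h}\,d,
\]
where I have used $\sum_{s}\pi_{his}(\boldsymbol{\beta}_0)=1$.

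Next I would form the moment equation: averaging $T_{hi}$ over the $n_h$ clusters of stratum $h$ and dividing by $d$ produces a quantity whose expectation equals $\nu_{m_h}$. Equating the empirical average to this expectation and solving for the parameter yields exactly the statistic in the statement once $\boldsymbol{\beta}_0$ is replaced by the fitted value $\widehat{\boldsymbol{\beta}}_{\phi,P}$. To pass from the true parameter to the estimator I would invoke the consistency of $\widehat{\boldsymbol{\beta}}_{\phi,P}$ (which follows from its asymptotic normality in Theorem \ref{Th1}) together with the continuity of $\pi_{his}(\cdot)$ in (\ref{2.1.0}); this guarantees that plugging $\widehat{\boldsymbol{\beta}}_{\phi,P}$ into the Pearson statistic preserves the limiting moment identity as $n_h\to\infty$.

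Finally, the estimator for the intra-cluster correlation follows by inverting the relation $\nu_{m_h}=1+\rho_h^2(m_h-1)$ from (\ref{multOver}), which gives $\rho_h^2=(\nu_{m_h}-1)/(m_h-1)$ and hence the stated $\widetilde{\rho}_h^2(\widehat{\boldsymbol{\beta}}_{\phi,P})=(\widetilde{\nu}_{m_h}(\widehat{\boldsymbol{\beta}}_{\phi,P})-1)/(m_h-1)$.

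The main obstacle is the replacement of $\boldsymbol{\beta}_0$ by $\widehat{\boldsymbol{\beta}}_{\phi,P}$: at a fixed sample the plug-in Pearson statistic is not exactly unbiased, since fitting the $dk$ regression coefficients induces a degrees-of-freedom type reduction in its mean. The point to argue carefully is that this bias is asymptotically negligible—the number of fitted parameters $dk$ stays fixed while $n_h\to\infty$—so that the moment identity $\mathrm{E}[T_{hi}(\boldsymbol{\beta}_0)]=\nu_{m_h}d$ survives in the limit and $\widetilde{\nu}_{m_h}(\widehat{\boldsymbol{\beta}}_{\phi,P})$ remains a valid (consistent) method-of-moments estimator.
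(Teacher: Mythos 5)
Your proposal is correct, and it reaches the moment identity by a more elementary route than the paper. The paper's proof works in matrix form: it defines the standardized residual vectors $\boldsymbol{Z}_{hi}^{\ast}(\boldsymbol{\beta}_{0})=\sqrt{m_{h}}\,\boldsymbol{\Delta}^{-\frac{1}{2}}(\boldsymbol{\pi}_{hi}^{\ast}(\boldsymbol{\beta}_{0}))\bigl(\tfrac{\widehat{\boldsymbol{Y}}_{hi}^{\ast}}{m_{h}}-\boldsymbol{\pi}_{hi}^{\ast}(\boldsymbol{\beta}_{0})\bigr)$ on the reduced $d$-dimensional vectors, observes that under (\ref{multOver}) their covariance is $\nu_{m_{h}}\boldsymbol{I}_{d}$, takes the trace of the empirical covariance $\tfrac{1}{n_{h}}\sum_{i}\boldsymbol{Z}_{hi}^{\ast}\boldsymbol{Z}_{hi}^{\ast T}$ to get expectation $\nu_{m_{h}}d$, and only at the very end converts the resulting quadratic form $\bigl(\tfrac{\widehat{\boldsymbol{y}}_{hi}^{\ast}}{m_{h}}-\boldsymbol{\pi}_{hi}^{\ast}\bigr)^{T}\boldsymbol{\Delta}^{-1}(\boldsymbol{\pi}_{hi}^{\ast})\bigl(\tfrac{\widehat{\boldsymbol{y}}_{hi}^{\ast}}{m_{h}}-\boldsymbol{\pi}_{hi}^{\ast}\bigr)$ into the $(d+1)$-term Pearson sum, using that $\mathrm{diag}^{-1}(\boldsymbol{\pi}_{hi})$ is a generalized inverse of $\boldsymbol{\Delta}(\boldsymbol{\pi}_{hi})$. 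You instead compute $\mathrm{E}[T_{hi}(\boldsymbol{\beta}_{0})]$ term by term, needing only the diagonal entries $\nu_{m_{h}}m_{h}\pi_{his}(1-\pi_{his})$ of the assumed covariance, which collapses the whole argument to the one-line identity $\nu_{m_{h}}\sum_{s}(1-\pi_{his})=\nu_{m_{h}}d$; this avoids the matrix square root, the trace manipulation, and the generalized-inverse step entirely. What the paper's formulation buys is structural: it exhibits $\widetilde{\nu}_{m_{h}}$ as the trace of an empirical covariance of standardized residuals, in exact parallel with the Binder-type estimator of Theorem \ref{Th3}, and it verifies consistency of the construction with the full matrix assumption $\boldsymbol{V}[\widehat{\boldsymbol{Y}}_{hi}]=\nu_{m_{h}}m_{h}\boldsymbol{\Delta}(\boldsymbol{\pi}_{hi})$ rather than just its diagonal. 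A final remark in your favor: your closing discussion of the plug-in step (the degrees-of-freedom bias from replacing $\boldsymbol{\beta}_{0}$ by $\widehat{\boldsymbol{\beta}}_{\phi,P}$ being asymptotically negligible) addresses a point the paper passes over silently, since its proof simply substitutes the estimate after deriving the identity at $\boldsymbol{\beta}_{0}$.
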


\begin{proof}
The mean vector and variance-covariance matrix of
\[
\boldsymbol{Z}_{hi}^{\ast}(\boldsymbol{\beta}_{0})=\sqrt{m_{h}}%
\boldsymbol{\Delta}^{-\frac{1}{2}}(\boldsymbol{\pi}_{hi}^{\ast}\left(
\boldsymbol{\beta}_{0}\right)  )(\tfrac{\widehat{\boldsymbol{Y}}_{hi}^{\ast}%
}{m_{h}}-\boldsymbol{\pi}_{hi}^{\ast}\left(  \boldsymbol{\beta}_{0}\right)
),
\]
are respectively%
\begin{align*}
\boldsymbol{E}[\boldsymbol{Z}_{hi}^{\ast}(\boldsymbol{\beta}_{0})]  &
=\boldsymbol{0}_{d},\\
\boldsymbol{V}[\boldsymbol{Z}_{hi}^{\ast}(\boldsymbol{\beta}_{0})]  &
=\nu_{m_{h}}\boldsymbol{I}_{d},
\end{align*}
for $h=1,...,H$. An unbiased estimator of $\boldsymbol{V}[\boldsymbol{Z}%
_{hi}^{\ast}(\boldsymbol{\beta}_{0})]$ is%
\[
\widehat{\boldsymbol{V}}[\boldsymbol{Z}_{hi}^{\ast}(\boldsymbol{\beta}%
_{0})]=\frac{1}{n_{h}}\sum\limits_{i=1}^{n_{h}}\boldsymbol{Z}_{hi}^{\ast
}(\boldsymbol{\beta}_{0})\boldsymbol{Z}_{hi}^{\ast T}(\boldsymbol{\beta}%
_{0}),
\]
from which is derived%
\begin{align*}
E\left[  \mathrm{trace}\widehat{\boldsymbol{V}}[\boldsymbol{Z}_{hi}^{\ast
}(\boldsymbol{\beta}_{0})]\right]   &  =\mathrm{trace}\boldsymbol{V}%
[\boldsymbol{Z}_{hi}^{\ast}(\boldsymbol{\beta}_{0})],\\
E\left[  \frac{1}{n_{h}}\sum\limits_{i=1}^{n_{h}}\mathrm{trace}\left(
\boldsymbol{Z}_{hi}^{\ast}(\boldsymbol{\beta}_{0})\boldsymbol{Z}_{hi}^{\ast
T}(\boldsymbol{\beta}_{0})\right)  \right]   &  =\mathrm{trace}\left(
\nu_{m_{h}}\boldsymbol{I}_{d}\right)  ,\\
E\left[  \frac{1}{n_{h}}\sum\limits_{i=1}^{n_{h}}\boldsymbol{Z}_{hi}^{\ast
T}(\boldsymbol{\beta}_{0})\boldsymbol{Z}_{hi}^{\ast}(\boldsymbol{\beta}%
_{0})\right]   &  =\nu_{m_{h}}d,\\
E\left[  \frac{1}{n_{h}d}\sum\limits_{i=1}^{n_{h}}\boldsymbol{Z}_{hi}^{\ast
T}(\boldsymbol{\beta}_{0})\boldsymbol{Z}_{hi}^{\ast}(\boldsymbol{\beta}%
_{0})\right]   &  =\nu_{m_{h}}.
\end{align*}
This expression suggest using%
\begin{align*}
\widetilde{\nu}_{m_{h}}(\widehat{\boldsymbol{\beta}}_{\phi,P})  &  =\frac
{1}{n_{h}d}\sum\limits_{i=1}^{n_{h}}\widehat{\boldsymbol{z}}_{hi,\phi,P}^{\ast
T}(\widehat{\boldsymbol{\beta}}_{\phi,P})\widehat{\boldsymbol{z}}_{hi,\phi
,P}^{\ast}(\widehat{\boldsymbol{\beta}}_{\phi,P})\\
&  =\frac{1}{n_{h}d}m_{h}\left(  \tfrac{\widehat{\boldsymbol{y}}_{hi}^{\ast}%
}{m_{h}}-\boldsymbol{\pi}_{hi}^{\ast}(\widehat{\boldsymbol{\beta}}_{\phi
,P})\right)  ^{T}\boldsymbol{\Delta}^{-1}(\boldsymbol{\pi}_{hi}^{\ast
}(\widehat{\boldsymbol{\beta}}_{\phi,P}))\left(  \tfrac
{\widehat{\boldsymbol{y}}_{hi}^{\ast}}{m_{h}}-\boldsymbol{\pi}_{hi}^{\ast
}(\widehat{\boldsymbol{\beta}}_{\phi,P})\right) \\
&  =\frac{1}{n_{h}d}m_{h}\left(  \tfrac{\widehat{\boldsymbol{y}}_{hi}}{m_{h}%
}-\boldsymbol{\pi}_{hi}(\widehat{\boldsymbol{\beta}}_{\phi,P})\right)
^{T}\boldsymbol{\Delta}^{-}(\boldsymbol{\pi}_{hi}(\widehat{\boldsymbol{\beta}%
}_{\phi,P}))\left(  \tfrac{\widehat{\boldsymbol{y}}_{hi}^{\ast}}{m_{h}%
}-\boldsymbol{\pi}_{hi}^{\ast}(\widehat{\boldsymbol{\beta}}_{\phi,P})\right)
,\\
\widehat{\boldsymbol{z}}_{hi,\phi,P}^{\ast}  &  =\sqrt{m_{h}}%
\boldsymbol{\Delta}^{-\frac{1}{2}}(\boldsymbol{\pi}_{hi}^{\ast}%
(\widehat{\boldsymbol{\beta}}_{\phi,P}))\left(  \tfrac{\widehat{\boldsymbol{y}%
}_{hi}^{\ast}}{m_{h}}-\boldsymbol{\pi}_{hi}^{\ast}(\widehat{\boldsymbol{\beta
}}_{\phi,P})\right)  .
\end{align*}
Finally, since $\boldsymbol{\Delta}^{-}(\boldsymbol{\pi}_{hi}%
(\widehat{\boldsymbol{\beta}}_{\phi,P}))=\mathrm{diag}^{-1}(\boldsymbol{\pi
}_{hi}(\widehat{\boldsymbol{\beta}}_{\phi,P}))$, is a possible expression for
the generalized inverse, the desired result for $\widetilde{\nu}_{m_{h}%
}(\widehat{\boldsymbol{\beta}}_{\phi,P})$ is obtained.
\end{proof}

\section{Numerical Example\label{sec4}}

In this Section we shall consider an example, which appears in SAS Institute
Inc. (2013, Chapter 95) as well as in An (2002), in order to illustrate how
does the pseudo minimum phi-divergence estimator work for the multinomial
logistic regression with complex sample survey.%

%TCIMACRO{\TeXButton{B}{\begin{table}[htbp]  \tabcolsep2.8pt  \centering}}%
%BeginExpansion
\begin{table}[htbp]  \tabcolsep2.8pt  \centering
%EndExpansion
$%
\begin{tabular}
[c]{cc}\hline
\textbf{Class} & \textbf{Enrollment}\\\hline
\multicolumn{1}{l}{Freshman} & 3734\\
\multicolumn{1}{l}{Sophomore} & 3565\\
\multicolumn{1}{l}{Junior} & 3903\\
\multicolumn{1}{l}{Senior} & 4196\\\hline
\end{tabular}
$%
\caption{Number of student in each class of the target population for the survey.\label{table1}}%
%TCIMACRO{\TeXButton{E}{\end{table}}}%
%BeginExpansion
\end{table}%
%EndExpansion

A market research firm conducts a survey among undergraduate students at the
University of North Carolina (UNC), at Chapel Hill, to evaluate three new web
designs at a commercial web-site targeting undergraduate students. The total
number of student in each class in the Fall semester of 2001 is shown in Table
\ref{table1}. The sample design is a stratified sample with clusters nested on
them, with the strata being the four students' classes and the clusters the
three web designs. Initially $100$ students were planned to be randomly
selected in each of the $n=12$ web designs using sample random sampling
(without replacement). For this reason, the weights for estimation are
considered to be $w_{1}=\frac{3734}{300}$, $w_{2}=\frac{3565}{300}$,
$w_{3}=\frac{3903}{300}$, $w_{4}=\frac{4196}{300}$. Since $m_{hi}=100$ for
$h=1,2,3,4=H$ (strata), $i=1,2,3=n_{h}$ (clusters) except for $m_{12}=90$ and
$m_{43}=97$, in practice some observations are missing values. Each student
selected in the sample is asked to evaluate the three Web designs and to rate
them ranging from dislike very much to like very much: ($1$) dislike very
much, ($2$) dislike, ($3$) neutral, ($4$) like, ($5=d+1$) like very much. The
survey results are collected and shown in Table \ref{table2}, with the three
different Web designs coded A, B and C. This table matches the one given in An
(2002) and the version appeared in SAS Institute Inc. (2013, Chapter 95) is
slightly different.%

%TCIMACRO{\TeXButton{B}{\begin{table}[htbp]  \tabcolsep2.8pt  \centering}}%
%BeginExpansion
\begin{table}[htbp]  \tabcolsep2.8pt  \centering
%EndExpansion
$%
\begin{tabular}
[c]{lcccccc}\hline
&  & \multicolumn{5}{c}{\textbf{Rating Counts}}\\\cline{3-7}%
\textbf{Strata} & \textbf{Design} & 1 & 2 & 3 & 4 & 5\\\hline
Freshman & A & \multicolumn{1}{r}{10} & \multicolumn{1}{r}{34} &
\multicolumn{1}{r}{25} & \multicolumn{1}{r}{16} & \multicolumn{1}{r}{15}\\
& B & \multicolumn{1}{r}{5} & \multicolumn{1}{r}{10} & \multicolumn{1}{r}{24}
& \multicolumn{1}{r}{30} & \multicolumn{1}{r}{21}\\
& C & \multicolumn{1}{r}{11} & \multicolumn{1}{r}{14} & \multicolumn{1}{r}{20}
& \multicolumn{1}{r}{34} & \multicolumn{1}{r}{21}\\\hline
Sophomore & A & \multicolumn{1}{r}{19} & \multicolumn{1}{r}{12} &
\multicolumn{1}{r}{26} & \multicolumn{1}{r}{18} & \multicolumn{1}{r}{25}\\
& B & \multicolumn{1}{r}{10} & \multicolumn{1}{r}{18} & \multicolumn{1}{r}{32}
& \multicolumn{1}{r}{23} & \multicolumn{1}{r}{17}\\
& C & \multicolumn{1}{r}{15} & \multicolumn{1}{r}{22} & \multicolumn{1}{r}{34}
& \multicolumn{1}{r}{9} & \multicolumn{1}{r}{20}\\\hline
Junior & A & \multicolumn{1}{r}{8} & \multicolumn{1}{r}{21} &
\multicolumn{1}{r}{23} & \multicolumn{1}{r}{26} & \multicolumn{1}{r}{22}\\
& B & \multicolumn{1}{r}{1} & \multicolumn{1}{r}{14} & \multicolumn{1}{r}{25}
& \multicolumn{1}{r}{23} & \multicolumn{1}{r}{37}\\
& C & \multicolumn{1}{r}{16} & \multicolumn{1}{r}{19} & \multicolumn{1}{r}{30}
& \multicolumn{1}{r}{23} & \multicolumn{1}{r}{12}\\\hline
Senior & A & \multicolumn{1}{r}{11} & \multicolumn{1}{r}{14} &
\multicolumn{1}{r}{24} & \multicolumn{1}{r}{33} & \multicolumn{1}{r}{18}\\
& B & \multicolumn{1}{r}{8} & \multicolumn{1}{r}{15} & \multicolumn{1}{r}{35}
& \multicolumn{1}{r}{30} & \multicolumn{1}{r}{12}\\
& C & \multicolumn{1}{r}{2} & \multicolumn{1}{r}{34} & \multicolumn{1}{r}{27}
& \multicolumn{1}{r}{18} & \multicolumn{1}{r}{16}\\\hline
\end{tabular}
\ \ \ \ \ $\caption{Evaluation of New Web Designs.\label{table2}}%
%TCIMACRO{\TeXButton{E}{\end{table}}}%
%BeginExpansion
\end{table}%
%EndExpansion

The explanatory variables are qualitative, and valid to distinguish the
clusters within the strata. With respect to design A, it is given by
$\boldsymbol{x}_{h1}^{T}=\boldsymbol{x}_{1}^{T}=(1,0,0)$, $h=1,2,3,4$; with
respect to design B, by $\boldsymbol{x}_{h2}^{T}=\boldsymbol{x}_{2}%
^{T}=(0,1,0)$, $h=1,2,3,4$; with respect to design C, by $\boldsymbol{x}%
_{h3}^{T}=\boldsymbol{x}_{3}^{T}=(0,0,1)$, $h=1,2,3,4$. In Table \ref{table4}
every row represents the pseudo minimum Cressie-Read divergence estimates of
the $5$-dimensional probability vector $\boldsymbol{\pi}_{hi}%
(\widehat{\boldsymbol{\beta}}_{\phi_{\lambda},P})=\boldsymbol{\pi}%
_{i}(\widehat{\boldsymbol{\beta}}_{\phi_{\lambda},P})$, for the $i$-th cluster
$i=1,2,3$, for any stratum $h=1,2,3,4$, and a specific value in $\lambda
\in\{0,\frac{2}{3},1,1.5,2,2.5\}$. Each column of Table \ref{table3}
summarizes, first the pseudo minimum Cressie-Read divergence estimates of
$\boldsymbol{\beta}=(\boldsymbol{\beta}_{1}^{T},\boldsymbol{\beta}_{2}%
^{T},\boldsymbol{\beta}_{3}^{T},\boldsymbol{\beta}_{4}^{T})^{T}$, with
$\boldsymbol{\beta}_{i}^{T}=(\beta_{i1},\beta_{i2},\beta_{i3})$ $i=1,2,3,4$%
\ and\ $\lambda\in\{0,\frac{2}{3},1,1.5,2,2.5\}$, as well as the two versions
of the intra-cluster correlation estimates according to Theorems \ref{Th3} and
\ref{Th4} for the strata with the same cluster sizes, i.e. Sophomore ($2$) and
Junior ($3$). Section \ref{sec5} is devoted to study through simulation the
best choice for the value of $\lambda$ according to the root of the minimum
square error of $\widehat{\boldsymbol{\beta}}_{\phi_{\lambda},P}$,
$\widehat{\rho}^{2}(\widehat{\boldsymbol{\beta}}_{\phi_{\lambda},P})$ and
$\widetilde{\rho}^{2}(\widehat{\boldsymbol{\beta}}_{\phi_{\lambda},P})$.%

%TCIMACRO{\TeXButton{B}{\begin{table}[htbp]  \tabcolsep2.8pt  \centering}}%
%BeginExpansion
\begin{table}[htbp]  \tabcolsep2.8pt  \centering
%EndExpansion
$%
\begin{tabular}
[c]{ccccccc}\hline
&  & \multicolumn{5}{c}{\textbf{Rating Counts}}\\\cline{3-7}%
$\lambda$ & \textbf{Design} & 1 & 2 & 3 & 4 & 5\\\hline
$0$ & A & $0.1185$ & $0.2016$ & $0.2445$ & $0.2363$ & $0.1991$\\
& B & $0.0611$ & $0.1458$ & $0.2983$ & $0.2727$ & $0.2222$\\
& C & $0.1083$ & $0.2276$ & $0.2791$ & $0.2124$ & $0.1727$\\
$\frac{2}{3}$ & A & $0.1200$ & $0.2079$ & $0.2387$ & $0.2369$ & $0.1965$\\
& B & $0.0660$ & $0.1439$ & $0.2931$ & $0.2672$ & $0.2297$\\
& C & $0.1145$ & $0.2275$ & $0.2723$ & $0.2167$ & $0.1690$\\
$1$ & A & $0.1208$ & $0.2109$ & $0.2359$ & $0.2371$ & $0.1952$\\
& B & $0.0676$ & $0.1431$ & $0.2909$ & $0.2648$ & $0.2336$\\
& C & $0.1163$ & $0.2279$ & $0.2695$ & $0.2188$ & $0.1675$\\
$1.5$ & A & $0.1221$ & $0.2152$ & $0.2319$ & $0.2374$ & $0.1934$\\
& B & $0.0693$ & $0.1420$ & $0.2879$ & $0.2616$ & $0.2392$\\
& C & $0.1179$ & $0.2289$ & $0.2659$ & $0.2215$ & $0.1657$\\
$2$ & A & $0.1234$ & $0.2191$ & $0.2282$ & $0.2376$ & $0.1917$\\
& B & $0.0705$ & $0.1410$ & $0.2854$ & $0.2587$ & $0.2444$\\
& C & $0.1188$ & $0.2301$ & $0.2630$ & $0.2240$ & $0.1641$\\
$2.5$ & A & $0.1246$ & $0.2226$ & $0.2248$ & $0.2377$ & $0.1902$\\
& B & $0.0714$ & $0.1402$ & $0.2831$ & $0.2562$ & $0.2491$\\
& C & $0.1192$ & $0.2314$ & $0.2604$ & $0.2262$ & $0.1628$\\\hline
\end{tabular}
\ $%
\caption{Pseudo minimum Cressie-Read divergence estimates of probabilities for any of the four strata.\label{table4}}%
%TCIMACRO{\TeXButton{E}{\end{table}}}%
%BeginExpansion
\end{table}%
%EndExpansion
%

%TCIMACRO{\TeXButton{B}{\begin{table}[htbp]  \tabcolsep2.8pt  \centering}}%
%BeginExpansion
\begin{table}[htbp]  \tabcolsep2.8pt  \centering
%EndExpansion
$%
\begin{tabular}
[c]{cccccccc}\hline
& \multicolumn{7}{c}{$\lambda$}\\\cline{2-8}
& $\qquad0$ & $\qquad\frac{2}{3}$ & $\qquad1$ & $\qquad1.5$ & $\qquad2$ &
$\qquad2.5$ & $\quad$\\\hline
$\widehat{\beta}_{11,\phi_{\lambda},P}$ & \multicolumn{1}{r}{$-0.5188$} &
\multicolumn{1}{r}{$-0.4933$} & \multicolumn{1}{r}{$-0.4802$} &
\multicolumn{1}{r}{$-0.4604$} & \multicolumn{1}{r}{$-0.4411$} &
\multicolumn{1}{r}{$-0.4228$} & \\
$\widehat{\beta}_{12,\phi_{\lambda},P}$ & \multicolumn{1}{r}{$-1.2910$} &
\multicolumn{1}{r}{$-1.2475$} & \multicolumn{1}{r}{$-1.2400$} &
\multicolumn{1}{r}{$-1.2381$} & \multicolumn{1}{r}{$-1.2424$} &
\multicolumn{1}{r}{$-1.2494$} & \\
$\widehat{\beta}_{13,\phi_{\lambda},P}$ & \multicolumn{1}{r}{$-0.4665$} &
\multicolumn{1}{r}{$-0.3889$} & \multicolumn{1}{r}{$-0.3649$} &
\multicolumn{1}{r}{$-0.3397$} & \multicolumn{1}{r}{$-0.3230$} &
\multicolumn{1}{r}{$-0.3116$} & \\\cline{2-8}%
$\widehat{\beta}_{21,\phi_{\lambda},P}$ & \multicolumn{1}{r}{$0.0127$} &
\multicolumn{1}{r}{$0.0564$} & \multicolumn{1}{r}{$0.0773$} &
\multicolumn{1}{r}{$0.1069$} & \multicolumn{1}{r}{$0.1336$} &
\multicolumn{1}{r}{$0.1573$} & \\
$\widehat{\beta}_{22,\phi_{\lambda},P}$ & \multicolumn{1}{r}{$-0.4210$} &
\multicolumn{1}{r}{$-0.4676$} & \multicolumn{1}{r}{$-0.4899$} &
\multicolumn{1}{r}{$-0.5213$} & \multicolumn{1}{r}{$-0.5498$} &
\multicolumn{1}{r}{$-0.5750$} & \\
$\widehat{\beta}_{23,\phi_{\lambda},P}$ & \multicolumn{1}{r}{$0.2761$} &
\multicolumn{1}{r}{$0.2974$} & \multicolumn{1}{r}{$0.3079$} &
\multicolumn{1}{r}{$0.3233$} & \multicolumn{1}{r}{$0.3380$} &
\multicolumn{1}{r}{$0.3517$} & \\\cline{2-8}%
$\widehat{\beta}_{31,\phi_{\lambda},P}$ & \multicolumn{1}{r}{$0.2056$} &
\multicolumn{1}{r}{$0.1947$} & \multicolumn{1}{r}{$0.1894$} &
\multicolumn{1}{r}{$0.1816$} & \multicolumn{1}{r}{$0.1741$} &
\multicolumn{1}{r}{$0.1670$} & \\
$\widehat{\beta}_{32,\phi_{\lambda},P}$ & \multicolumn{1}{r}{$0.2946$} &
\multicolumn{1}{r}{$0.2438$} & \multicolumn{1}{r}{$0.2196$} &
\multicolumn{1}{r}{$0.1857$} & \multicolumn{1}{r}{$0.1551$} &
\multicolumn{1}{r}{$0.1280$} & \\
$\widehat{\beta}_{33,\phi_{\lambda},P}$ & \multicolumn{1}{r}{$0.4803$} &
\multicolumn{1}{r}{$0.4770$} & \multicolumn{1}{r}{$0.4754$} &
\multicolumn{1}{r}{$0.4733$} & \multicolumn{1}{r}{$0.4714$} &
\multicolumn{1}{r}{$0.4697$} & \\\cline{2-8}%
$\widehat{\beta}_{41,\phi_{\lambda},P}$ & \multicolumn{1}{r}{$0.1715$} &
\multicolumn{1}{r}{$0.1870$} & \multicolumn{1}{r}{$0.1944$} &
\multicolumn{1}{r}{$0.2048$} & \multicolumn{1}{r}{$0.2143$} &
\multicolumn{1}{r}{$0.2228$} & \\
$\widehat{\beta}_{42,\phi_{\lambda},P}$ & \multicolumn{1}{r}{$0.2048$} &
\multicolumn{1}{r}{$0.1512$} & \multicolumn{1}{r}{$0.1256$} &
\multicolumn{1}{r}{$0.0896$} & \multicolumn{1}{r}{$0.0570$} &
\multicolumn{1}{r}{$0.0280$} & \\
$\widehat{\beta}_{43,\phi_{\lambda},P}$ & \multicolumn{1}{r}{$0.2070$} &
\multicolumn{1}{r}{$0.2488$} & \multicolumn{1}{r}{$0.2668$} &
\multicolumn{1}{r}{$0.2906$} & \multicolumn{1}{r}{$0.3111$} &
\multicolumn{1}{r}{$0.3288$} & \\\hline
$\widehat{\rho}_{2}^{2}(\widehat{\boldsymbol{\beta}}_{\phi_{\lambda},P})$ &
\multicolumn{1}{r}{$0.0119$} & \multicolumn{1}{r}{$0.0123$} &
\multicolumn{1}{r}{$0.0127$} & \multicolumn{1}{r}{$0.0135$} &
\multicolumn{1}{r}{$0.0142$} & \multicolumn{1}{r}{$0.0150$} & \\
$\widetilde{\rho}_{2}^{2}(\widehat{\boldsymbol{\beta}}_{\phi_{\lambda},P})$ &
\multicolumn{1}{r}{$0.0119$} & \multicolumn{1}{r}{$0.0048$} &
\multicolumn{1}{r}{$0.0051$} & \multicolumn{1}{r}{$0.0056$} &
\multicolumn{1}{r}{$0.0061$} & \multicolumn{1}{r}{$0.0067$} & \\\cline{2-8}%
$\widehat{\rho}_{3}^{2}(\widehat{\boldsymbol{\beta}}_{\phi_{\lambda},P})$ &
\multicolumn{1}{r}{$0.0088$} & \multicolumn{1}{r}{$0.0072$} &
\multicolumn{1}{r}{$0.0066$} & \multicolumn{1}{r}{$0.0059$} &
\multicolumn{1}{r}{$0.0054$} & \multicolumn{1}{r}{$0.0051$} & \\
$\widetilde{\rho}_{3}^{2}(\widehat{\boldsymbol{\beta}}_{\phi_{\lambda},P})$ &
\multicolumn{1}{r}{$0.0088$} & \multicolumn{1}{r}{$0.0014$} &
\multicolumn{1}{r}{$0.0010$} & \multicolumn{1}{r}{$0.0006$} &
\multicolumn{1}{r}{$0.0003$} & \multicolumn{1}{r}{$0.0000$} & \\\hline
\end{tabular}
\ \ $%
\caption{Pseudo minimum Cressie-Read divergence estimates of $\boldsymbol{\beta }$ and $\rho ^{2}$.\label{table3}}%
%TCIMACRO{\TeXButton{E}{\end{table}}}%
%BeginExpansion
\end{table}%
%EndExpansion
\pagebreak

\section{Simulation Study\label{sec5}}

In order to analyze the performance of the proposed estimators through root of
the mean square errors (RMSE), an adapted design focussed in the simulation
experiment proposed in Morel (1989) is conducted. Based on a unique stratum
with $n$ clusters of the same size $m$, three overdispersed multinomial
distributions for $\widehat{\boldsymbol{Y}}_{i}$\ described as%
\begin{align*}
\boldsymbol{E}[\widehat{\boldsymbol{Y}}_{i}]  &  =m\boldsymbol{\pi}_{i}\left(
\boldsymbol{\beta}_{0}\right)  \quad\text{and}\quad\boldsymbol{V}%
[\widehat{\boldsymbol{Y}}_{i}]=\nu_{m}m\boldsymbol{\Delta}(\boldsymbol{\pi
}_{i}\left(  \boldsymbol{\beta}_{0}\right)  ),\\
\nu_{m}  &  =1+\rho^{2}(m-1),
\end{align*}
are considered for $i=1,...,n$, the Dirichlet-multinomial (DM), the
random-clumped (RC) and the $m$-inflated distribution ($m$-I), all of them
with the same parameters $\boldsymbol{\pi}_{i}\left(  \boldsymbol{\beta}%
_{0}\right)  $ and $\rho$\ (see Appendix of Alonso et al. (2016) for details
of their generators). The value of the true probability associated with the
$i$-th cluster is $\boldsymbol{\pi}_{i}\left(  \boldsymbol{\beta}_{0}\right)
=(\pi_{i1}\left(  \boldsymbol{\beta}_{0}\right)  ,\pi_{i2}\left(
\boldsymbol{\beta}_{0}\right)  ,\pi_{i3}\left(  \boldsymbol{\beta}_{0}\right)
,\pi_{i4}\left(  \boldsymbol{\beta}_{0}\right)  )^{T}$, where
\[
\boldsymbol{\pi}_{ir}\left(  \boldsymbol{\beta}_{0}\right)  =\dfrac
{\exp\{\boldsymbol{x}_{i}^{T}\boldsymbol{\beta}_{r,0}\}}{%
%TCIMACRO{\tsum _{s=1}^{d+1}}%
%BeginExpansion
{\textstyle\sum_{s=1}^{d+1}}
%EndExpansion
\exp\{\boldsymbol{x}_{i}^{T}\boldsymbol{\beta}_{s,0}\}},\quad r=1,2,3,4,
\]
$\boldsymbol{\beta}=(\boldsymbol{\beta}_{1}^{T},\boldsymbol{\beta}_{2}%
^{T},\boldsymbol{\beta}_{3}^{T},\boldsymbol{\beta}_{4}^{T})^{T}$, with
$\boldsymbol{\beta}_{1}^{T}=(-0.3,-0.1,0.1,0.2)$, $\boldsymbol{\beta}_{2}%
^{T}=(0.2,-0.2,-0.2,0.1)$, $\boldsymbol{\beta}_{3}^{T}=(-0.1,0.3,-0.3,0.1)$,
$\boldsymbol{\beta}_{4}^{T}=(0,0,0,0)$, and%
\[
\boldsymbol{x}_{i}\overset{ind}{\sim}\mathcal{N}(\boldsymbol{\mu
},\boldsymbol{\Sigma}),\quad\boldsymbol{\mu}=(1,-2,1,5)^{T},\quad
\boldsymbol{\Sigma}=\mathrm{diag}\{0,25,25,25\},\quad i=1,\ldots,n,
\]
while the value true intra-cluster correlation parameter, $\rho^{2}$, is
different depending on the scenario. Notice that $d=3$ and $k=4$, and the
values of $n$ and $m$ are different depending on the scenario.

\begin{itemize}
\item Scenario 1: $n=60$, $m=21$, $\rho^{2}\in\{0.05i\}_{i=0}^{19}$, DM, RC
and $m$-I distributions (Figures \ref{fig1}-\ref{fig3});

\item Scenario 2: $n\in\{10i\}_{i=1}^{15}$, $m=21$, $\rho^{2}=0.25$, RC
distribution (Figure \ref{fig4});

\item Scenario 3: $n=60$, $m\in\{10i\}_{i=1}^{10}$, $\rho^{2}=0.25$, RC
distribution (Figures \ref{fig5}-\ref{fig6}, above);

\item Scenario 4: $n=60$, $m\in\{10i\}_{i=1}^{10}$, $\rho^{2}=0.75$, RC
distribution (Figures \ref{fig5}-\ref{fig6}, middle);

\item Scenario 5: $n=20$, $m\in\{10i\}_{i=1}^{10}$, $\rho^{2}=0.25$, RC
distribution (Figures \ref{fig5}-\ref{fig6}, below).
\end{itemize}

In the previous scenarios the RMSE for the pseudo minimum Cressie-Read
divergence estimators of $\boldsymbol{\beta}$ with $\lambda\in\{0,\frac{2}%
{3},1,1.5,2,2.5\}$ are studied, as well as for the estimators of $\rho^{2}$,
depending on the method (of moments or Binder) and the value of $\lambda$ to
estimate $\boldsymbol{\beta}$ (ordinal axis of the plots). As expected from a
theoretical point of view, the simulations show that the RMSE increases as
$\rho^{2}$ increases, $n$ decreases or $m$ decreases.

For $\boldsymbol{\beta}$, the interest of the pseudo minimum Cressie-Read
divergence estimators is clearly justified for small-moderate sizes of $n$ and
strong-moderate intra-cluster correlation. The cluster size, $m$, affects but
not so much as the number of clusters, $n$. More thoroughly, in these cases,
the value of $\lambda\in\{\frac{2}{3},1,1.5,2,2.5\}$ exhibits better
performance than the pseudo maximum likelihood estimator ($\lambda=0$).

For the estimators of the intra-cluster correlation coefficient two clear and
important findings, valid for any value of $n$, $m$, or true value of
$\rho^{2}$\ , are:

\begin{description}
\item[*] The estimator of $\rho^{2}$ with the method of of moments is not
recommended, since the estimator with the Binder's method is much better.

\item[*] The best estimator of $\rho^{2}$ with the Binder's method is obtained
with $\lambda=\frac{2}{3}$.
\end{description}%

%TCIMACRO{\TeXButton{B}{\begin{figure}[htbp]  \centering}}%
%BeginExpansion
\begin{figure}[htbp]  \centering
%EndExpansion
$\
\begin{tabular}
[c]{c}%
%TCIMACRO{\FRAME{itbpF}{4.4624in}{2.7466in}{0in}{}{}{dmbeta.eps}%
%{\special{ language "Scientific Word";  type "GRAPHIC";
%maintain-aspect-ratio TRUE;  display "USEDEF";  valid_file "F";
%width 4.4624in;  height 2.7466in;  depth 0in;  original-width 7.7798in;
%original-height 5.8271in;  cropleft "0";  croptop "1";  cropright "1";
%cropbottom "0.1813";  filename '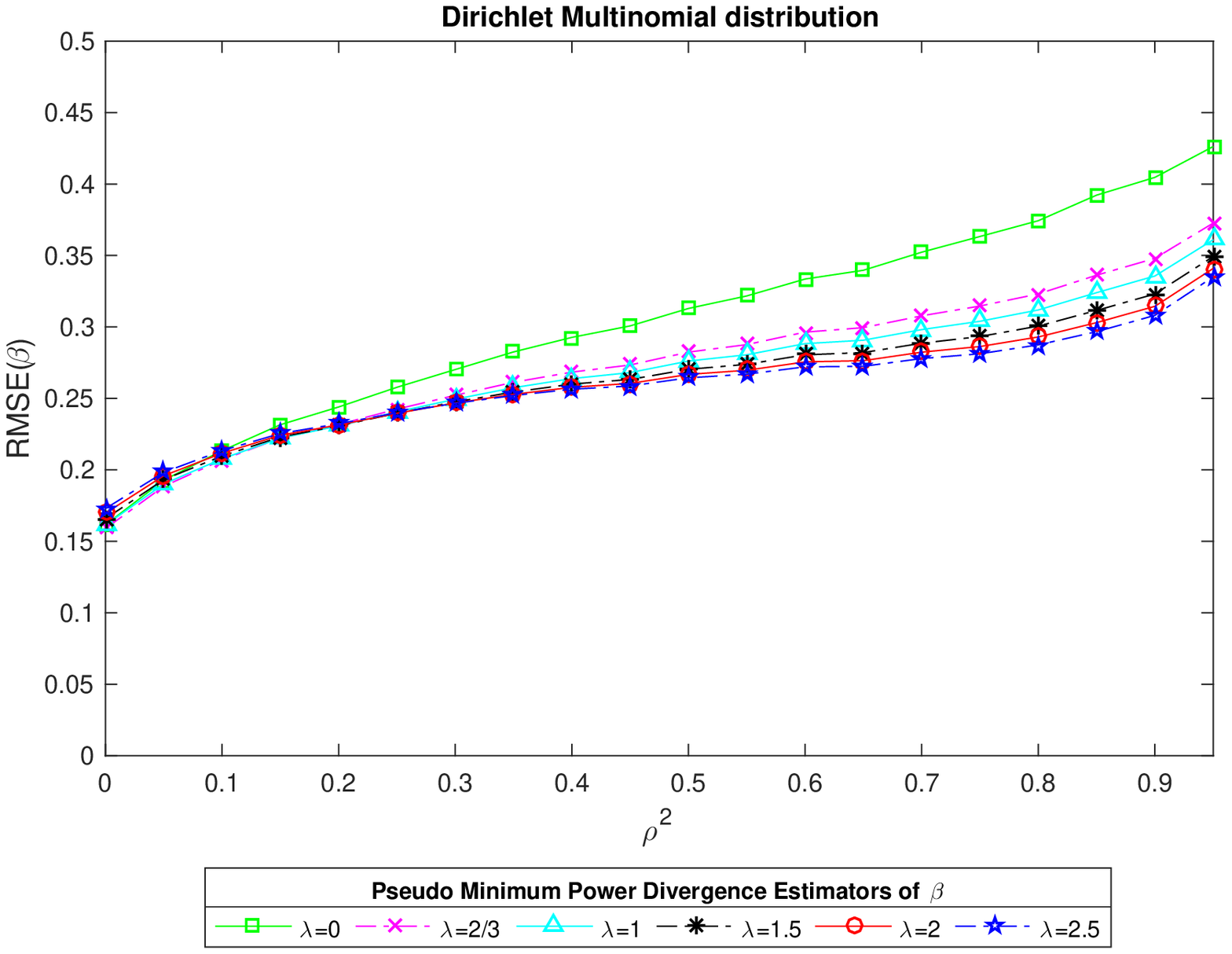';file-properties "XNPEU";}} }%
%BeginExpansion
{\includegraphics[
trim=0.000000in 1.056453in 0.000000in 0.000000in,
height=2.7466in,
width=4.4624in
]%
{DMbeta.eps}%
}
%EndExpansion
\\%
%TCIMACRO{\FRAME{itbpF}{4.4624in}{2.7466in}{0in}{}{}{rcbeta.eps}%
%{\special{ language "Scientific Word";  type "GRAPHIC";
%maintain-aspect-ratio TRUE;  display "USEDEF";  valid_file "F";
%width 4.4624in;  height 2.7466in;  depth 0in;  original-width 7.7798in;
%original-height 5.8271in;  cropleft "0";  croptop "1";  cropright "1";
%cropbottom "0.1813";  filename '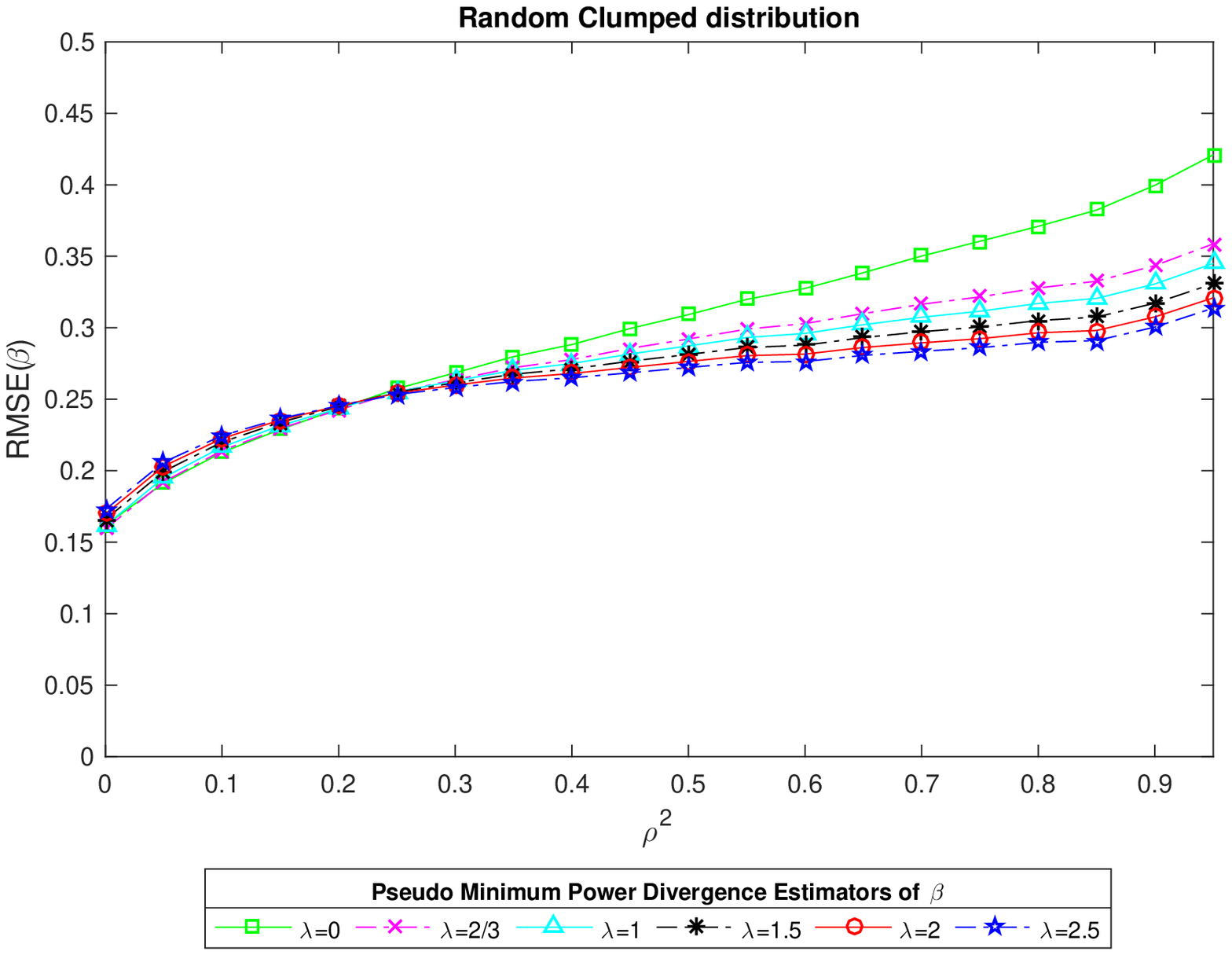';file-properties "XNPEU";}} }%
%BeginExpansion
{\includegraphics[
trim=0.000000in 1.056453in 0.000000in 0.000000in,
height=2.7466in,
width=4.4624in
]%
{RCbeta.eps}%
}
%EndExpansion
\\%
%TCIMACRO{\FRAME{itbpF}{4.4624in}{3.3486in}{0in}{}{}{nibeta.eps}%
%{\special{ language "Scientific Word";  type "GRAPHIC";
%maintain-aspect-ratio TRUE;  display "USEDEF";  valid_file "F";
%width 4.4624in;  height 3.3486in;  depth 0in;  original-width 7.7798in;
%original-height 5.8271in;  cropleft "0";  croptop "1";  cropright "1";
%cropbottom "0";  filename '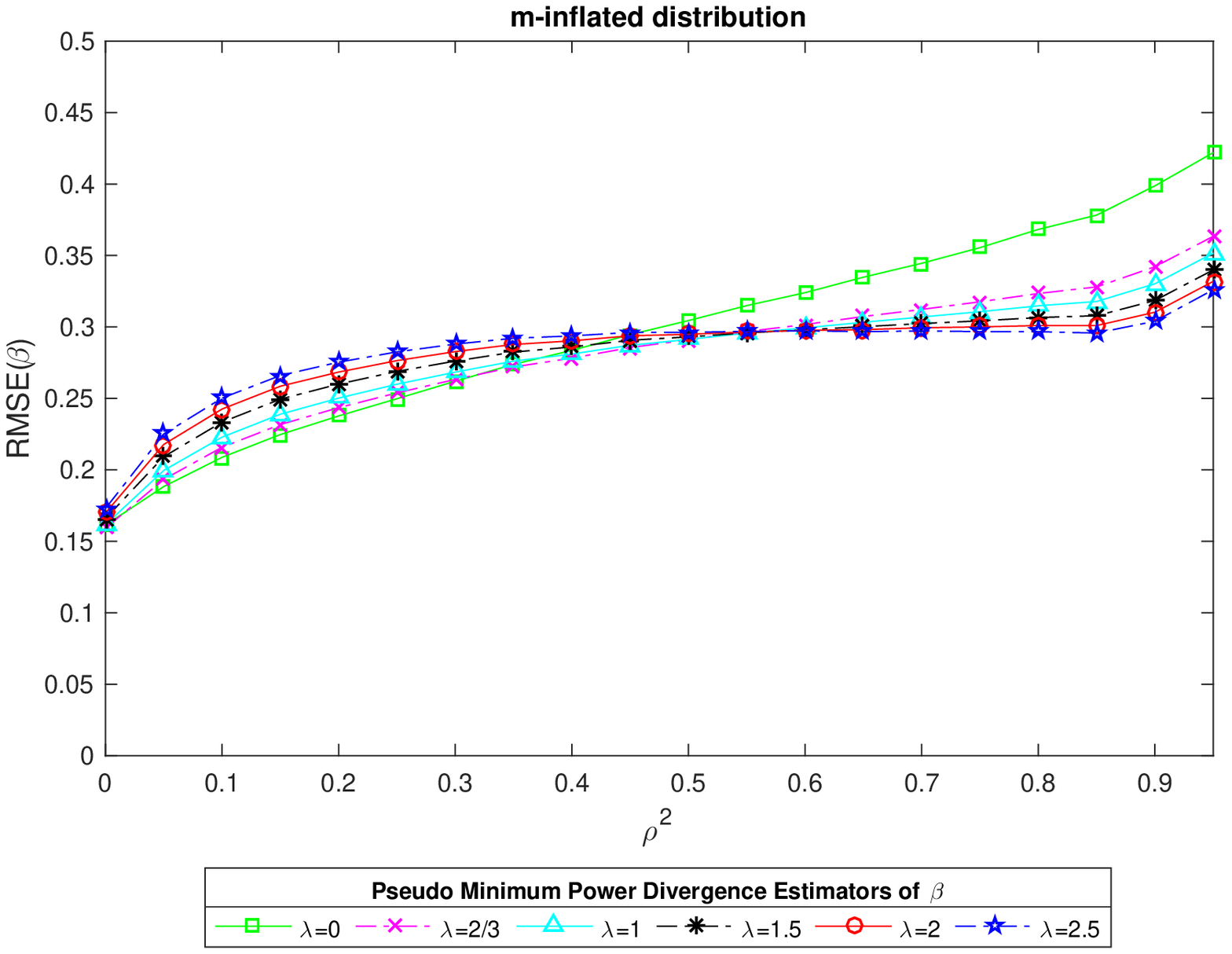';file-properties "XNPEU";}} }%
%BeginExpansion
{\includegraphics[
height=3.3486in,
width=4.4624in
]%
{NIbeta.eps}%
}
%EndExpansion
\end{tabular}
\ $%
\caption{RMSEs of of seudo minimum Cressie-Read divergence estimators of $\boldsymbol{\beta }$ for three distributions.\label{fig1}}%
%TCIMACRO{\TeXButton{E}{\end{figure}}}%
%BeginExpansion
\end{figure}%
%EndExpansion
%

%TCIMACRO{\TeXButton{B}{\begin{figure}[htbp]  \centering}}%
%BeginExpansion
\begin{figure}[htbp]  \centering
%EndExpansion
$\
\begin{tabular}
[c]{c}%
%TCIMACRO{\FRAME{itbpF}{4.4624in}{2.7475in}{0in}{}{}{dmrho1.eps}%
%{\special{ language "Scientific Word";  type "GRAPHIC";
%maintain-aspect-ratio TRUE;  display "USEDEF";  valid_file "F";
%width 4.4624in;  height 2.7475in;  depth 0in;  original-width 7.7798in;
%original-height 5.8271in;  cropleft "0";  croptop "1";  cropright "1";
%cropbottom "0.1813";  filename '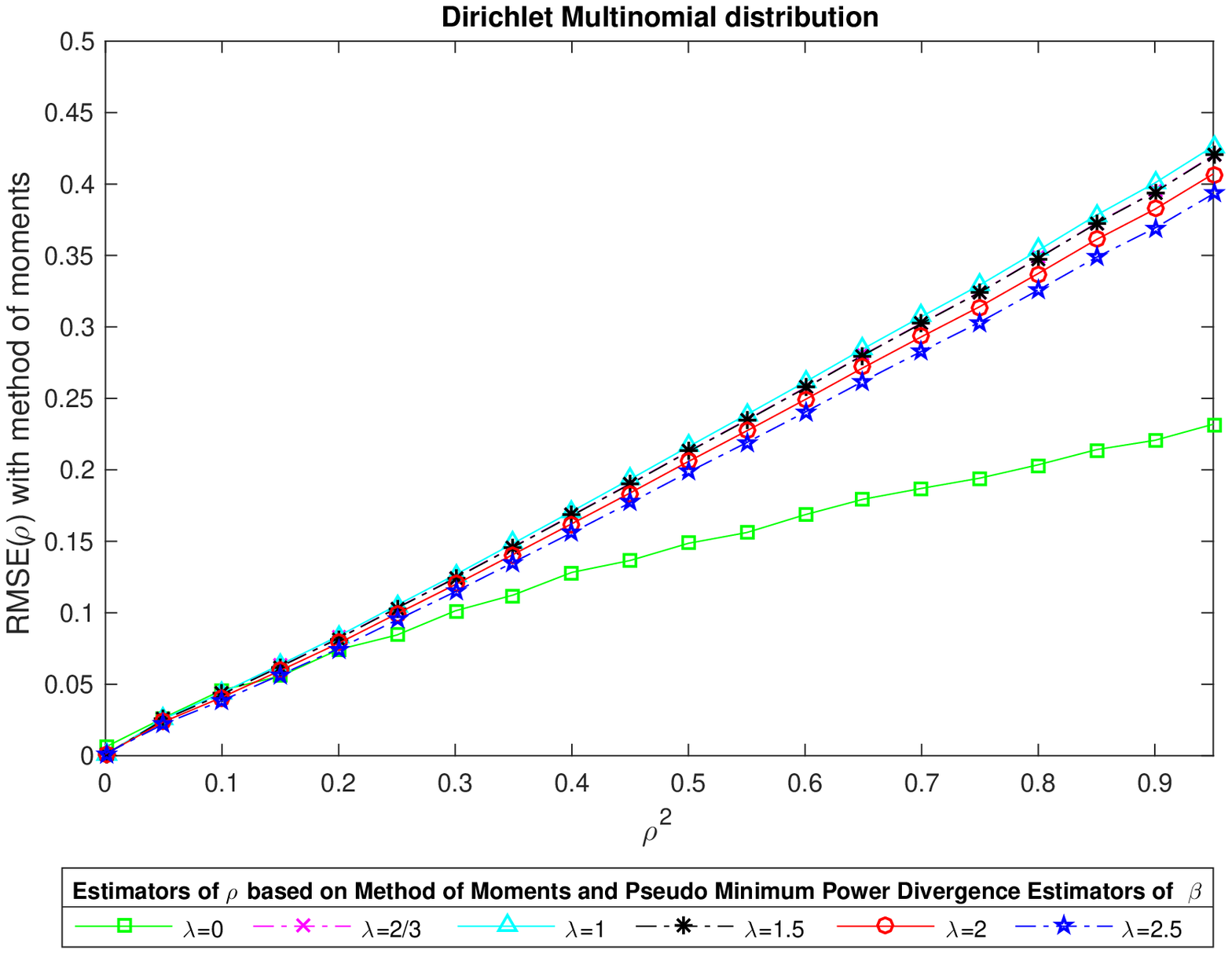';file-properties "XNPEU";}} }%
%BeginExpansion
{\includegraphics[
trim=0.000000in 1.056453in 0.000000in 0.000000in,
height=2.7475in,
width=4.4624in
]%
{DMrho1.eps}%
}
%EndExpansion
\\%
%TCIMACRO{\FRAME{itbpF}{4.4624in}{2.7475in}{0in}{}{}{rcrho1.eps}%
%{\special{ language "Scientific Word";  type "GRAPHIC";
%maintain-aspect-ratio TRUE;  display "USEDEF";  valid_file "F";
%width 4.4624in;  height 2.7475in;  depth 0in;  original-width 7.7798in;
%original-height 5.8271in;  cropleft "0";  croptop "1";  cropright "1";
%cropbottom "0.1813";  filename '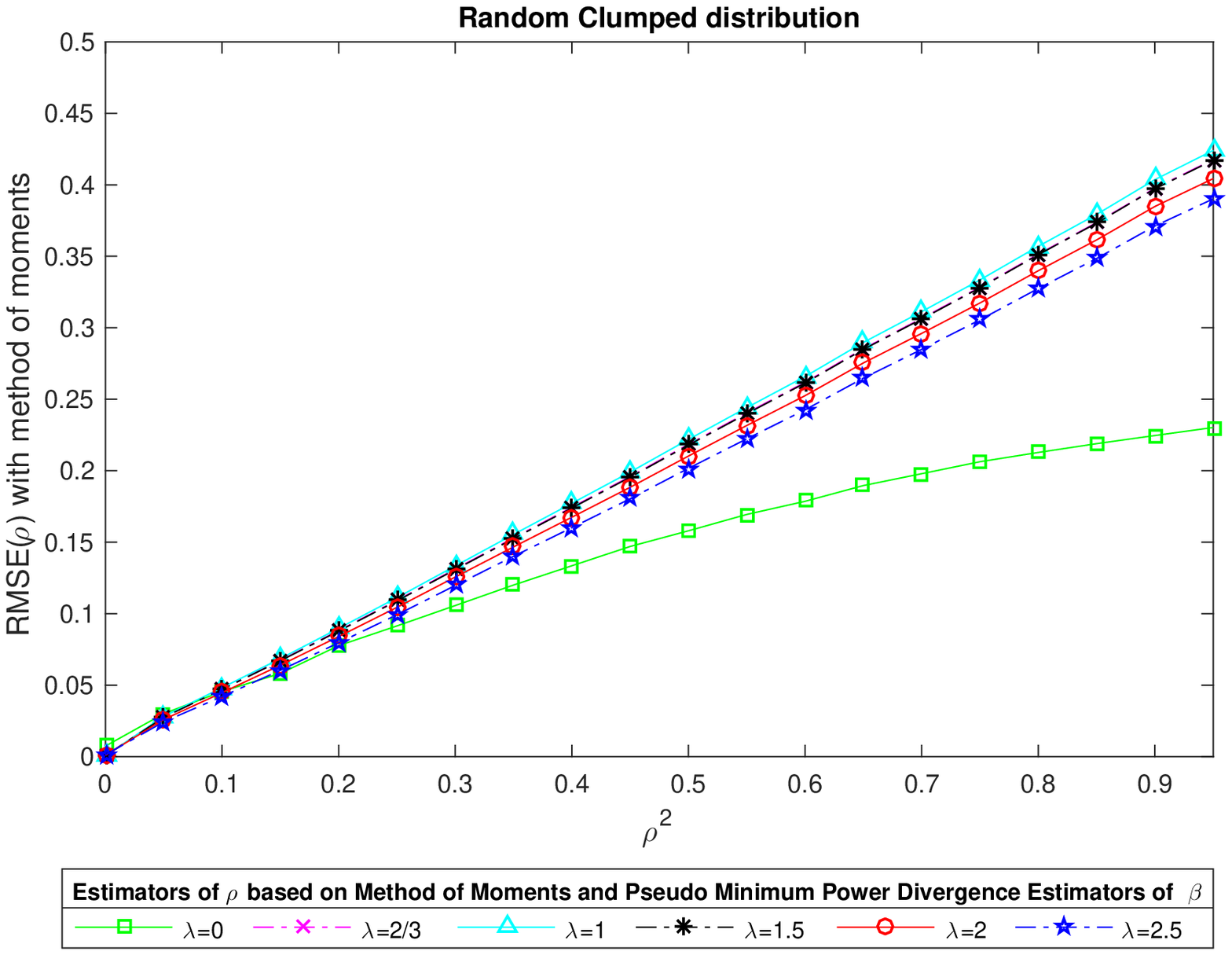';file-properties "XNPEU";}} }%
%BeginExpansion
{\includegraphics[
trim=0.000000in 1.056453in 0.000000in 0.000000in,
height=2.7475in,
width=4.4624in
]%
{RCrho1.eps}%
}
%EndExpansion
\\%
%TCIMACRO{\FRAME{itbpF}{4.4624in}{3.3486in}{0in}{}{}{nirho1.eps}%
%{\special{ language "Scientific Word";  type "GRAPHIC";
%maintain-aspect-ratio TRUE;  display "USEDEF";  valid_file "F";
%width 4.4624in;  height 3.3486in;  depth 0in;  original-width 7.7798in;
%original-height 5.8271in;  cropleft "0";  croptop "1";  cropright "1";
%cropbottom "0";  filename '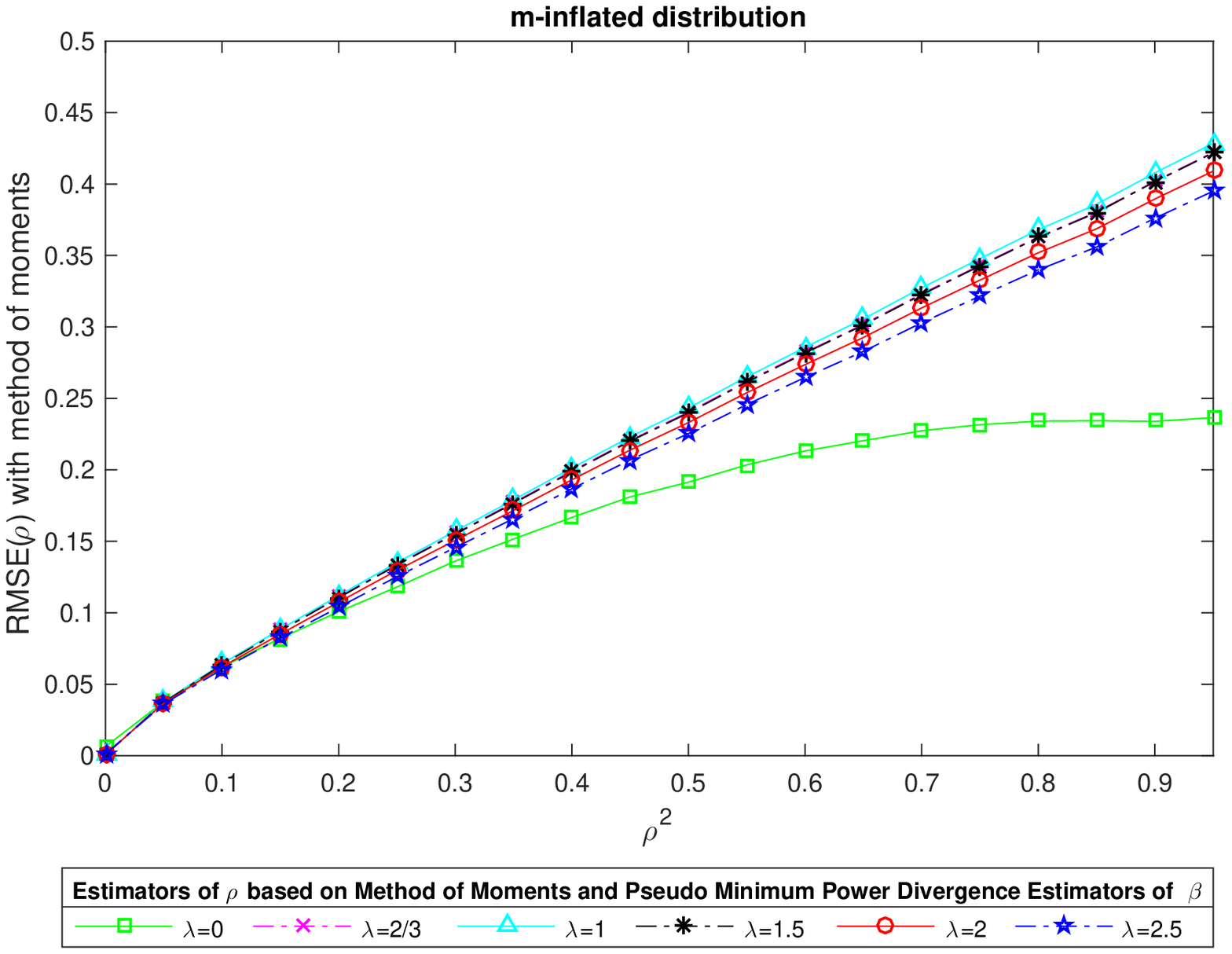';file-properties "XNPEU";}} }%
%BeginExpansion
{\includegraphics[
height=3.3486in,
width=4.4624in
]%
{NIrho1.eps}%
}
%EndExpansion
\end{tabular}
\ $%
\caption{RMSEs of estimators of $\rho ^{2}$ based on the method of moments for three distributions.\label{fig2}}%
%TCIMACRO{\TeXButton{E}{\end{figure}}}%
%BeginExpansion
\end{figure}%
%EndExpansion
%

%TCIMACRO{\TeXButton{B}{\begin{figure}[htbp]  \centering}}%
%BeginExpansion
\begin{figure}[htbp]  \centering
%EndExpansion
$\
\begin{tabular}
[c]{c}%
%TCIMACRO{\FRAME{itbpF}{4.4624in}{2.7466in}{0in}{}{}{dmrho2.eps}%
%{\special{ language "Scientific Word";  type "GRAPHIC";
%maintain-aspect-ratio TRUE;  display "USEDEF";  valid_file "F";
%width 4.4624in;  height 2.7466in;  depth 0in;  original-width 7.7798in;
%original-height 5.8271in;  cropleft "0";  croptop "1";  cropright "1";
%cropbottom "0.1813";  filename '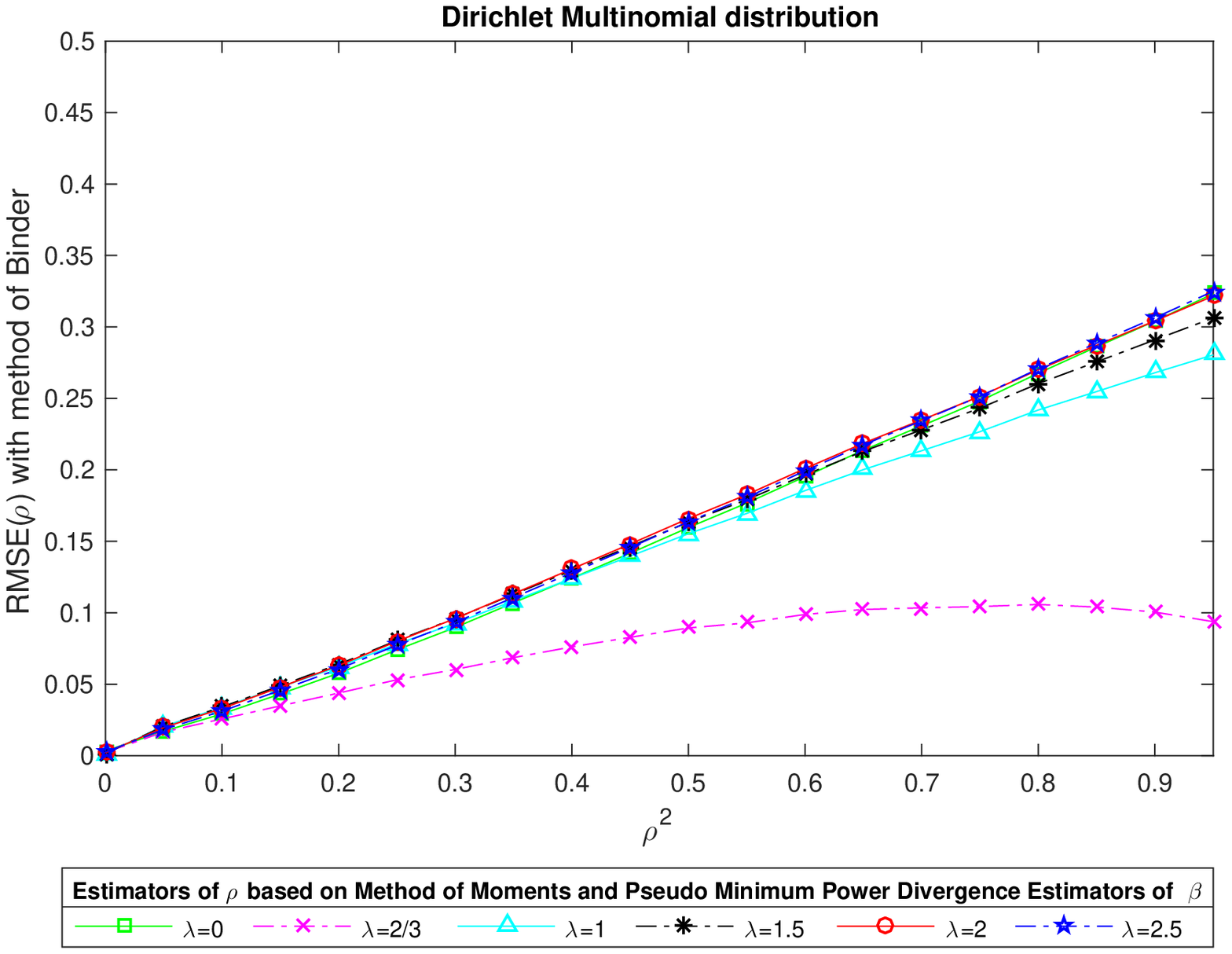';file-properties "XNPEU";}} }%
%BeginExpansion
{\includegraphics[
trim=0.000000in 1.056453in 0.000000in 0.000000in,
height=2.7466in,
width=4.4624in
]%
{DMrho2.eps}%
}
%EndExpansion
\\%
%TCIMACRO{\FRAME{itbpF}{4.4616in}{2.7475in}{0in}{}{}{rcrho2.eps}%
%{\special{ language "Scientific Word";  type "GRAPHIC";
%maintain-aspect-ratio TRUE;  display "USEDEF";  valid_file "F";
%width 4.4616in;  height 2.7475in;  depth 0in;  original-width 7.7798in;
%original-height 5.8271in;  cropleft "0";  croptop "1";  cropright "1";
%cropbottom "0.1813";  filename '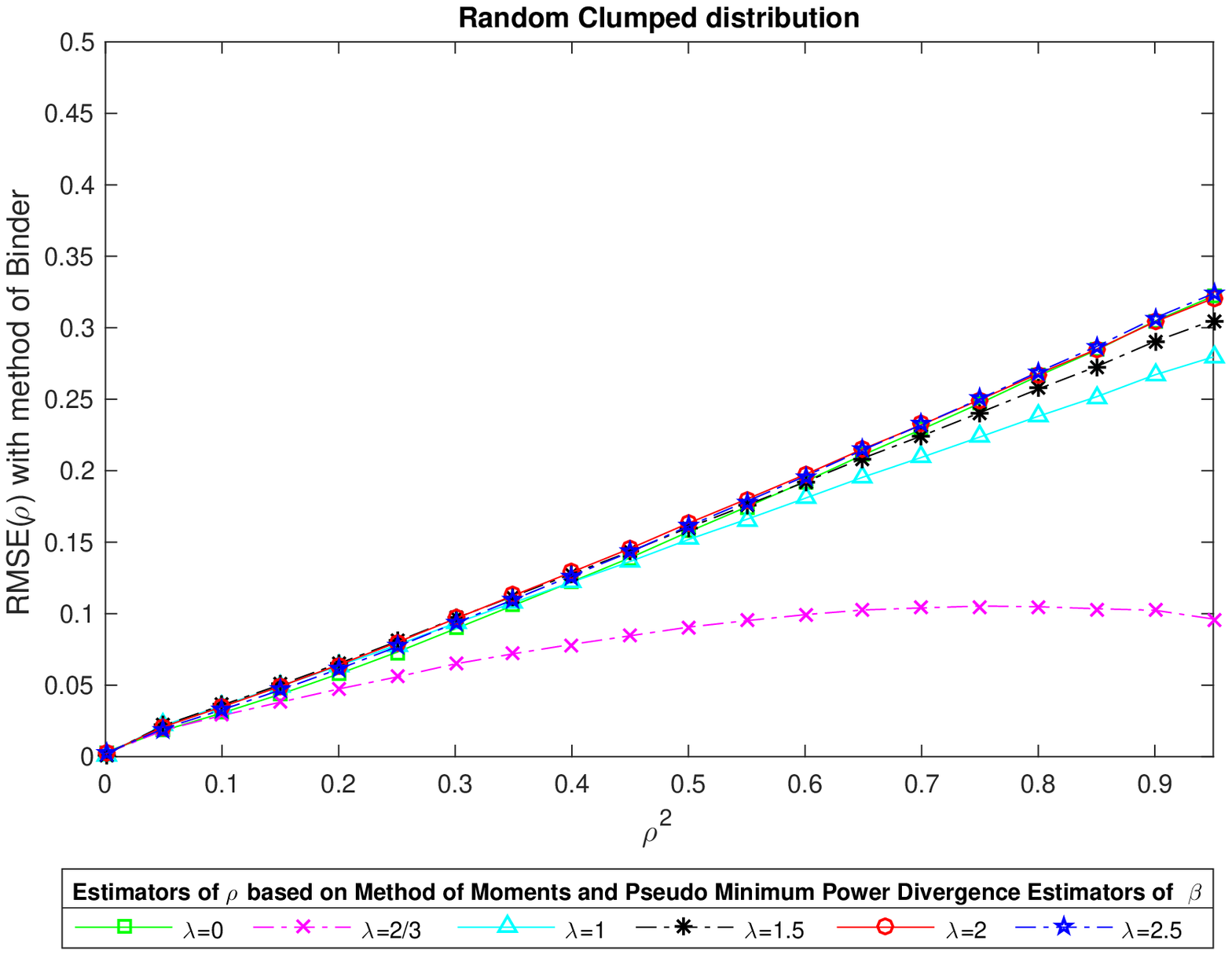';file-properties "XNPEU";}} }%
%BeginExpansion
{\includegraphics[
trim=0.000000in 1.056453in 0.000000in 0.000000in,
height=2.7475in,
width=4.4616in
]%
{RCrho2.eps}%
}
%EndExpansion
\\%
%TCIMACRO{\FRAME{itbpF}{4.4624in}{3.3486in}{0in}{}{}{nirho2.eps}%
%{\special{ language "Scientific Word";  type "GRAPHIC";
%maintain-aspect-ratio TRUE;  display "USEDEF";  valid_file "F";
%width 4.4624in;  height 3.3486in;  depth 0in;  original-width 7.7798in;
%original-height 5.8271in;  cropleft "0";  croptop "1";  cropright "1";
%cropbottom "0";  filename '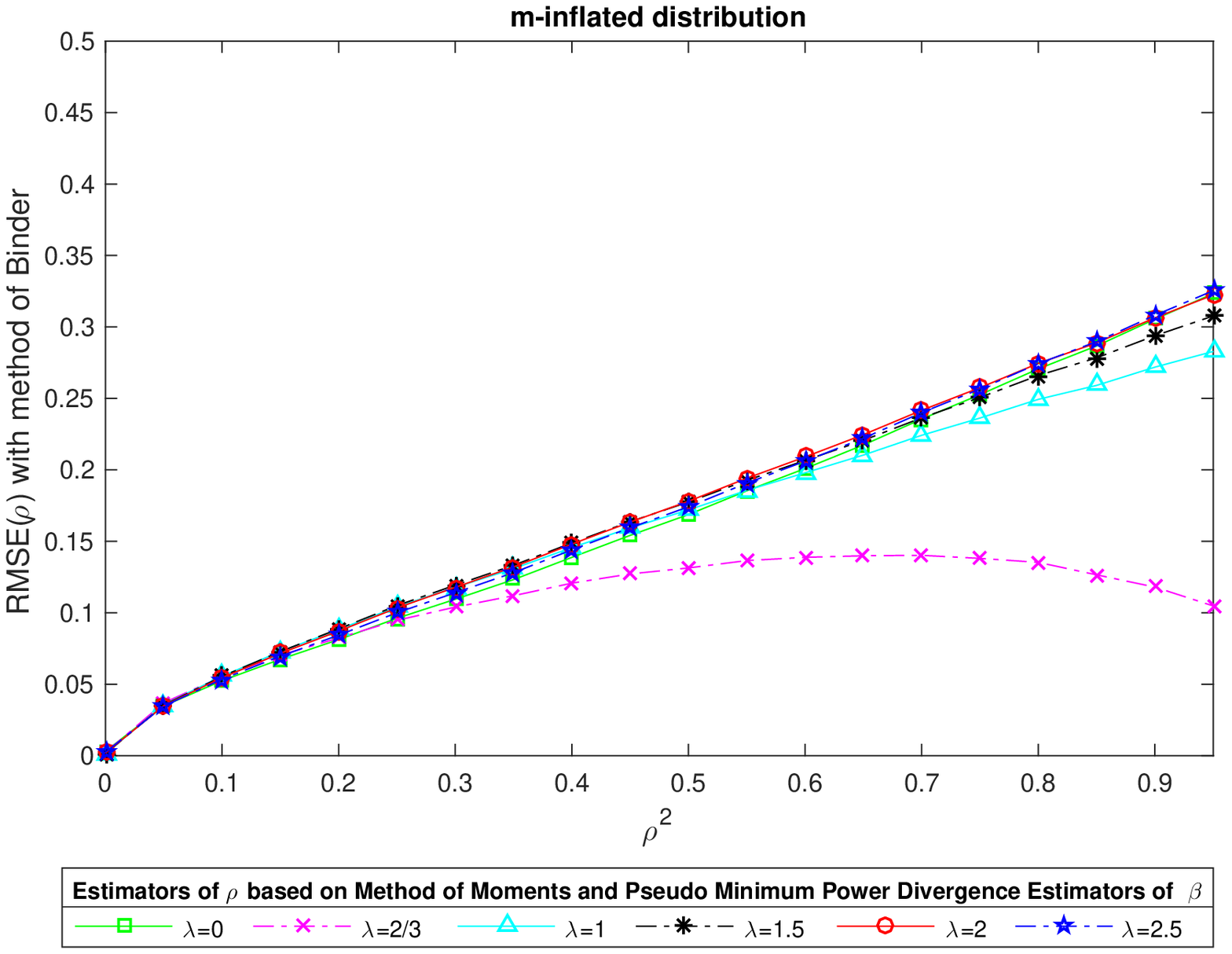';file-properties "XNPEU";}} }%
%BeginExpansion
{\includegraphics[
height=3.3486in,
width=4.4624in
]%
{NIrho2.eps}%
}
%EndExpansion
\end{tabular}
\ $%
\caption{RMSEs of estimators of $\rho ^{2}$ based on the method of Binder.\label{fig3}}%
%TCIMACRO{\TeXButton{E}{\end{figure}}}%
%BeginExpansion
\end{figure}%
%EndExpansion
%

%TCIMACRO{\TeXButton{B}{\begin{figure}[htbp]  \centering}}%
%BeginExpansion
\begin{figure}[htbp]  \centering
%EndExpansion
$\
\begin{tabular}
[c]{c}%
%TCIMACRO{\FRAME{itbpF}{4.4624in}{2.7475in}{0in}{}{}{rcbetan.eps}%
%{\special{ language "Scientific Word";  type "GRAPHIC";
%maintain-aspect-ratio TRUE;  display "USEDEF";  valid_file "F";
%width 4.4624in;  height 2.7475in;  depth 0in;  original-width 7.7798in;
%original-height 5.8271in;  cropleft "0";  croptop "1";  cropright "1";
%cropbottom "0.1813";  filename '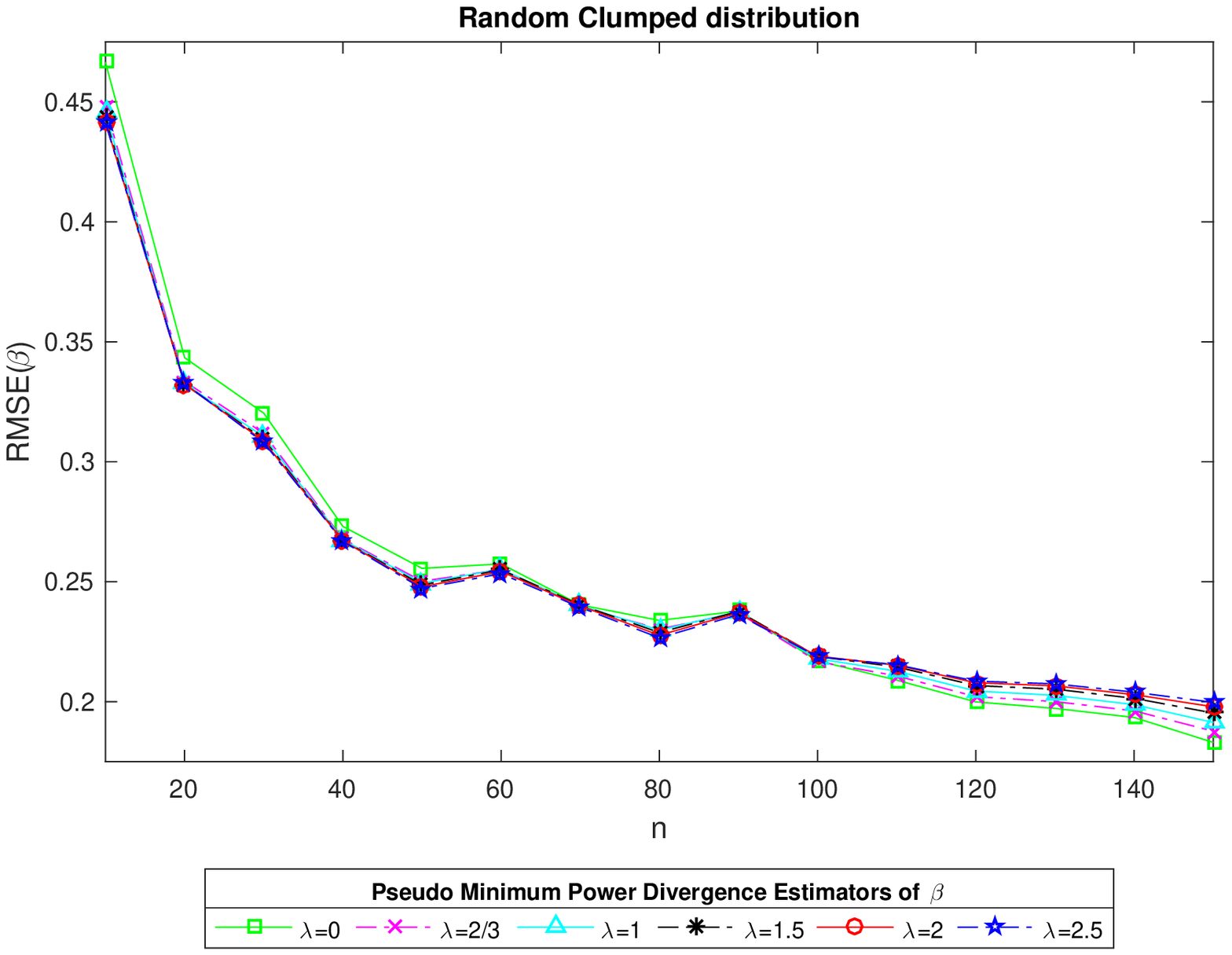';file-properties "XNPEU";}} }%
%BeginExpansion
{\includegraphics[
trim=0.000000in 1.056453in 0.000000in 0.000000in,
height=2.7475in,
width=4.4624in
]%
{Rcbetan.eps}%
}
%EndExpansion
\\%
%TCIMACRO{\FRAME{itbpF}{4.4624in}{2.7475in}{0in}{}{}{rcrho1n.eps}%
%{\special{ language "Scientific Word";  type "GRAPHIC";
%maintain-aspect-ratio TRUE;  display "USEDEF";  valid_file "F";
%width 4.4624in;  height 2.7475in;  depth 0in;  original-width 7.7798in;
%original-height 5.8271in;  cropleft "0";  croptop "1";  cropright "1";
%cropbottom "0.1813";  filename '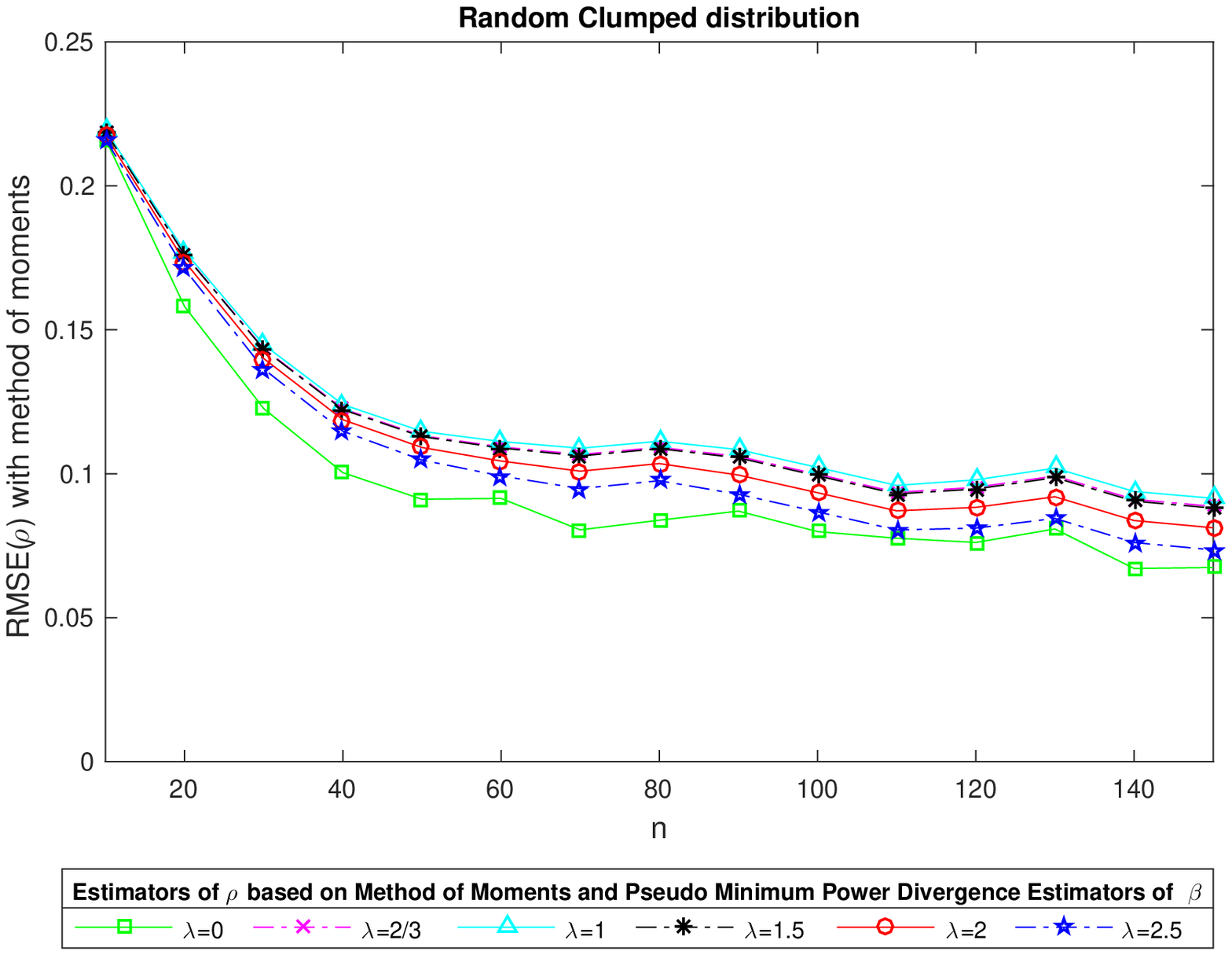';file-properties "XNPEU";}} }%
%BeginExpansion
{\includegraphics[
trim=0.000000in 1.056453in 0.000000in 0.000000in,
height=2.7475in,
width=4.4624in
]%
{RCrho1n.eps}%
}
%EndExpansion
\\%
%TCIMACRO{\FRAME{itbpF}{4.4624in}{3.3501in}{0in}{}{}{rcrho2n.eps}%
%{\special{ language "Scientific Word";  type "GRAPHIC";
%maintain-aspect-ratio TRUE;  display "USEDEF";  valid_file "F";
%width 4.4624in;  height 3.3501in;  depth 0in;  original-width 7.7798in;
%original-height 5.8271in;  cropleft "0";  croptop "1";  cropright "1";
%cropbottom "0";  filename '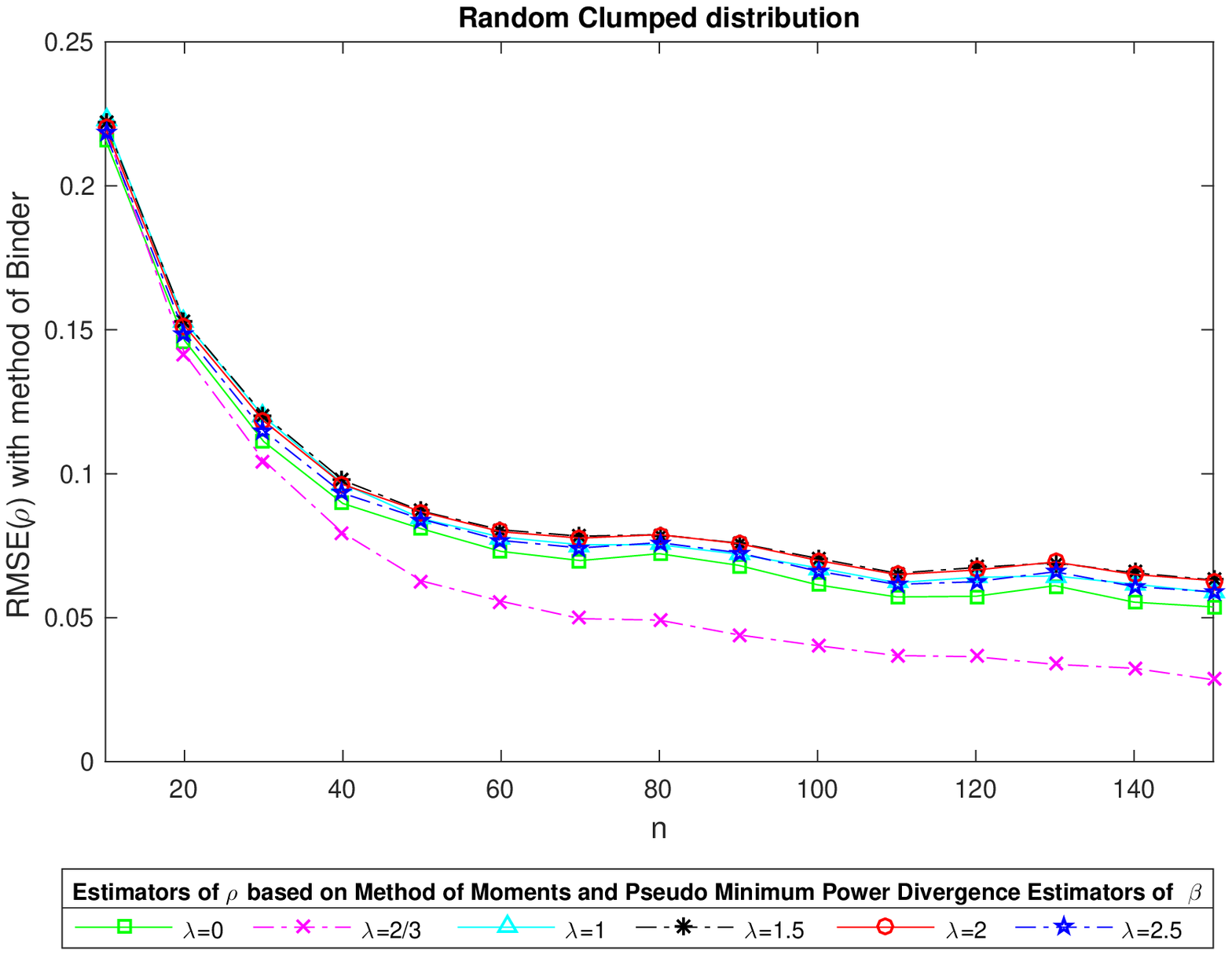';file-properties "XNPEU";}} }%
%BeginExpansion
{\includegraphics[
height=3.3501in,
width=4.4624in
]%
{RCrho2n.eps}%
}
%EndExpansion
\end{tabular}
\ \ \ $%
\caption{RMSEs of estimators of $\boldsymbol{\beta }$ and $\rho ^{2}$ when the total number of clusters, $n$, increases, for the random clumped distribution. Case $m=21$, $\rho=0.25$.\label{fig4}}%
%TCIMACRO{\TeXButton{E}{\end{figure}}}%
%BeginExpansion
\end{figure}%
%EndExpansion
%

%TCIMACRO{\TeXButton{B}{\begin{figure}[htbp]  \centering}}%
%BeginExpansion
\begin{figure}[htbp]  \centering
%EndExpansion
$\
\begin{tabular}
[c]{c}%
%TCIMACRO{\FRAME{itbpF}{4.4624in}{2.7475in}{0in}{}{}{rcbetam.eps}%
%{\special{ language "Scientific Word";  type "GRAPHIC";
%maintain-aspect-ratio TRUE;  display "USEDEF";  valid_file "F";
%width 4.4624in;  height 2.7475in;  depth 0in;  original-width 7.7798in;
%original-height 5.8271in;  cropleft "0";  croptop "1";  cropright "1";
%cropbottom "0.1813";  filename '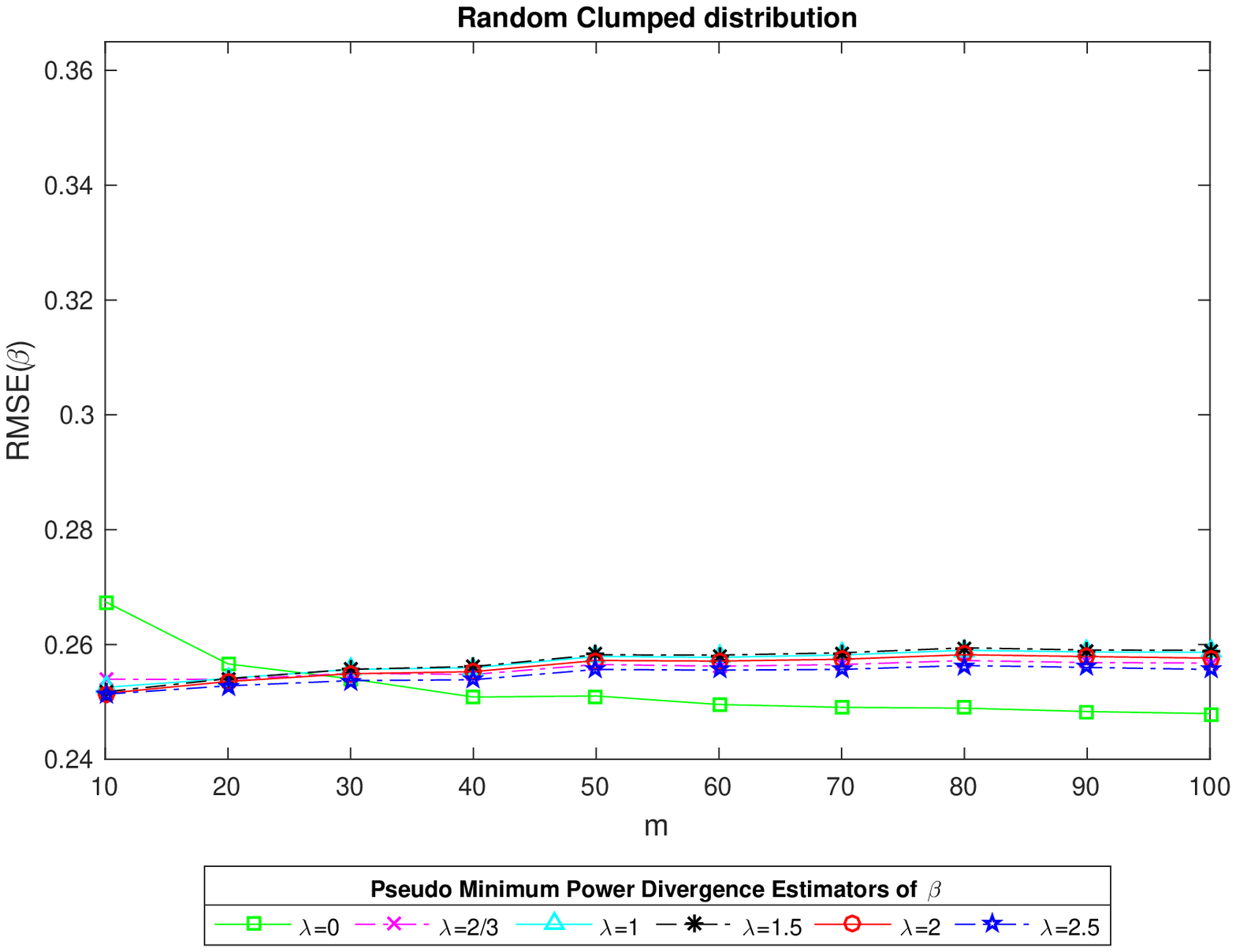';file-properties "XNPEU";}} }%
%BeginExpansion
{\includegraphics[
trim=0.000000in 1.056453in 0.000000in 0.000000in,
height=2.7475in,
width=4.4624in
]%
{RCbetam.eps}%
}
%EndExpansion
\\%
%TCIMACRO{\FRAME{itbpF}{4.4624in}{2.7466in}{0in}{}{}{rcbetam2.eps}%
%{\special{ language "Scientific Word";  type "GRAPHIC";
%maintain-aspect-ratio TRUE;  display "USEDEF";  valid_file "F";
%width 4.4624in;  height 2.7466in;  depth 0in;  original-width 7.7798in;
%original-height 5.8271in;  cropleft "0";  croptop "1";  cropright "1";
%cropbottom "0.1813";  filename '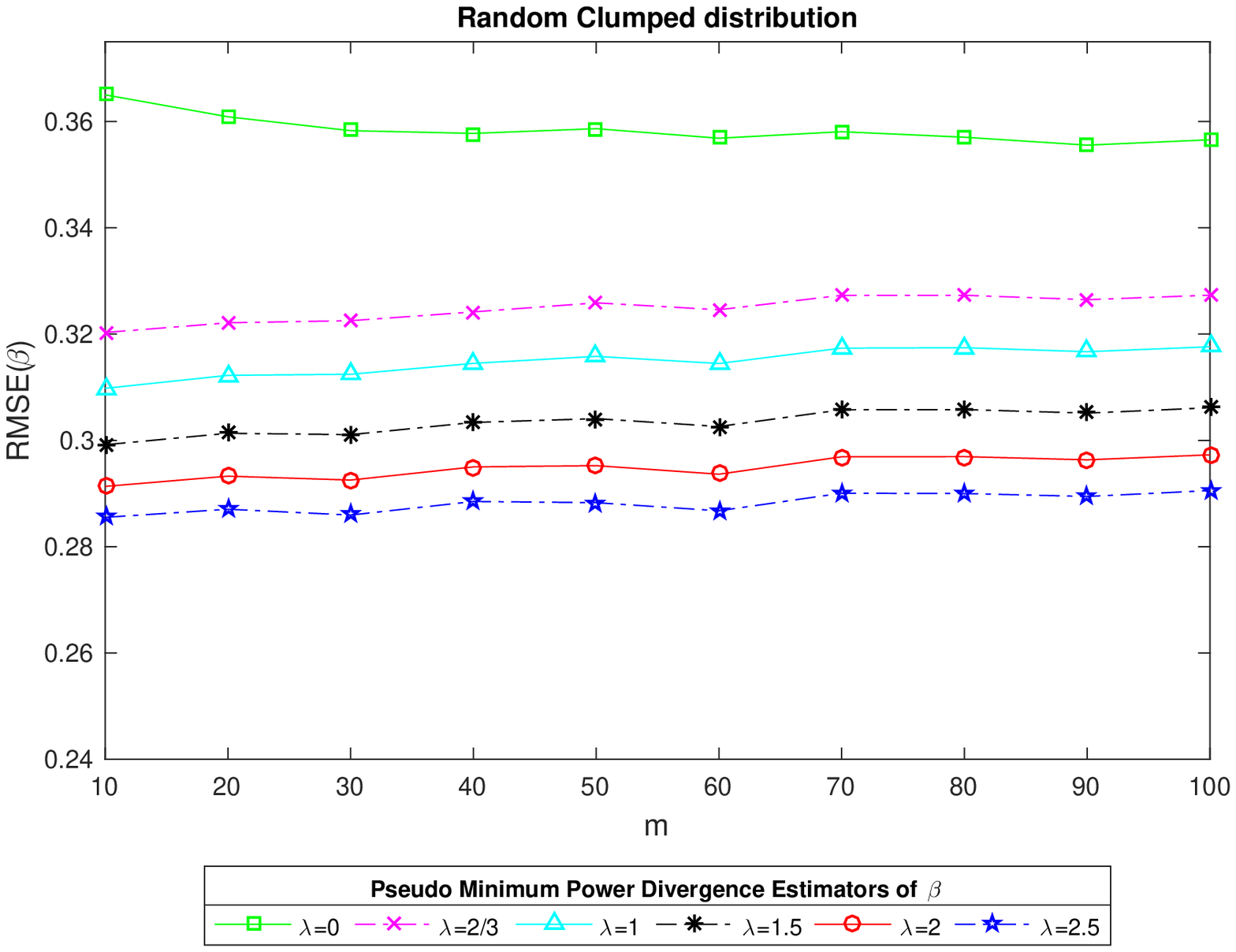';file-properties "XNPEU";}} }%
%BeginExpansion
{\includegraphics[
trim=0.000000in 1.056453in 0.000000in 0.000000in,
height=2.7466in,
width=4.4624in
]%
{RCbetam2.eps}%
}
%EndExpansion
\\%
%TCIMACRO{\FRAME{itbpF}{4.4624in}{3.3486in}{0in}{}{}{rcbetam3.eps}%
%{\special{ language "Scientific Word";  type "GRAPHIC";
%maintain-aspect-ratio TRUE;  display "USEDEF";  valid_file "F";
%width 4.4624in;  height 3.3486in;  depth 0in;  original-width 7.7798in;
%original-height 5.8271in;  cropleft "0";  croptop "1";  cropright "1";
%cropbottom "0";  filename '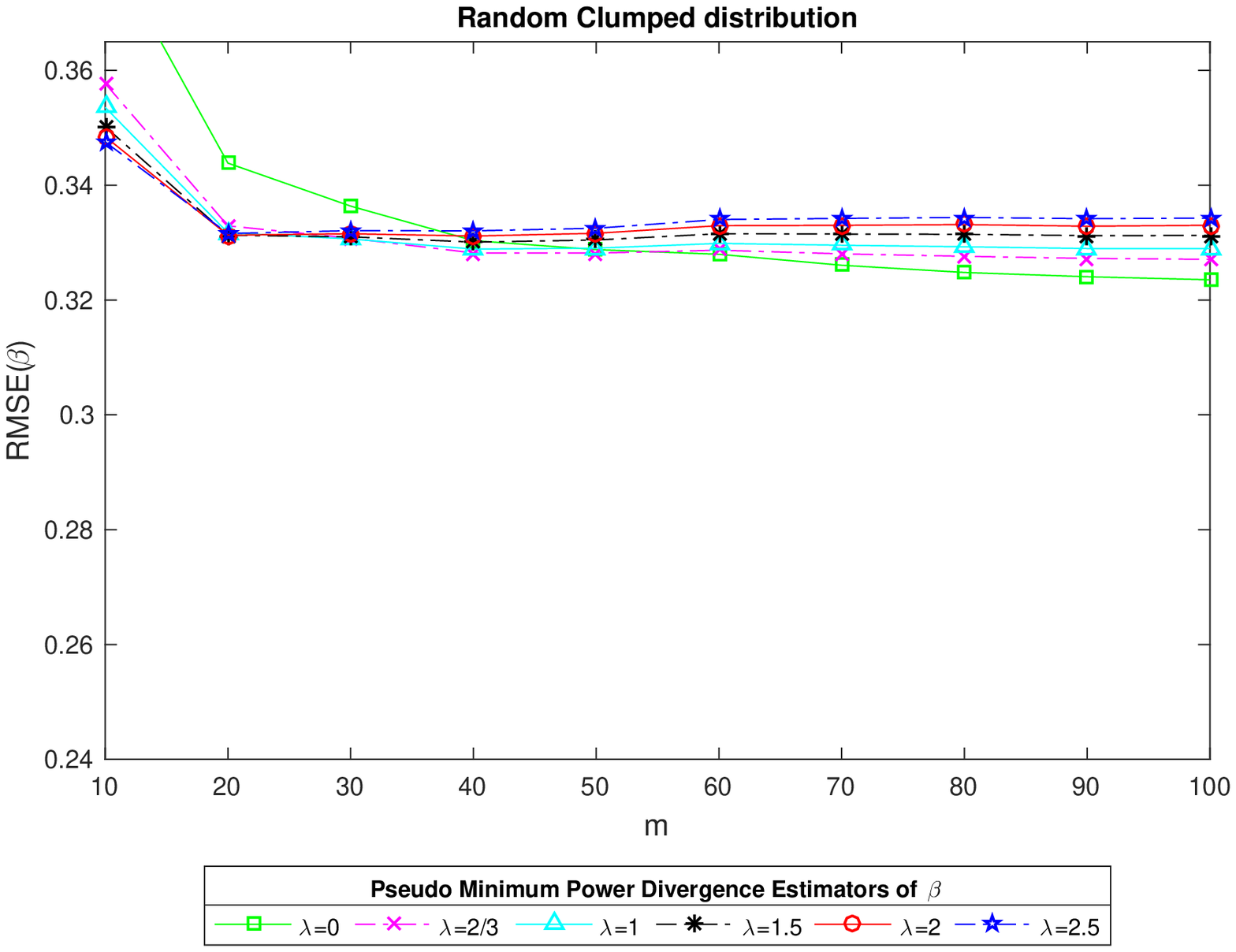';file-properties "XNPEU";}} }%
%BeginExpansion
{\includegraphics[
height=3.3486in,
width=4.4624in
]%
{RCbetam3.eps}%
}
%EndExpansion
\end{tabular}
\ \ \ \ $%
\caption{RMSEs of estimators of $\boldsymbol{\beta }$ when the number of individuals within clusters, $m$, increases, for the random clumped distribution. Cases: $n=60$, $\rho=0.25$ (above), $n=60$, $\rho=0.75$ (middle), $n=20$, $\rho=0.25$ (below).\label{fig5}}%
%TCIMACRO{\TeXButton{E}{\end{figure}}}%
%BeginExpansion
\end{figure}%
%EndExpansion
%

%TCIMACRO{\TeXButton{B}{\begin{figure}[htbp]  \centering}}%
%BeginExpansion
\begin{figure}[htbp]  \centering
%EndExpansion
$\
\begin{tabular}
[c]{c}%
%TCIMACRO{\FRAME{itbpF}{4.4624in}{2.7466in}{0in}{}{}{rcrho2m.eps}%
%{\special{ language "Scientific Word";  type "GRAPHIC";
%maintain-aspect-ratio TRUE;  display "USEDEF";  valid_file "F";
%width 4.4624in;  height 2.7466in;  depth 0in;  original-width 7.7798in;
%original-height 5.8271in;  cropleft "0";  croptop "1";  cropright "1";
%cropbottom "0.1813";  filename '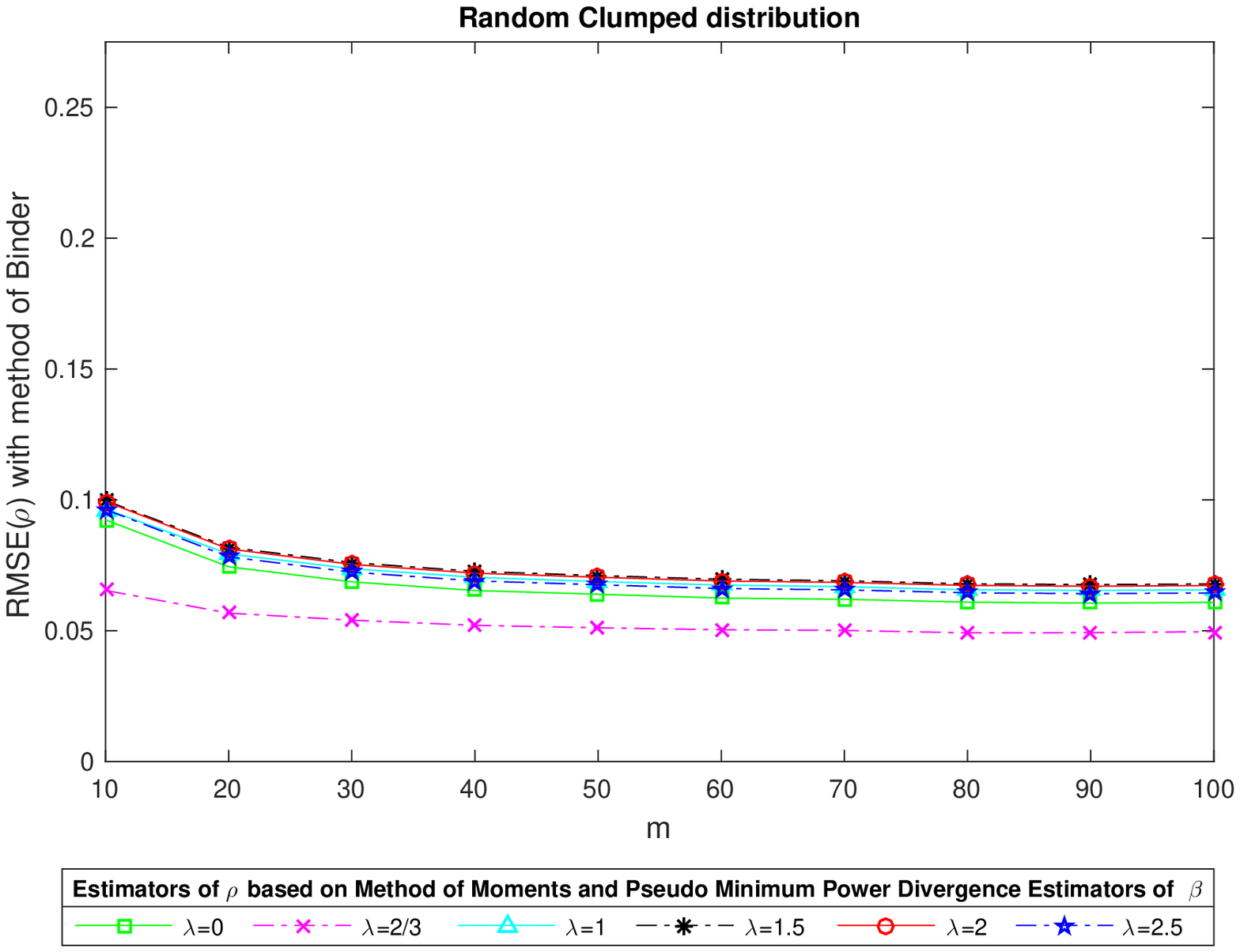';file-properties "XNPEU";}} }%
%BeginExpansion
{\includegraphics[
trim=0.000000in 1.056453in 0.000000in 0.000000in,
height=2.7466in,
width=4.4624in
]%
{RCrho2m.eps}%
}
%EndExpansion
\\%
%TCIMACRO{\FRAME{itbpF}{4.4624in}{2.7475in}{0in}{}{}{rcrho2m2.eps}%
%{\special{ language "Scientific Word";  type "GRAPHIC";
%maintain-aspect-ratio TRUE;  display "USEDEF";  valid_file "F";
%width 4.4624in;  height 2.7475in;  depth 0in;  original-width 7.7798in;
%original-height 5.8271in;  cropleft "0";  croptop "1";  cropright "1";
%cropbottom "0.1813";  filename '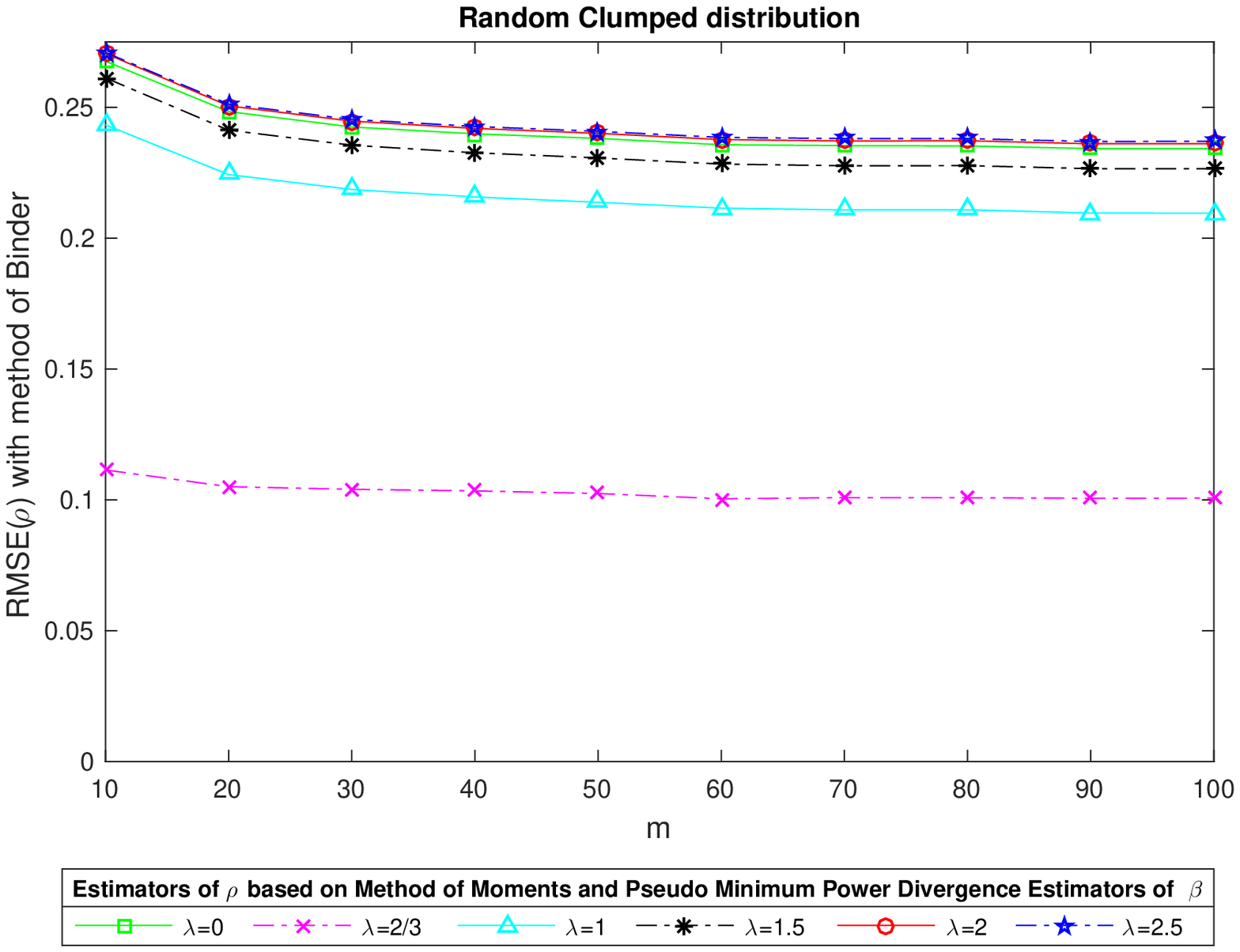';file-properties "XNPEU";}} }%
%BeginExpansion
{\includegraphics[
trim=0.000000in 1.056453in 0.000000in 0.000000in,
height=2.7475in,
width=4.4624in
]%
{RCrho2m2.eps}%
}
%EndExpansion
\\%
%TCIMACRO{\FRAME{itbpF}{4.4624in}{3.3486in}{0in}{}{}{rcrho2m3.eps}%
%{\special{ language "Scientific Word";  type "GRAPHIC";
%maintain-aspect-ratio TRUE;  display "USEDEF";  valid_file "F";
%width 4.4624in;  height 3.3486in;  depth 0in;  original-width 7.7798in;
%original-height 5.8271in;  cropleft "0";  croptop "1";  cropright "1";
%cropbottom "0";  filename '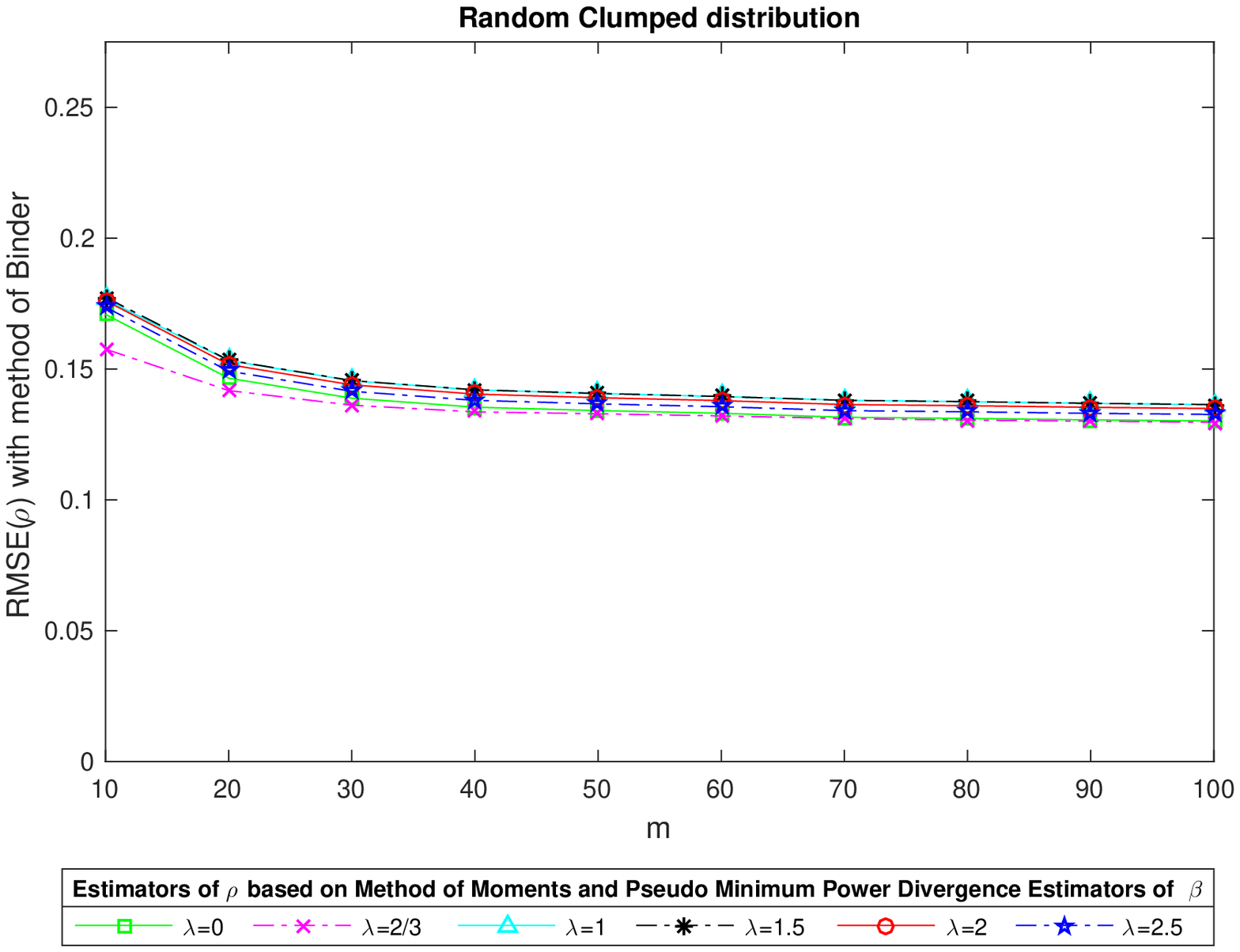';file-properties "XNPEU";}} }%
%BeginExpansion
{\includegraphics[
height=3.3486in,
width=4.4624in
]%
{RCrho2m3.eps}%
}
%EndExpansion
\end{tabular}
\ \ \ \ \ \ $%
\caption{RMSEs of estimators of $\rho ^{2}$ (Binder's method) when the number of individuals within clusters, $m$, increases, for the random clumped distribution. Cases: $n=60$, $\rho=0.25$ (above), $n=60$, $\rho=0.75$ (middle), $n=20$, $\rho=0.25$ (below).\label{fig6}}%
%TCIMACRO{\TeXButton{E}{\end{figure}}}%
%BeginExpansion
\end{figure}%
%EndExpansion

\section{Concluding remarks\label{sec6}}

Even though the multinomial logistic regression is an extensively applied
model, in our knowledge there is no study which compares the method of moments
and the Binder's method for estimating the intracluster correlation
coefficient. The simulation study designed in this paper shows that the
Binder's method is by far the best choice.

As future research, we would like to extend the proposed method to be valid
for estimating the $\boldsymbol{\beta}$ and $\rho^{2}$ for different cluster sizes.

\pagebreak


\begin{thebibliography}{99}                                                                                               %


\bibitem {10}Agresti, A. (2002). \textit{Categorical Data Analysis} (Second
Edition). John Wiley \& Sons.

\bibitem {Alonso}Alonso-Revenga, J. M., Mart\'{\i}n, N. and Pardo, L. (2016).
New improved estimators for overdispersion in models with clustered
multinomial data and unequal cluster sizes, \emph{Statistics and Computing}
(in Press).

\bibitem {ame}Amemiya, T. (1981). Qualitative response models: a survey.
\emph{Journal of Economic Literature,}\textit{ }\textbf{19, }1483--1536.

\bibitem {An}An, A.B. (2002): Performing Logistic Regression on Survey Data
with the New SURVEYLOGISTIC Procedure. \emph{Proceedings of the 27th Annual
SAS Users Group International Conference}, CD-Rom Version, Paper 258-27.

\bibitem {2}Anderson, J. A. (1972). Separate sample logistic discrimination.
\emph{Biometrika}\textit{, }\textbf{59, }19--35.

\bibitem {3}Anderson, J. A. (1982). Logistic discrimination. In \emph{Handbook
of Statistics}\textit{, }(P. R. Krishnaiah and L. N. Kanal, Eds).,
North-Holland Publ. Comp., 169--191.

\bibitem {4}Anderson, J. A. (1984). Regression and ordered categorical
variables. \emph{Journal of the Royal Statistical Society-Series B}\textbf{,
46}\textit{, }1--30.

\bibitem {binder2}Binder, D. A. (1983). On the variance of asymptotically
normal estimators from complex surveys. \emph{International Statistical
Review}\textit{, \textbf{51}, 279--292.}

\bibitem {11}Engel, J. (1988). Polytomous logistic regression.
\emph{Statistica Neerlandica}\textit{, }\textbf{42, }233--252.

\bibitem {gu5}Gupta, A. K., Kasturiratna, D.,Nguyen, T. and Pardo, L. (2006a).
A new family of BAN estimators for polytomous logistic regression models based
on phi-divergence measures. \emph{Statistical Methods \& Applications},
\textbf{15}, 159--176.

\bibitem {Gu2}Gupta, A. K., Nguyen, T. and Pardo, L. (2006b). Inference
procedures for polytomous logistic regression models based on phi-divergence
measures. \emph{Mathematical Methods of Statistics}, \textbf{15}, 269--288.

\bibitem {gu8}Gupta, A. K. and Pardo, L. (2007). Phi-divergences and
polytomous logistic regression models: an overview. \emph{Journal of
Statististical Planning and Inference}\textit{,} \textbf{137}, 3513--3524

\bibitem {Gu3}Gupta, A. K.; Nguyen, T.; Pardo, L. (2008). Residuals for
polytomous logistic regression models based on phi-divergences test
statistics. \emph{Statistics}\textit{, }\textbf{42}, 495--514.

\bibitem {Let}Lehtonen, R. and Pahkinen, E. (1995). \emph{Practical Methods
for Design and Analysis of Complex Surveys}, Chchester, John Wiley \& Sons, Inc.

\bibitem {5}Lesaffre, E. (1986). Logistic discrimination analysis with
application in electrocardiography. \emph{Doctoral Thesis}\textit{.
}University of Leuven.

\bibitem {6}Lesaffre, E.and Albert, A. (1989). Multiple-group logistic
regression diagnostic. \emph{Applied Statistics}\textit{, }\textbf{38, }425--440.

\bibitem {liu}Liu, I. and Agresti, A. (2005). The analysis of ordered
categorical data: an overview and a survey of recent developments. With
discussion and a rejoinder by the authors. \emph{Test}, \textbf{14}, 1, 1--73

\bibitem {7}Mantel, N. (1966). Models for complex contingency tables and
polychotomous dosage response curves. \emph{Biometrics}, \textbf{22, }83--95.

\bibitem {9}McCullagh, P. (1980). Regression models for ordinary data.
\emph{Journal of the Royal Statistical Society-Series B}, \textbf{42, }109--142.

\bibitem {mor}Morel, G. (1989). Logistic regression under Complex Survey
Designs. \emph{Survey Methodology}\textit{, \textbf{15}, 203--223.}

\bibitem {mo2}Morel, G. and Neerchal, N. K. (2012). \emph{Overdispersion
Models in SAS}. SAS Institute.

\bibitem {nel}Nelder, J. A. and Wedderburn, R. W. M. (1972). \emph{Generalized
Linear Models}\textit{. }London, Chapman \& Hall.

\bibitem {bb}Pardo, L. (2005). \textit{Statistical Inference Based on
Divergence Measures. Statistics: \emph{Texbooks and Monographs}. }Chapman \&
Hall/CRC, New York.

\bibitem {rao}Rao, J. N. K. and A. J. Scott (1984). On Chi-Squared Tests for
Multinomial Contigency Tables with Cell Proportions Estimated from Survey
Data. \emph{Annals of Statistics}\textit{, }\textbf{6}, 461--464\textit{.}

\bibitem {rao2}Rao, J. N. and Thomas, D. R. (1989). Chi-squared tests for
contingency tables. In C. J. Skinner, D. Holt and T. M. F. Smith (Eds.).
\emph{Analysis of complex survey} (pp 89-114). New York, Wiley.

\bibitem {r}Roberts, G., Rao, J.N.K. and Kumer, S. (1987). Logistic Regression
Analysis of Sample Survey Data, \emph{Biometrika}, \textbf{74}, 1--12.

\bibitem {sas}SAS Institute Inc. (2013). \texttt{SAS/STAT\textregistered13.1}
User's Guide. Cary, NC.

\bibitem {ski}Skinner, C. J., Holt, D. and Smith, T. M. F. (1989).
\emph{Analysis of Complex Surveys}, New York, John Wiley \& Sons.

\bibitem {t}Theil, H. (1969). A multinomial extension of the linear logit
model. \emph{International Economic Review}\textit{, }\textbf{10, }251--259.
\end{thebibliography}
\end{document}